\newcommand{\Pro}{\text{P}}
\newcommand{\rrr}[1]{\textcolor{black}{{{#1}}}}
\titlespacing{\title}{0pc}{0.1pc}{0.3pc}
\titlespacing{\section}{0pc}{0.1pc}{0.3pc}
\begin{document}

\title{Experimental Quantum Generative Adversarial Networks for Image Generation}
	
\author{He-Liang Huang}
\thanks{These two authors contributed equally}
\affiliation{Hefei National Laboratory for Physical Sciences at the Microscale and Department of Modern Physics, University of Science and Technology of China, Hefei 230026, China}
\affiliation{Shanghai Branch, CAS Center for Excellence in Quantum Information and Quantum Physics, University of Science and Technology of China, Shanghai 201315, China}
\affiliation{Shanghai Research Center for Quantum Sciences, Shanghai 201315, China}
\affiliation{Henan Key Laboratory of Quantum Information and Cryptography, Zhengzhou, Henan 450000, China}
\author{Yuxuan Du}
\thanks{These two authors contributed equally}
\affiliation{School of Computer Science, Faculty of Engineering, University of Sydney, Australia}
\author{Ming Gong}
\author{Youwei Zhao}
\author{Yulin Wu}
\affiliation{Hefei National Laboratory for Physical Sciences at the Microscale and Department of Modern Physics, University of Science and Technology of China, Hefei 230026, China}
\affiliation{Shanghai Branch, CAS Center for Excellence in Quantum Information and Quantum Physics, University of Science and Technology of China, Shanghai 201315, China}
\affiliation{Shanghai Research Center for Quantum Sciences, Shanghai 201315, China}
\author{Chaoyue Wang}
\affiliation{School of Computer Science, Faculty of Engineering, University of Sydney, Australia}
\author{Shaowei Li}
\author{Futian Liang}
\author{Jin Lin}
\author{Yu Xu}
\author{Rui Yang}
\affiliation{Hefei National Laboratory for Physical Sciences at the Microscale and Department of Modern Physics, University of Science and Technology of China, Hefei 230026, China}
\affiliation{Shanghai Branch, CAS Center for Excellence in Quantum Information and Quantum Physics, University of Science and Technology of China, Shanghai 201315, China}
\affiliation{Shanghai Research Center for Quantum Sciences, Shanghai 201315, China}
\author{Tongliang Liu}
\affiliation{School of Computer Science, Faculty of Engineering, University of Sydney, Australia}
\author{Min-Hsiu Hsieh}
\affiliation{Hon Hai Research Institute, Taipei 114, Taiwan}
\author{Hui Deng}
\author{Hao Rong}
\author{Cheng-Zhi Peng}
\author{Chao-Yang Lu}
\author{Yu-Ao Chen}
\affiliation{Hefei National Laboratory for Physical Sciences at the Microscale and Department of Modern Physics, University of Science and Technology of China, Hefei 230026, China}
\affiliation{Shanghai Branch, CAS Center for Excellence in Quantum Information and Quantum Physics, University of Science and Technology of China, Shanghai 201315, China}
\affiliation{Shanghai Research Center for Quantum Sciences, Shanghai 201315, China}
\author{Dacheng Tao}
\email{dacheng.tao@sydney.edu.au}
\affiliation{School of Computer Science, Faculty of Engineering, University of Sydney, Australia}
\author{Xiaobo Zhu}
\email{xbzhu16@ustc.edu.cn}
\author{Jian-Wei Pan}
\email{pan@ustc.edu.cn}
\affiliation{Hefei National Laboratory for Physical Sciences at the Microscale and Department of Modern Physics, University of Science and Technology of China, Hefei 230026, China}
\affiliation{Shanghai Branch, CAS Center for Excellence in Quantum Information and Quantum Physics, University of Science and Technology of China, Shanghai 201315, China}
\affiliation{Shanghai Research Center for Quantum Sciences, Shanghai 201315, China}
	
\date{\today}

\pacs{03.65.Ud, 03.67.Mn, 42.50.Dv, 42.50.Xa}

\begin{abstract}
Quantum machine learning is expected to be one of the first practical applications of near-term quantum devices. Pioneer theoretical works suggest that quantum generative adversarial networks (GANs) may exhibit a potential exponential advantage over classical GANs, thus attracting widespread attention. However, it remains elusive whether quantum GANs implemented on near-term quantum devices can actually solve real-world learning tasks. Here, we devise a flexible quantum GAN scheme to narrow this knowledge gap.~\rrr{In principle, this scheme has the ability to complete image generation with  high-dimensional features and could harness quantum superposition to train multiple examples in parallel.} For the first time, we experimentally achieve the learning and generating of real-world hand-written digit images on a superconducting quantum processor. Moreover, we utilize a gray-scale bar dataset to exhibit competitive performance between quantum GANs and the classical GANs based on multilayer perceptron and convolutional neural network architectures, respectively, benchmarked by the Fréchet Distance score. Our work provides guidance for developing advanced quantum generative models on near-term quantum devices and opens up an avenue for exploring quantum advantages in various GAN-related learning tasks.
\end{abstract}

\maketitle

	\begin{figure*}[tbp]
\centering\includegraphics[width=.85\textwidth]{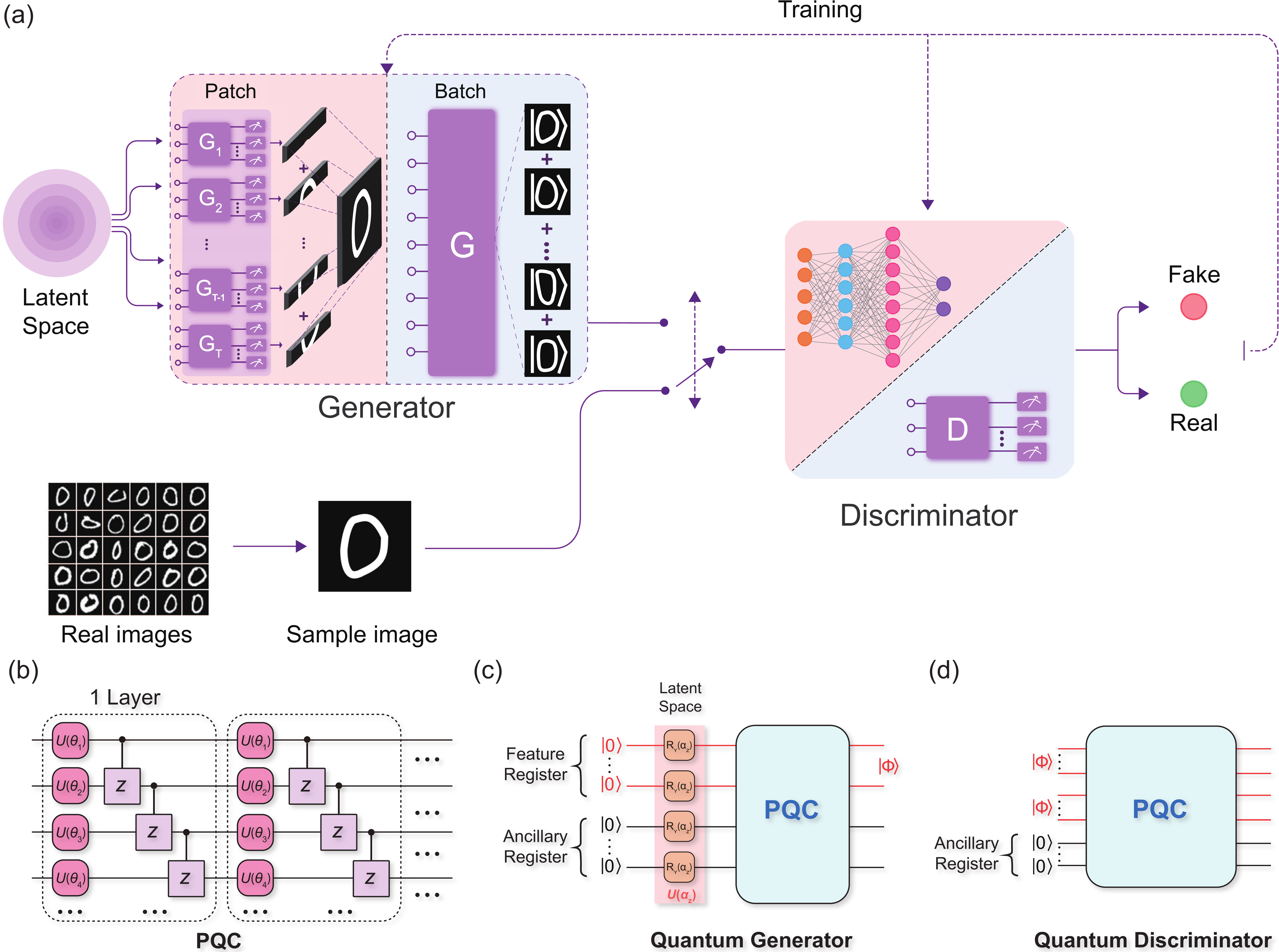}

\caption{\small{\textbf{The resource-efficient  quantum GAN scheme.} (a) The proposed quantum GANs scheme contains a quantum generator $G$ and a discriminator $D$, which can either be classical or quantum. The mechanism of quantum patch GAN  is as follows. First, the latent state $\ket{\bm{z}}$ sampled from the latent space is input into quantum generator $G$  formed by $T$ sub-generators (highlighted in pink region), where each  $G_t$ is built by a PQC $U_{G_t}(\bm{\theta}_t)$. Next, the generated image is acquired  by measuring the generated states $\{U_{G_t}(\bm{\theta}_t)\ket{\bm{z}}\}_{t=1}^T$  along the computation basis. Subsequently, the patched generated image and the real image are input into the classical discriminator $D$ (highlighted in pink region) in sequence.  Finally, a classical optimizer uses the classified results as the output of $D$ to update trainable parameters for $G$ and $D$. This completes one iteration. The mechanism of quantum batch GAN is almost identical to the quantum patch GAN, except for three modifications: 1) we set $T=1$ and introduce the quantum index register into $G$ (highlighted in blue  region); 2) the generated state $U_{G}(\bm{\theta})\ket{\bm{z}}$ directly operates with quantum discriminator $D$ implemented by PQC (highlighted in blue region), where the output is acquired by a simple measurement; and 3) the real image is encoded into the quantum state to operate with $D$. (b) The implementation of PQC used in the quantum generator and quantum discriminator. (c) The machinery of quantum generators employed in quantum patch and batch GANs. For quantum batch GAN, an index register with extra operations should be involved when the batch size is larger than one. (d) The quantum discriminator employed in the quantum  batch GAN. To attain nonlinear property, two generated states are fed into the quantum discriminator simultaneously.
}}
\label{fig:GTN}
\end{figure*}

State-of-the-art quantum computing systems are now  stepping into the era of  \textit{Noisy Intermediate-Scale Quantum}  (NISQ) technology \cite{preskill2018quantum,arute2019quantum,huang2020superconducting,zhong2020quantum}, which promises to address challenges in quantum computing and to deliver useful applications in specific scientific domains in the near term. The overlap  between quantum information and machine learning has emerged as   one of the most encouraging  applications for quantum computing, namely, quantum machine learning \cite{biamonte2017quantum}. Both theoretical and experimental evidences  suggested  that quantum computing may significantly  improve machine learning performance well beyond that    achievable with their classical counterparts  \cite{biamonte2017quantum, lloyd2014quantum, lloyd2016quantum, rebentrost2014quantum, dunjko2018machine, cai2015entanglement, huang2018demonstration, havlivcek2019supervised,liu2019hybrid,cong2019quantum}.

Generative adversarial networks (GANs) are at the forefront of the generative learning and  have been widely used for image processing, video processing, and molecule development~\cite{goodfellow2014generative}. Although GANs have achieved wide success, the huge computational overhead makes them approach the limits of Moore’s law. For example,  Big GAN with $158$ million parameters is trained to generate $512\times 512$ pixel images  using $14$ million  examples and $512$ TPU for two days~\cite{brock2018large}. Recently, theoretical works show that quantum generative models may exhibit an exponential advantage over classical counterparts~\cite{lloyd2018quantum,gao2018quantum,romero2019variational}, arousing widespread research interest in theories and experiments of quantum GANs~\cite{lloyd2018quantum,dallaire2018quantum, hu2019quantum,zoufal2019quantum, kiani2021quantum, chakrabarti2019quantum}. Previous experiments of quantum GANs on digital quantum computers, hurdled by algorithm development and accessible quantum resources, mainly focus on the single-qubit quantum state generation and quantum state loading~\cite{hu2019quantum, zoufal2019quantum}, e.g., finding a quantum channel to approximate a given single-qubit quantum state~\cite{hu2019quantum}. Such a task can be regarded as the approximation of a low-dimensional distribution with an explicit formulation. However, the explicit formula implies that these studies cannot be treated as general generative tasks, since the data space structure is exactly known. A crucial question that remains to be addressed in quantum GAN is whether current quantum devices have the capacity for real-world generative learning, which is directly related to it's practical application on near-term quantum devices.

Here we develop a resource-efficient quantum GAN scheme to answer the above question. \rrr{Our proposal principally supports to use limited quantum resources to accomplish large-scale generative learning tasks. Besides, the proposed scheme has the potential to train multiple examples in parallel given sufficient quantum resources.} We experimentally implement the scheme on a superconducting quantum processor to accomplish the generative task of real-world hand-written digit image \cite{Dua:2019}, a commonly used data in the machine learning community. Moreover, we show that quantum GAN has the potential advantage of reducing training parameters, and can achieve comparable performance with some typical classical GANs.

Following the routine of GANs \cite{goodfellow2014generative,lloyd2018quantum}, our proposal exploits a two-player minimax game between  a generator $G$ and a discriminator $D$. Given a latent vector $\bm{z}$ sampled from a certain distribution,   $G$  aims to output the generated data $G(\bm{z})\sim \Pro_g(G(\bm{z}))$ with $\Pro_g(G(\bm{z}))\approx \Pro_{data}(\bm{x})$ to fool $D$. Meanwhile,  $D$ tries to distinguish  the true example $\bm{x}  \sim \Pro_{data}(\bm{x})$ from    $G(\bm{z})$. Unlike classical GANs, the generator or discriminator in quantum GANs is constructed by quantum circuits. More precisely, denote that the deployed quantum device has $N$-qubits with $O(poly(N))$ circuit depth, and the feature dimension of the training example is $M$. We devise two flexible strategies, i.e., the patch strategy and batch strategy, that enable our quantum GANs to adequately exploit the supplied resources under the setting $N<\lceil\log M \rceil$ and $N> \lceil\log M \rceil$, respectively. The design of quantum patch GAN aims to use insufficient quantum resources to generative high-dimensional features, while the quantum batch GAN can be used for parallel training given sufficient resources. In this way, our proposal could flexibly adapt and maximally utilize accessible quantum resources.

\begin{figure*}[tbp]
\centering\includegraphics[width=0.85\textwidth]{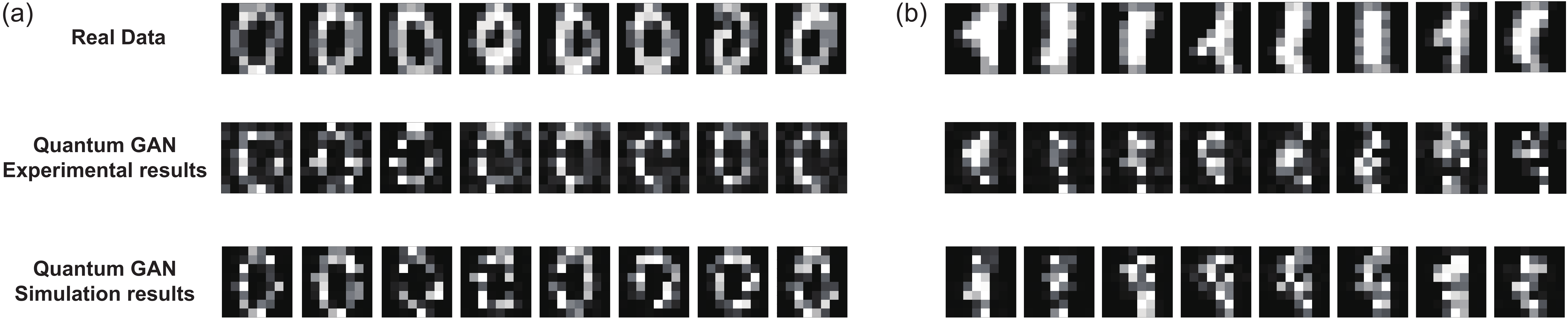}
		\caption{\small{\textbf{Hand-written digit image generation.}  (a) and (b) show the experimental results for the handwritten digit `0' and  `1', respectively.  From top to bottom, the first row illustrates real data examples, the second and third rows show the examples generated by quantum patch GAN trained using a superconducting processor and noiseless numerical simulator, respectively. The number of parameters for quantum generator is set to $100$, and the total number of iterations is about $350$.}}
			\label{fig:digit}
\end{figure*}

The quantum  patch GAN with $N < \lceil\log M \rceil$  consists of the quantum generator and the classical discriminator. A potential benefit of the quantum generator is that it may possess stronger expressive power to fit data distributions compared with classical generators. This is  supported  by  complexity theory with $P\subseteq BQP$ \cite{bernstein1997quantum},  and theoretical evidences showing that  certain  distributions generated by   quantum circuits can not be efficiently simulated by classical circuits unless the polynomial hierarchy collapses \cite{bremner2010classical,aaronson2011computational,bravyi2018quantum}.
 Fig.~\ref{fig:GTN}   illustrates the   implementation of quantum patch  GAN, where  the  patch strategy is applied to manipulate large $M$ with small $N$.  Specifically, the quantum generator $G$  is composed of a set of sub-generators $\{G_t\}_{t=1}^T$,  where each $G_t$ refers to a parameterized  quantum circuit (PQC) $U_{G_t}(\bm{\theta}_t)$. The aim of $G_t$  is  to output  a state $\ket{G_t(\bm{z})}$ with  $\ket{G_t(\bm{z})}=U_{G_t}(\bm{\theta}_t)\ket{\bm{z}}$ that represents a specific portion of the high-dimensional feature vectors. All sub-generators that scale with $T\sim O(\lceil\log M\rceil/N)$ can either be effectively built on distributed quantum devices to  train in parallel or on a single quantum device to train in sequence.   The generated example $\tilde{\bm{x}}$ is obtained by measuring $T$ states $\{\ket{G_t(\bm{z})}\}_{t=1}^T$   along the computation basis.  Given $\tilde{\bm{x}}$ and $\bm{x}$,  the loss function $\mathcal{L}$ employed to optimize the   trainable parameters $\bm{\theta}$ and  $\bm{\gamma}$ for $G$  and  $D$ yields
 \begin{equation}\label{eqn:loss}
 \begin{aligned}
 &\min_{\bm{\theta}}\max_{\bm{\gamma}} \mathcal{L}(D_{\bm{\gamma}}(G_{\bm{\theta}}(\bm{z})),D_{\bm{\gamma}}(\bm{x})),
\end{aligned}
\end{equation}
where $ \mathcal{L}(D_{\bm{\gamma}}(G_{\bm{\theta}}(\bm{z})),D_{\bm{\gamma}}(\bm{x}))= \mathbb{E}_{\bm{x}\sim \Pro_{data}(\bm{x})} [\log D_{\bm{\gamma}}( \bm{x})] +\mathbb{E}_{\bm{z}\sim \Pro(\bm{z})} [\log(1-D_{\bm{\gamma}}(G_{\bm{\theta}}(\bm{z}))]$, $\Pro_{data}(\bm{x})$ refers to the distribution of training dataset, and $\Pro(\bm{z})$ is the probability distribution of the latent variable $\bm{z}$. The concept of patching enables our quantum GAN to complete image generation task with a large $M$ using limited quantum resources. Although the usage of multiple sub-generators differs from the classical case, we can easily prove that quantum patch GAN can converge to Nash equilibrium in the optimal case (see Supplemental Material).

To evaluate performance of the quantum patch GAN, we implement it on a superconducting quantum processor to accomplish the real-world hand-written digit image generation for `0' and `1'. Specifically, the superconducting quantum processor has 12 transmon qubits on a 1D chain, and up to 6 adjacent qubits are chosen in the entire experiment. The average fidelities of single-qubit gates and controlled-$Z$ gate are approximately $0.9994$ and $0.985$, respectively. In addition, two training datasets are collected from the optical recognition of handwritten digit dataset \cite{Dua:2019}. Each training example is an $8\times 8$ pixel image with $M=64$. In the experimental settings for quantum patch GAN, we set $T=4$, $N=5$, and the total number of trainable parameters is 100. As shown in Fig.~\ref{fig:digit}, the experimental quantum GAN output similar quality images to the simulated quantum GAN, suggesting that our proposal is insensitive to noise at our current noise levels and for this system size.

\begin{figure*}[tbp]

\includegraphics[width=0.85\textwidth]{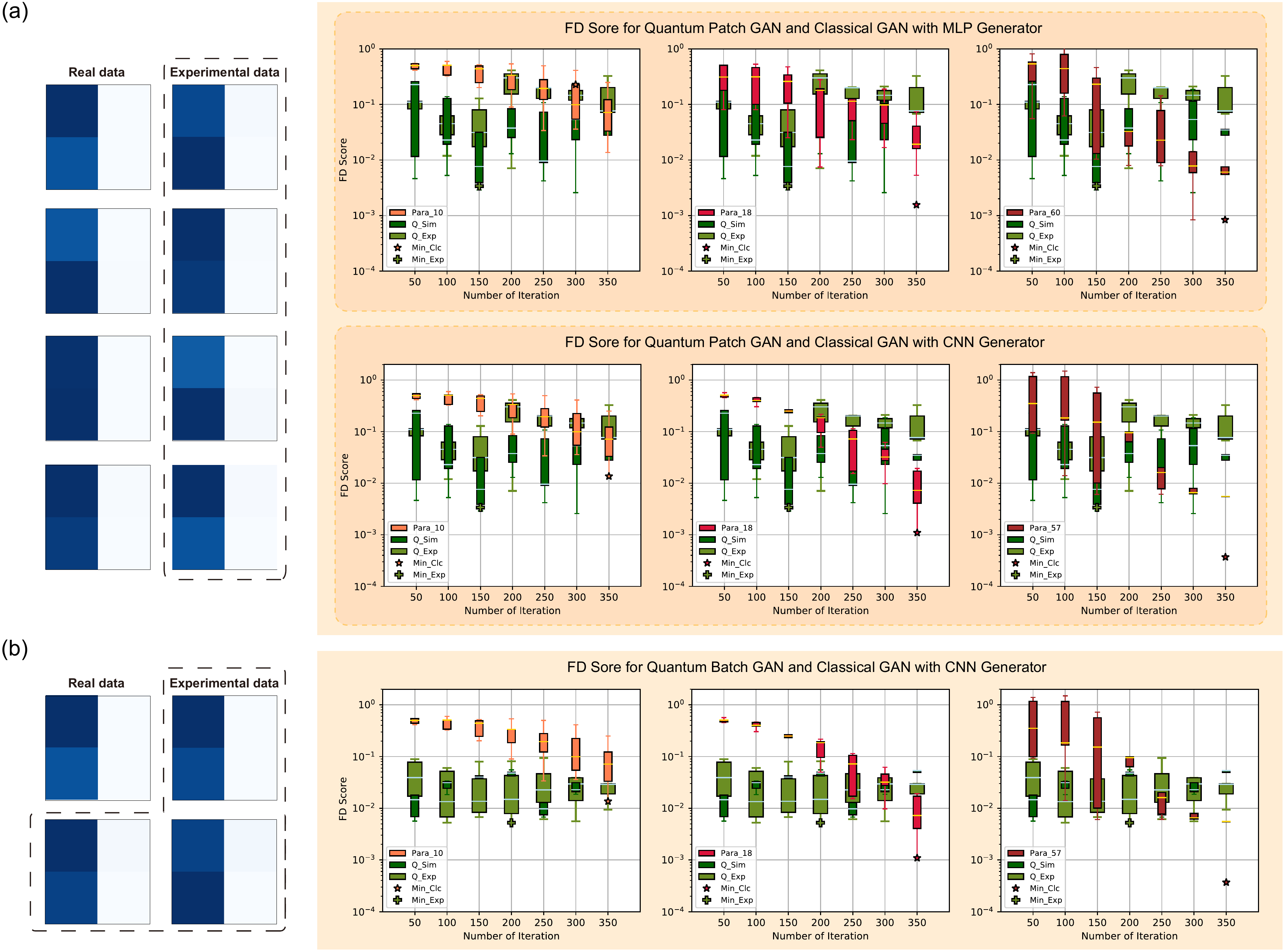}			\caption{\small{\textbf{\textbf{Gray-scale bar  image generation.}}
(a) Experiment results of quantum patch  GAN for $2 \times 2$ image dataset. The left panel illustrates real examples and generated examples. The right panel, highlighted in the yellow region, shows box plots that illustrate the FD scores achieved by different generative models. A lower FD score equates to better performance of the generative model. Specifically, we set the number of trainable parameters for $G$ as $N_p=9$,   the number of iterations as $350$,  sample $1000$ generated examples to evaluate the FD score after every $50$ iterations, and repeat each setting five times to collect statistical results. The label `para\_$\ell$' refers to the FD score of the classical GAN employing the generator with $\ell$ trainable parameter. The labels `Q\_exp' and `Q\_sim' refers to FD scores  of quantum GANs built using a quantum processor and noiseless numerical simulator, respectively. The labels `Min\_Clc' and `Min\_Exp' represents the achieved best FD scores for classical and quantum GANs, respectively. The left FD score plot compares the performance between classical and quantum GAN where  $\ell$ approximates to $N_p$, i.e., $\ell=10$, and the results show that quantum GAN have a better performance than GAN-MLP and GAN-CNN with similar number of parameters. The middle and right FD score plots show the required value $\ell$, i.e., $\ell=18 (18)$ and $\ell = 60 (57)$ for GAN-MLP (GAN-CNN), which enables the classical GANs to achieve the comparable and even better performance over quantum GANs. The performance is evaluated by the average score (middle line of the shaded  box) and the minimal FD score.
(b) Experiment results of quantum  batch GAN for $2 \times 2$ image dataset.
The three plots indicate that the quantum batch GAN could achieve a similar performance to the quantum patch GAN.
}}
	\label{fig:bar_FID}
\end{figure*}

Recall that the aim of GANs, as a kind of generative models, is to explore the probability distribution of observed samples. To accurately evaluate the well-trained generative models, we  intend to use quantitative metrics to measure the distance between real and generated distributions. However, the hand-written digit dataset is not a good choice to achieve this goal, hampered by its limited size and the implicit  distribution. With this regard, we construct a synthetic dataset, as so-called the gray-scale bar image dataset. Note that all images in this dataset are composed of simple pattern and sampled from an explicit distribution. Fig.~\ref{fig:bar_FID} exhibits some examples of gray-scale bar images. Utilizing the specific distribution, we can easily acquire an unlimited number of data samples for both training and test. Next, we use the Fréchet Distance (FD) score \cite{dowson1982frechet, heusel2017gans} to directly measure the Fréchet distance (i.e., 2-Wasserstein distance) between real and generated distributions. Such quantitative metrics could help us to comprehensively evaluate different GANs.

In the experiment, we collect a training dataset with $N_e=1000$ examples for the $2 \times 2$ gray-scale bar image dataset. Experimental parameter settings for quantum patch GAN are  $T=1$, $N=3$, and the number of trainable parameters for the quantum generator is $N_p = 9$. To benchmark the performance of the quantum patch GAN, two typical classical GANs, i.e., the classical GAN model with multilayer perceptron neural network architecture  (GAN-MLP) and the classical GAN model with convolutional neural network architecture (GAN-CNN), are employed as references. Particularly, we vary the number of generator's parameters  in these two classical GANs, and compare their performance with the quantum patch GAN. The number of parameters of the  classical discriminator used in the  GAN-MLP, GAN-CNN and quantum patch GAN is set as $96$. Figure~\ref{fig:bar_FID} (a) shows our experimental results. The employed two classical GANs request more training parameters than the quantum patch GAN to achieve similar FD scores. This result implies that quantum GAN has the potential advantage of reducing training parameters. Note that in our experiments, grid-search is applied to find the optimal hyper-parameters (e.g., learning rate) for classical GANs, while we did not search for these hyper-parameters for quantum GAN. 

The quantum batch  GAN with $N> \lceil \log M \rceil$  consists of both a quantum generator and discriminator (see Fig.~\ref{fig:GTN}).  As with the  quantum patch GAN, the quantum generator and discriminator play a minimax game accompanied by  the loss function $\mathcal{L}$ in Eqn.~(\ref{eqn:loss}).  The major difference to the first proposal is the way in which quantum resources are optimally utilized under the setting $N > \lceil \log (M) \rceil$. Specifically, we separate $N$ qubits into the feature register $\mathcal{R}_F$ and the index register $\mathcal{R}_I$, i.e., $\mathcal{R}_F$ with $N_F$ qubits encodes the feature information, while  $\mathcal{R}_I$ with  $N_I$ qubits records a batch of  generated/real examples. The training examples with batch size $N_e$ are encoded as $\frac{1}{{\sqrt {{N_{\rm{e}}}} }}\sum_{i} \ket{i}_I\ket{\bm{x}_i}_F$ by using amplitude encoding method. The attached index register `$I$' enables us to simultaneously manipulate $N_e$ examples to effectively acquire the  gradient  information, which dominates the computational  cost to train GAN.   Recall that classical GAN uses the mini-batch gradient descent \cite{li2014efficient} to update trainable parameters, i.e., at the $k$-th iteration, the updating rule is
	$\bm{\gamma}_k = \bm{\gamma}_{k-1} - \eta_D\sum_{i\in B_k} \nabla_{\bm{\gamma}} \mathcal{L}(G_{\bm{\theta}}(\bm{x}_i), D_{\bm{\gamma}}(G(\bm{z}_i)))$, where $B_k\subset [N_E]$ collects the indexes of a mini-batch examples.  Empirical  studies have shown  that increasing  the batch size $|B_k|$ contributes to improve performance of classical GAN, albeit at the expense of computational cost \cite{salimans2018improving,brock2018large}. In contrast to classical GAN, we show that the optimization term $\sum_{i\in B_k} \nabla \mathcal{L}(G_{\bm{\theta}}(\bm{x}_i), D_{\bm{\gamma}}(G(\bm{z}_i))) $ can be efficiently calculated in quantum GAN since we can naturally train $N_e$ examples simultaneously by using the quantum superposition (see Supplemental Material for details). This result implies a potential advantage of quantum batch GAN for efficiently processing big data. Moreover, since quantum batch GAN employs the quantum discriminator for binary classification, theoretically, measuring one qubit is enough to distinguish between `real' and `fake' images. Thus, the number of measurements required for quantum batch GAN  is quite small.

We also use the quantum batch GAN to accomplish the gray-scale bar image generation task to validate its generative capability. The experimental parameter settings are $T=1$, $N=3$, $|B_k| =1$ (or $N_I=0$), and total number of trainable parameters for the quantum generator is $N_p = 9$. We employ the quantum discriminator model proposed in Ref.~\cite{schuld2019quantum} as our quantum discriminator (see Fig.~\ref{fig:GTN} (d)). The total number of trainable parameters for the quantum discriminator is $12$. Fig.~\ref{fig:bar_FID} (b) shows that quantum batch GAN can achieve similar FD scores to the quantum patch GAN, thereby empirically showing that quantum batch GAN can be used to tackle image generation problems. We remark that the slightly degraded performance of the quantum batch GAN compared with the quantum patch GAN is mainly caused by the limited number of training parameters used in its quantum discriminator, i.e., $12$ versus $96$ in these two settings.

In conclusion, our experimental results provide the following key insights. First, we narrow the gap between quantum and classical generative learning. To our best knowledge, this is the first experimental study to generate real-world digit images on  a real quantum machine. Second, our results provide a positive signal to utilize quantum GANs to attain potential merits such as reducing the number of training parameters and improving the computation efficiency in the NISQ setting. Last, the comparison between numerical and experimental results indicates that quantum GAN is \rrr{resilient to a certain level of noise sources contained in the deployed quantum device. Noise resilience is of great importance for the realization of variational quantum algorithms on NISQ chips~\cite{sharma2020noise,mcclean2016theory}.}~

\rrr{When applying our proposal to deal with large-scale problems, some efforts may be made to  sustain its trainability and avoid barren plateaus \cite{mcclean2018barren}. It remains unknown whether the optimization of a minimax loss in Eqn.~(\ref{eqn:loss}) encounters barren plateaus. Namely, how the varied loss functions and optimization methods affect the trainability of quantum GANs. A deep understanding of this topic enables us to devise more powerful and efficient quantum GANs. Celebrated by the versatility of our proposal, a probable approach to avoid barren plateaus is designing barren-plateaus-immune ansatz  \cite{cerezo2021cost,du2020quantum,pesah2020absence,zhang2020toward} instead of the hardware-efficient ansatz to implement the quantum generator or discriminator. In addition, the adaptivity of the proposed quantum patch GAN enlightens a novel way to conquer barren plateaus and noise. Through tailoring the large-size problems into multiple small-size problems, the trainability of quantum patch GAN may be warranted. In light of these discussions, an intrigued  research direction is experimentally exploring the trainability of quantum GANs on large-scale datasets.}

\rrr{We would like to point out that although quantum GANs can partially adapt to the imperfection of quantum systems, a fundamental principle to enhance the performance of quantum GANs is continuously promoting the quality of quantum processors, e.g., a larger number of qubits, a more diverse connectivity, lower system noise, and longer decoherence time. For this purpose, we will delve to implement quantum GANs on more advanced quantum computers to accomplish complex real-world generation tasks to seek their potential advantages.}

\begin{acknowledgments}
The authors thank the Laboratory of Microfabrication, University of Science and Technology of China, Institute of Physics CAS, and National Center for Nanoscience and Technology for supporting the sample fabrication. The authors also thank QuantumCTek Co., Ltd., for supporting the fabrication and the maintenance of room-temperature electronics. We thank Johannes Majer for helpful discussion. \textbf{Funding}: This research was supported by the National Key Research and Development Program of China (Grants No. 2017YFA0304300), NSFC (Grants No. 11574380, No. 11905217), the Chinese Academy of Science and its Strategic Priority Research Program (Grants No. XDB28000000), the Science and Technology Committee of Shanghai Municipality, Shanghai Municipal Science and Technology Major Project (Grant No.2019SHZDZX01), and Anhui Initiative in Quantum Information Technologies. H.-L. H. acknowledges support from the Youth Talent Lifting Project (Grant No. 2020-JCJQ-QT-030), National Natural Science Foundation of China (Grants No. 11905294), China Postdoctoral Science Foundation, and the Open Research Fund from State Key Laboratory of High Performance Computing of China (Grant No. 201901-01).
\end{acknowledgments}


\bibliographystyle{apsrev4-1}
\bibliography{myref2}

\begin{thebibliography}{41}%
\makeatletter
\providecommand \@ifxundefined [1]{%
 \@ifx{#1\undefined}
}%
\providecommand \@ifnum [1]{%
 \ifnum #1\expandafter \@firstoftwo
 \else \expandafter \@secondoftwo
 \fi
}%
\providecommand \@ifx [1]{%
 \ifx #1\expandafter \@firstoftwo
 \else \expandafter \@secondoftwo
 \fi
}%
\providecommand \natexlab [1]{#1}%
\providecommand \enquote  [1]{``#1''}%
\providecommand \bibnamefont  [1]{#1}%
\providecommand \bibfnamefont [1]{#1}%
\providecommand \citenamefont [1]{#1}%
\providecommand \href@noop [0]{\@secondoftwo}%
\providecommand \href [0]{\begingroup \@sanitize@url \@href}%
\providecommand \@href[1]{\@@startlink{#1}\@@href}%
\providecommand \@@href[1]{\endgroup#1\@@endlink}%
\providecommand \@sanitize@url [0]{\catcode `\\12\catcode `\$12\catcode
  `\&12\catcode `\#12\catcode `\^12\catcode `\_12\catcode `\%12\relax}%
\providecommand \@@startlink[1]{}%
\providecommand \@@endlink[0]{}%
\providecommand \url  [0]{\begingroup\@sanitize@url \@url }%
\providecommand \@url [1]{\endgroup\@href {#1}{\urlprefix }}%
\providecommand \urlprefix  [0]{URL }%
\providecommand \Eprint [0]{\href }%
\providecommand \doibase [0]{http://dx.doi.org/}%
\providecommand \selectlanguage [0]{\@gobble}%
\providecommand \bibinfo  [0]{\@secondoftwo}%
\providecommand \bibfield  [0]{\@secondoftwo}%
\providecommand \translation [1]{[#1]}%
\providecommand \BibitemOpen [0]{}%
\providecommand \bibitemStop [0]{}%
\providecommand \bibitemNoStop [0]{.\EOS\space}%
\providecommand \EOS [0]{\spacefactor3000\relax}%
\providecommand \BibitemShut  [1]{\csname bibitem#1\endcsname}%
\let\auto@bib@innerbib\@empty
\bibitem [{\citenamefont {Preskill}(2018)}]{preskill2018quantum}%
  \BibitemOpen
  \bibfield  {author} {\bibinfo {author} {\bibfnamefont {J.}~\bibnamefont
  {Preskill}},\ }\href@noop {} {\bibfield  {journal} {\bibinfo  {journal}
  {Quantum}\ }\textbf {\bibinfo {volume} {2}},\ \bibinfo {pages} {79} (\bibinfo
  {year} {2018})}\BibitemShut {NoStop}%
\bibitem [{\citenamefont {Arute}\ \emph {et~al.}(2019)\citenamefont {Arute},
  \citenamefont {Arya}, \citenamefont {Babbush}, \citenamefont {Bacon},
  \citenamefont {Bardin}, \citenamefont {Barends}, \citenamefont {Biswas},
  \citenamefont {Boixo}, \citenamefont {Brandao}, \citenamefont {Buell} \emph
  {et~al.}}]{arute2019quantum}%
  \BibitemOpen
  \bibfield  {author} {\bibinfo {author} {\bibfnamefont {F.}~\bibnamefont
  {Arute}}, \bibinfo {author} {\bibfnamefont {K.}~\bibnamefont {Arya}},
  \bibinfo {author} {\bibfnamefont {R.}~\bibnamefont {Babbush}}, \bibinfo
  {author} {\bibfnamefont {D.}~\bibnamefont {Bacon}}, \bibinfo {author}
  {\bibfnamefont {J.~C.}\ \bibnamefont {Bardin}}, \bibinfo {author}
  {\bibfnamefont {R.}~\bibnamefont {Barends}}, \bibinfo {author} {\bibfnamefont
  {R.}~\bibnamefont {Biswas}}, \bibinfo {author} {\bibfnamefont
  {S.}~\bibnamefont {Boixo}}, \bibinfo {author} {\bibfnamefont {F.~G.}\
  \bibnamefont {Brandao}}, \bibinfo {author} {\bibfnamefont {D.~A.}\
  \bibnamefont {Buell}},  \emph {et~al.},\ }\href@noop {} {\bibfield  {journal}
  {\bibinfo  {journal} {Nature}\ }\textbf {\bibinfo {volume} {574}},\ \bibinfo
  {pages} {505} (\bibinfo {year} {2019})}\BibitemShut {NoStop}%
\bibitem [{\citenamefont {Huang}\ \emph {et~al.}(2020)\citenamefont {Huang},
  \citenamefont {Wu}, \citenamefont {Fan},\ and\ \citenamefont
  {Zhu}}]{huang2020superconducting}%
  \BibitemOpen
  \bibfield  {author} {\bibinfo {author} {\bibfnamefont {H.-L.}\ \bibnamefont
  {Huang}}, \bibinfo {author} {\bibfnamefont {D.}~\bibnamefont {Wu}}, \bibinfo
  {author} {\bibfnamefont {D.}~\bibnamefont {Fan}}, \ and\ \bibinfo {author}
  {\bibfnamefont {X.}~\bibnamefont {Zhu}},\ }\href@noop {} {\bibfield
  {journal} {\bibinfo  {journal} {Sci. China Inf. Sci.}\ }\textbf {\bibinfo
  {volume} {63}},\ \bibinfo {pages} {180501} (\bibinfo {year}
  {2020})}\BibitemShut {NoStop}%
\bibitem [{\citenamefont {Zhong}\ \emph {et~al.}(2020)\citenamefont {Zhong},
  \citenamefont {Wang}, \citenamefont {Deng}, \citenamefont {Chen},
  \citenamefont {Peng}, \citenamefont {Luo}, \citenamefont {Qin}, \citenamefont
  {Wu}, \citenamefont {Ding}, \citenamefont {Hu} \emph
  {et~al.}}]{zhong2020quantum}%
  \BibitemOpen
  \bibfield  {author} {\bibinfo {author} {\bibfnamefont {H.-S.}\ \bibnamefont
  {Zhong}}, \bibinfo {author} {\bibfnamefont {H.}~\bibnamefont {Wang}},
  \bibinfo {author} {\bibfnamefont {Y.-H.}\ \bibnamefont {Deng}}, \bibinfo
  {author} {\bibfnamefont {M.-C.}\ \bibnamefont {Chen}}, \bibinfo {author}
  {\bibfnamefont {L.-C.}\ \bibnamefont {Peng}}, \bibinfo {author}
  {\bibfnamefont {Y.-H.}\ \bibnamefont {Luo}}, \bibinfo {author} {\bibfnamefont
  {J.}~\bibnamefont {Qin}}, \bibinfo {author} {\bibfnamefont {D.}~\bibnamefont
  {Wu}}, \bibinfo {author} {\bibfnamefont {X.}~\bibnamefont {Ding}}, \bibinfo
  {author} {\bibfnamefont {Y.}~\bibnamefont {Hu}},  \emph {et~al.},\
  }\href@noop {} {\bibfield  {journal} {\bibinfo  {journal} {Science}\ }\textbf
  {\bibinfo {volume} {370}},\ \bibinfo {pages} {1460} (\bibinfo {year}
  {2020})}\BibitemShut {NoStop}%
\bibitem [{\citenamefont {Biamonte}\ \emph {et~al.}(2017)\citenamefont
  {Biamonte}, \citenamefont {Wittek}, \citenamefont {Pancotti}, \citenamefont
  {Rebentrost}, \citenamefont {Wiebe},\ and\ \citenamefont
  {Lloyd}}]{biamonte2017quantum}%
  \BibitemOpen
  \bibfield  {author} {\bibinfo {author} {\bibfnamefont {J.}~\bibnamefont
  {Biamonte}}, \bibinfo {author} {\bibfnamefont {P.}~\bibnamefont {Wittek}},
  \bibinfo {author} {\bibfnamefont {N.}~\bibnamefont {Pancotti}}, \bibinfo
  {author} {\bibfnamefont {P.}~\bibnamefont {Rebentrost}}, \bibinfo {author}
  {\bibfnamefont {N.}~\bibnamefont {Wiebe}}, \ and\ \bibinfo {author}
  {\bibfnamefont {S.}~\bibnamefont {Lloyd}},\ }\href@noop {} {\bibfield
  {journal} {\bibinfo  {journal} {Nature}\ }\textbf {\bibinfo {volume} {549}},\
  \bibinfo {pages} {195} (\bibinfo {year} {2017})}\BibitemShut {NoStop}%
\bibitem [{\citenamefont {Lloyd}\ \emph {et~al.}(2014)\citenamefont {Lloyd},
  \citenamefont {Mohseni},\ and\ \citenamefont
  {Rebentrost}}]{lloyd2014quantum}%
  \BibitemOpen
  \bibfield  {author} {\bibinfo {author} {\bibfnamefont {S.}~\bibnamefont
  {Lloyd}}, \bibinfo {author} {\bibfnamefont {M.}~\bibnamefont {Mohseni}}, \
  and\ \bibinfo {author} {\bibfnamefont {P.}~\bibnamefont {Rebentrost}},\
  }\href@noop {} {\bibfield  {journal} {\bibinfo  {journal} {Nat. Phys.}\
  }\textbf {\bibinfo {volume} {10}},\ \bibinfo {pages} {631} (\bibinfo {year}
  {2014})}\BibitemShut {NoStop}%
\bibitem [{\citenamefont {Lloyd}\ \emph {et~al.}(2016)\citenamefont {Lloyd},
  \citenamefont {Garnerone},\ and\ \citenamefont {Zanardi}}]{lloyd2016quantum}%
  \BibitemOpen
  \bibfield  {author} {\bibinfo {author} {\bibfnamefont {S.}~\bibnamefont
  {Lloyd}}, \bibinfo {author} {\bibfnamefont {S.}~\bibnamefont {Garnerone}}, \
  and\ \bibinfo {author} {\bibfnamefont {P.}~\bibnamefont {Zanardi}},\
  }\href@noop {} {\bibfield  {journal} {\bibinfo  {journal} {Nat. Commun.}\
  }\textbf {\bibinfo {volume} {7}},\ \bibinfo {pages} {1} (\bibinfo {year}
  {2016})}\BibitemShut {NoStop}%
\bibitem [{\citenamefont {Rebentrost}\ \emph {et~al.}(2014)\citenamefont
  {Rebentrost}, \citenamefont {Mohseni},\ and\ \citenamefont
  {Lloyd}}]{rebentrost2014quantum}%
  \BibitemOpen
  \bibfield  {author} {\bibinfo {author} {\bibfnamefont {P.}~\bibnamefont
  {Rebentrost}}, \bibinfo {author} {\bibfnamefont {M.}~\bibnamefont {Mohseni}},
  \ and\ \bibinfo {author} {\bibfnamefont {S.}~\bibnamefont {Lloyd}},\
  }\href@noop {} {\bibfield  {journal} {\bibinfo  {journal} {Phys. Rev. Lett.}\
  }\textbf {\bibinfo {volume} {113}},\ \bibinfo {pages} {130503} (\bibinfo
  {year} {2014})}\BibitemShut {NoStop}%
\bibitem [{\citenamefont {Dunjko}\ and\ \citenamefont
  {Briegel}(2018)}]{dunjko2018machine}%
  \BibitemOpen
  \bibfield  {author} {\bibinfo {author} {\bibfnamefont {V.}~\bibnamefont
  {Dunjko}}\ and\ \bibinfo {author} {\bibfnamefont {H.~J.}\ \bibnamefont
  {Briegel}},\ }\href@noop {} {\bibfield  {journal} {\bibinfo  {journal} {Rep.
  Prog. Phys.}\ }\textbf {\bibinfo {volume} {81}},\ \bibinfo {pages} {074001}
  (\bibinfo {year} {2018})}\BibitemShut {NoStop}%
\bibitem [{\citenamefont {Cai}\ \emph {et~al.}(2015)\citenamefont {Cai},
  \citenamefont {Wu}, \citenamefont {Su}, \citenamefont {Chen}, \citenamefont
  {Wang}, \citenamefont {Li}, \citenamefont {Liu}, \citenamefont {Lu},\ and\
  \citenamefont {Pan}}]{cai2015entanglement}%
  \BibitemOpen
  \bibfield  {author} {\bibinfo {author} {\bibfnamefont {X.-D.}\ \bibnamefont
  {Cai}}, \bibinfo {author} {\bibfnamefont {D.}~\bibnamefont {Wu}}, \bibinfo
  {author} {\bibfnamefont {Z.-E.}\ \bibnamefont {Su}}, \bibinfo {author}
  {\bibfnamefont {M.-C.}\ \bibnamefont {Chen}}, \bibinfo {author}
  {\bibfnamefont {X.-L.}\ \bibnamefont {Wang}}, \bibinfo {author}
  {\bibfnamefont {L.}~\bibnamefont {Li}}, \bibinfo {author} {\bibfnamefont
  {N.-L.}\ \bibnamefont {Liu}}, \bibinfo {author} {\bibfnamefont {C.-Y.}\
  \bibnamefont {Lu}}, \ and\ \bibinfo {author} {\bibfnamefont {J.-W.}\
  \bibnamefont {Pan}},\ }\href@noop {} {\bibfield  {journal} {\bibinfo
  {journal} {Phys. Rev. Lett.}\ }\textbf {\bibinfo {volume} {114}},\ \bibinfo
  {pages} {110504} (\bibinfo {year} {2015})}\BibitemShut {NoStop}%
\bibitem [{\citenamefont {Huang}\ \emph {et~al.}(2018)\citenamefont {Huang},
  \citenamefont {Wang}, \citenamefont {Rohde}, \citenamefont {Luo},
  \citenamefont {Zhao}, \citenamefont {Liu}, \citenamefont {Li}, \citenamefont
  {Liu}, \citenamefont {Lu},\ and\ \citenamefont
  {Pan}}]{huang2018demonstration}%
  \BibitemOpen
  \bibfield  {author} {\bibinfo {author} {\bibfnamefont {H.-L.}\ \bibnamefont
  {Huang}}, \bibinfo {author} {\bibfnamefont {X.-L.}\ \bibnamefont {Wang}},
  \bibinfo {author} {\bibfnamefont {P.~P.}\ \bibnamefont {Rohde}}, \bibinfo
  {author} {\bibfnamefont {Y.-H.}\ \bibnamefont {Luo}}, \bibinfo {author}
  {\bibfnamefont {Y.-W.}\ \bibnamefont {Zhao}}, \bibinfo {author}
  {\bibfnamefont {C.}~\bibnamefont {Liu}}, \bibinfo {author} {\bibfnamefont
  {L.}~\bibnamefont {Li}}, \bibinfo {author} {\bibfnamefont {N.-L.}\
  \bibnamefont {Liu}}, \bibinfo {author} {\bibfnamefont {C.-Y.}\ \bibnamefont
  {Lu}}, \ and\ \bibinfo {author} {\bibfnamefont {J.-W.}\ \bibnamefont {Pan}},\
  }\href@noop {} {\bibfield  {journal} {\bibinfo  {journal} {Optica}\ }\textbf
  {\bibinfo {volume} {5}},\ \bibinfo {pages} {193} (\bibinfo {year}
  {2018})}\BibitemShut {NoStop}%
\bibitem [{\citenamefont {Havl{\'\i}{\v{c}}ek}\ \emph
  {et~al.}(2019)\citenamefont {Havl{\'\i}{\v{c}}ek}, \citenamefont
  {C{\'o}rcoles}, \citenamefont {Temme}, \citenamefont {Harrow}, \citenamefont
  {Kandala}, \citenamefont {Chow},\ and\ \citenamefont
  {Gambetta}}]{havlivcek2019supervised}%
  \BibitemOpen
  \bibfield  {author} {\bibinfo {author} {\bibfnamefont {V.}~\bibnamefont
  {Havl{\'\i}{\v{c}}ek}}, \bibinfo {author} {\bibfnamefont {A.~D.}\
  \bibnamefont {C{\'o}rcoles}}, \bibinfo {author} {\bibfnamefont
  {K.}~\bibnamefont {Temme}}, \bibinfo {author} {\bibfnamefont {A.~W.}\
  \bibnamefont {Harrow}}, \bibinfo {author} {\bibfnamefont {A.}~\bibnamefont
  {Kandala}}, \bibinfo {author} {\bibfnamefont {J.~M.}\ \bibnamefont {Chow}}, \
  and\ \bibinfo {author} {\bibfnamefont {J.~M.}\ \bibnamefont {Gambetta}},\
  }\href@noop {} {\bibfield  {journal} {\bibinfo  {journal} {Nature}\ }\textbf
  {\bibinfo {volume} {567}},\ \bibinfo {pages} {209} (\bibinfo {year}
  {2019})}\BibitemShut {NoStop}%
\bibitem [{\citenamefont {Liu}\ \emph {et~al.}(2021)\citenamefont {Liu},
  \citenamefont {Lim}, \citenamefont {Wood}, \citenamefont {Huang},
  \citenamefont {Guo},\ and\ \citenamefont {Huang}}]{liu2019hybrid}%
  \BibitemOpen
  \bibfield  {author} {\bibinfo {author} {\bibfnamefont {J.}~\bibnamefont
  {Liu}}, \bibinfo {author} {\bibfnamefont {K.~H.}\ \bibnamefont {Lim}},
  \bibinfo {author} {\bibfnamefont {K.~L.}\ \bibnamefont {Wood}}, \bibinfo
  {author} {\bibfnamefont {W.}~\bibnamefont {Huang}}, \bibinfo {author}
  {\bibfnamefont {C.}~\bibnamefont {Guo}}, \ and\ \bibinfo {author}
  {\bibfnamefont {H.-L.}\ \bibnamefont {Huang}},\ }\href
  {https://doi.org/10.1007/s11433-021-1734-3} {\bibfield  {journal} {\bibinfo
  {journal} {Sci. China Phys. Mech. Astron.}\ }\textbf {\bibinfo {volume}
  {64}},\ \bibinfo {pages} {290311} (\bibinfo {year} {2021})}\BibitemShut
  {NoStop}%
\bibitem [{\citenamefont {Cong}\ \emph {et~al.}(2019)\citenamefont {Cong},
  \citenamefont {Choi},\ and\ \citenamefont {Lukin}}]{cong2019quantum}%
  \BibitemOpen
  \bibfield  {author} {\bibinfo {author} {\bibfnamefont {I.}~\bibnamefont
  {Cong}}, \bibinfo {author} {\bibfnamefont {S.}~\bibnamefont {Choi}}, \ and\
  \bibinfo {author} {\bibfnamefont {M.~D.}\ \bibnamefont {Lukin}},\ }\href@noop
  {} {\bibfield  {journal} {\bibinfo  {journal} {Nat. Phys.}\ }\textbf
  {\bibinfo {volume} {15}},\ \bibinfo {pages} {1273} (\bibinfo {year}
  {2019})}\BibitemShut {NoStop}%
\bibitem [{\citenamefont {Goodfellow}\ \emph {et~al.}(2014)\citenamefont
  {Goodfellow}, \citenamefont {Pouget-Abadie}, \citenamefont {Mirza},
  \citenamefont {Xu}, \citenamefont {Warde-Farley}, \citenamefont {Ozair},
  \citenamefont {Courville},\ and\ \citenamefont
  {Bengio}}]{goodfellow2014generative}%
  \BibitemOpen
  \bibfield  {author} {\bibinfo {author} {\bibfnamefont {I.}~\bibnamefont
  {Goodfellow}}, \bibinfo {author} {\bibfnamefont {J.}~\bibnamefont
  {Pouget-Abadie}}, \bibinfo {author} {\bibfnamefont {M.}~\bibnamefont
  {Mirza}}, \bibinfo {author} {\bibfnamefont {B.}~\bibnamefont {Xu}}, \bibinfo
  {author} {\bibfnamefont {D.}~\bibnamefont {Warde-Farley}}, \bibinfo {author}
  {\bibfnamefont {S.}~\bibnamefont {Ozair}}, \bibinfo {author} {\bibfnamefont
  {A.}~\bibnamefont {Courville}}, \ and\ \bibinfo {author} {\bibfnamefont
  {Y.}~\bibnamefont {Bengio}},\ }in\ \href@noop {} {\emph {\bibinfo {booktitle}
  {Advances in neural information processing systems}}}\ (\bibinfo {year}
  {2014})\ pp.\ \bibinfo {pages} {2672--2680}\BibitemShut {NoStop}%
\bibitem [{\citenamefont {Brock}\ \emph {et~al.}(2019)\citenamefont {Brock},
  \citenamefont {Donahue},\ and\ \citenamefont {Simonyan}}]{brock2018large}%
  \BibitemOpen
  \bibfield  {author} {\bibinfo {author} {\bibfnamefont {A.}~\bibnamefont
  {Brock}}, \bibinfo {author} {\bibfnamefont {J.}~\bibnamefont {Donahue}}, \
  and\ \bibinfo {author} {\bibfnamefont {K.}~\bibnamefont {Simonyan}},\ }in\
  \href@noop {} {\emph {\bibinfo {booktitle} {International Conference on
  Learning Representations}}}\ (\bibinfo {year} {2019})\BibitemShut {NoStop}%
\bibitem [{\citenamefont {Lloyd}\ and\ \citenamefont
  {Weedbrook}(2018)}]{lloyd2018quantum}%
  \BibitemOpen
  \bibfield  {author} {\bibinfo {author} {\bibfnamefont {S.}~\bibnamefont
  {Lloyd}}\ and\ \bibinfo {author} {\bibfnamefont {C.}~\bibnamefont
  {Weedbrook}},\ }\href@noop {} {\bibfield  {journal} {\bibinfo  {journal}
  {Phys. Rev. Lett.}\ }\textbf {\bibinfo {volume} {121}},\ \bibinfo {pages}
  {040502} (\bibinfo {year} {2018})}\BibitemShut {NoStop}%
\bibitem [{\citenamefont {Gao}\ \emph {et~al.}(2018)\citenamefont {Gao},
  \citenamefont {Zhang},\ and\ \citenamefont {Duan}}]{gao2018quantum}%
  \BibitemOpen
  \bibfield  {author} {\bibinfo {author} {\bibfnamefont {X.}~\bibnamefont
  {Gao}}, \bibinfo {author} {\bibfnamefont {Z.-Y.}\ \bibnamefont {Zhang}}, \
  and\ \bibinfo {author} {\bibfnamefont {L.-M.}\ \bibnamefont {Duan}},\
  }\href@noop {} {\bibfield  {journal} {\bibinfo  {journal} {Sci. Adv.}\
  }\textbf {\bibinfo {volume} {4}},\ \bibinfo {pages} {eaat9004} (\bibinfo
  {year} {2018})}\BibitemShut {NoStop}%
\bibitem [{\citenamefont {Romero}\ and\ \citenamefont
  {Aspuru-Guzik}(2021)}]{romero2019variational}%
  \BibitemOpen
  \bibfield  {author} {\bibinfo {author} {\bibfnamefont {J.}~\bibnamefont
  {Romero}}\ and\ \bibinfo {author} {\bibfnamefont {A.}~\bibnamefont
  {Aspuru-Guzik}},\ }\href@noop {} {\bibfield  {journal} {\bibinfo  {journal}
  {Adv. Quantum Technol.}\ }\textbf {\bibinfo {volume} {4}},\ \bibinfo {pages}
  {2000003} (\bibinfo {year} {2021})}\BibitemShut {NoStop}%
\bibitem [{\citenamefont {Dallaire-Demers}\ and\ \citenamefont
  {Killoran}(2018)}]{dallaire2018quantum}%
  \BibitemOpen
  \bibfield  {author} {\bibinfo {author} {\bibfnamefont {P.-L.}\ \bibnamefont
  {Dallaire-Demers}}\ and\ \bibinfo {author} {\bibfnamefont {N.}~\bibnamefont
  {Killoran}},\ }\href@noop {} {\bibfield  {journal} {\bibinfo  {journal}
  {Phys. Rev. A}\ }\textbf {\bibinfo {volume} {98}},\ \bibinfo {pages} {012324}
  (\bibinfo {year} {2018})}\BibitemShut {NoStop}%
\bibitem [{\citenamefont {Hu}\ \emph {et~al.}(2019)\citenamefont {Hu},
  \citenamefont {Wu}, \citenamefont {Cai}, \citenamefont {Ma}, \citenamefont
  {Mu}, \citenamefont {Xu}, \citenamefont {Wang}, \citenamefont {Song},
  \citenamefont {Deng}, \citenamefont {Zou} \emph {et~al.}}]{hu2019quantum}%
  \BibitemOpen
  \bibfield  {author} {\bibinfo {author} {\bibfnamefont {L.}~\bibnamefont
  {Hu}}, \bibinfo {author} {\bibfnamefont {S.-H.}\ \bibnamefont {Wu}}, \bibinfo
  {author} {\bibfnamefont {W.}~\bibnamefont {Cai}}, \bibinfo {author}
  {\bibfnamefont {Y.}~\bibnamefont {Ma}}, \bibinfo {author} {\bibfnamefont
  {X.}~\bibnamefont {Mu}}, \bibinfo {author} {\bibfnamefont {Y.}~\bibnamefont
  {Xu}}, \bibinfo {author} {\bibfnamefont {H.}~\bibnamefont {Wang}}, \bibinfo
  {author} {\bibfnamefont {Y.}~\bibnamefont {Song}}, \bibinfo {author}
  {\bibfnamefont {D.-L.}\ \bibnamefont {Deng}}, \bibinfo {author}
  {\bibfnamefont {C.-L.}\ \bibnamefont {Zou}},  \emph {et~al.},\ }\href@noop {}
  {\bibfield  {journal} {\bibinfo  {journal} {Sci. Adv.}\ }\textbf {\bibinfo
  {volume} {5}},\ \bibinfo {pages} {eaav2761} (\bibinfo {year}
  {2019})}\BibitemShut {NoStop}%
\bibitem [{\citenamefont {Zoufal}\ \emph {et~al.}(2019)\citenamefont {Zoufal},
  \citenamefont {Lucchi},\ and\ \citenamefont {Woerner}}]{zoufal2019quantum}%
  \BibitemOpen
  \bibfield  {author} {\bibinfo {author} {\bibfnamefont {C.}~\bibnamefont
  {Zoufal}}, \bibinfo {author} {\bibfnamefont {A.}~\bibnamefont {Lucchi}}, \
  and\ \bibinfo {author} {\bibfnamefont {S.}~\bibnamefont {Woerner}},\
  }\href@noop {} {\bibfield  {journal} {\bibinfo  {journal} {npj Quantum Inf.}\
  }\textbf {\bibinfo {volume} {5}},\ \bibinfo {pages} {1} (\bibinfo {year}
  {2019})}\BibitemShut {NoStop}%
\bibitem [{\citenamefont {Kiani}\ \emph {et~al.}(2021)\citenamefont {Kiani},
  \citenamefont {De~Palma}, \citenamefont {Marvian}, \citenamefont {Liu},\ and\
  \citenamefont {Lloyd}}]{kiani2021quantum}%
  \BibitemOpen
  \bibfield  {author} {\bibinfo {author} {\bibfnamefont {B.~T.}\ \bibnamefont
  {Kiani}}, \bibinfo {author} {\bibfnamefont {G.}~\bibnamefont {De~Palma}},
  \bibinfo {author} {\bibfnamefont {M.}~\bibnamefont {Marvian}}, \bibinfo
  {author} {\bibfnamefont {Z.-W.}\ \bibnamefont {Liu}}, \ and\ \bibinfo
  {author} {\bibfnamefont {S.}~\bibnamefont {Lloyd}},\ }\href@noop {}
  {\bibfield  {journal} {\bibinfo  {journal} {arXiv:2101.03037}\ } (\bibinfo
  {year} {2021})}\BibitemShut {NoStop}%
\bibitem [{\citenamefont {Chakrabarti}\ \emph {et~al.}(2019)\citenamefont
  {Chakrabarti}, \citenamefont {Huang}, \citenamefont {Li}, \citenamefont
  {Feizi},\ and\ \citenamefont {Wu}}]{chakrabarti2019quantum}%
  \BibitemOpen
  \bibfield  {author} {\bibinfo {author} {\bibfnamefont {S.}~\bibnamefont
  {Chakrabarti}}, \bibinfo {author} {\bibfnamefont {Y.}~\bibnamefont {Huang}},
  \bibinfo {author} {\bibfnamefont {T.}~\bibnamefont {Li}}, \bibinfo {author}
  {\bibfnamefont {S.}~\bibnamefont {Feizi}}, \ and\ \bibinfo {author}
  {\bibfnamefont {X.}~\bibnamefont {Wu}},\ }in\ \href@noop {} {\emph {\bibinfo
  {booktitle} {Proceedings of the 33rd International Conference on Neural
  Information Processing Systems}}}\ (\bibinfo {year} {2019})\ pp.\ \bibinfo
  {pages} {6781--6792}\BibitemShut {NoStop}%
\bibitem [{\citenamefont {Dua}\ and\ \citenamefont {Graff}(2017)}]{Dua:2019}%
  \BibitemOpen
  \bibfield  {author} {\bibinfo {author} {\bibfnamefont {D.}~\bibnamefont
  {Dua}}\ and\ \bibinfo {author} {\bibfnamefont {C.}~\bibnamefont {Graff}},\
  }\href@noop {} {\enquote {\bibinfo {title} {{UCI} machine learning
  repository},}\ } (\bibinfo {year} {2017})\BibitemShut {NoStop}%
\bibitem [{\citenamefont {Bernstein}\ and\ \citenamefont
  {Vazirani}(1997)}]{bernstein1997quantum}%
  \BibitemOpen
  \bibfield  {author} {\bibinfo {author} {\bibfnamefont {E.}~\bibnamefont
  {Bernstein}}\ and\ \bibinfo {author} {\bibfnamefont {U.}~\bibnamefont
  {Vazirani}},\ }\href@noop {} {\bibfield  {journal} {\bibinfo  {journal} {SIAM
  J. Comput.}\ }\textbf {\bibinfo {volume} {26}},\ \bibinfo {pages} {1411}
  (\bibinfo {year} {1997})}\BibitemShut {NoStop}%
\bibitem [{\citenamefont {Bremner}\ \emph {et~al.}(2010)\citenamefont
  {Bremner}, \citenamefont {Jozsa},\ and\ \citenamefont
  {Shepherd}}]{bremner2010classical}%
  \BibitemOpen
  \bibfield  {author} {\bibinfo {author} {\bibfnamefont {M.~J.}\ \bibnamefont
  {Bremner}}, \bibinfo {author} {\bibfnamefont {R.}~\bibnamefont {Jozsa}}, \
  and\ \bibinfo {author} {\bibfnamefont {D.~J.}\ \bibnamefont {Shepherd}},\
  }in\ \href@noop {} {\emph {\bibinfo {booktitle} {Proceedings of the Royal
  Society of London A: Mathematical, Physical and Engineering Sciences}}}\
  (\bibinfo {organization} {The Royal Society},\ \bibinfo {year} {2010})\ p.\
  \bibinfo {pages} {rspa20100301}\BibitemShut {NoStop}%
\bibitem [{\citenamefont {Aaronson}\ and\ \citenamefont
  {Arkhipov}(2011)}]{aaronson2011computational}%
  \BibitemOpen
  \bibfield  {author} {\bibinfo {author} {\bibfnamefont {S.}~\bibnamefont
  {Aaronson}}\ and\ \bibinfo {author} {\bibfnamefont {A.}~\bibnamefont
  {Arkhipov}},\ }in\ \href@noop {} {\emph {\bibinfo {booktitle} {Proceedings of
  the forty-third annual ACM symposium on Theory of computing}}}\ (\bibinfo
  {organization} {ACM},\ \bibinfo {year} {2011})\ pp.\ \bibinfo {pages}
  {333--342}\BibitemShut {NoStop}%
\bibitem [{\citenamefont {Bravyi}\ \emph {et~al.}(2018)\citenamefont {Bravyi},
  \citenamefont {Gosset},\ and\ \citenamefont {Koenig}}]{bravyi2018quantum}%
  \BibitemOpen
  \bibfield  {author} {\bibinfo {author} {\bibfnamefont {S.}~\bibnamefont
  {Bravyi}}, \bibinfo {author} {\bibfnamefont {D.}~\bibnamefont {Gosset}}, \
  and\ \bibinfo {author} {\bibfnamefont {R.}~\bibnamefont {Koenig}},\
  }\href@noop {} {\bibfield  {journal} {\bibinfo  {journal} {Science}\ }\textbf
  {\bibinfo {volume} {362}},\ \bibinfo {pages} {308} (\bibinfo {year}
  {2018})}\BibitemShut {NoStop}%
\bibitem [{\citenamefont {Dowson}\ and\ \citenamefont
  {Landau}(1982)}]{dowson1982frechet}%
  \BibitemOpen
  \bibfield  {author} {\bibinfo {author} {\bibfnamefont {D.}~\bibnamefont
  {Dowson}}\ and\ \bibinfo {author} {\bibfnamefont {B.}~\bibnamefont
  {Landau}},\ }\href@noop {} {\bibfield  {journal} {\bibinfo  {journal} {J.
  Multivar. Anal.}\ }\textbf {\bibinfo {volume} {12}},\ \bibinfo {pages} {450}
  (\bibinfo {year} {1982})}\BibitemShut {NoStop}%
\bibitem [{\citenamefont {Heusel}\ \emph {et~al.}(2017)\citenamefont {Heusel},
  \citenamefont {Ramsauer}, \citenamefont {Unterthiner}, \citenamefont
  {Nessler},\ and\ \citenamefont {Hochreiter}}]{heusel2017gans}%
  \BibitemOpen
  \bibfield  {author} {\bibinfo {author} {\bibfnamefont {M.}~\bibnamefont
  {Heusel}}, \bibinfo {author} {\bibfnamefont {H.}~\bibnamefont {Ramsauer}},
  \bibinfo {author} {\bibfnamefont {T.}~\bibnamefont {Unterthiner}}, \bibinfo
  {author} {\bibfnamefont {B.}~\bibnamefont {Nessler}}, \ and\ \bibinfo
  {author} {\bibfnamefont {S.}~\bibnamefont {Hochreiter}},\ }in\ \href@noop {}
  {\emph {\bibinfo {booktitle} {Advances in Neural Information Processing
  Systems}}}\ (\bibinfo {year} {2017})\ pp.\ \bibinfo {pages}
  {6626--6637}\BibitemShut {NoStop}%
\bibitem [{\citenamefont {Li}\ \emph {et~al.}(2014)\citenamefont {Li},
  \citenamefont {Zhang}, \citenamefont {Chen},\ and\ \citenamefont
  {Smola}}]{li2014efficient}%
  \BibitemOpen
  \bibfield  {author} {\bibinfo {author} {\bibfnamefont {M.}~\bibnamefont
  {Li}}, \bibinfo {author} {\bibfnamefont {T.}~\bibnamefont {Zhang}}, \bibinfo
  {author} {\bibfnamefont {Y.}~\bibnamefont {Chen}}, \ and\ \bibinfo {author}
  {\bibfnamefont {A.~J.}\ \bibnamefont {Smola}},\ }in\ \href@noop {} {\emph
  {\bibinfo {booktitle} {Proceedings of the 20th ACM SIGKDD international
  conference on Knowledge discovery and data mining}}}\ (\bibinfo
  {organization} {ACM},\ \bibinfo {year} {2014})\ pp.\ \bibinfo {pages}
  {661--670}\BibitemShut {NoStop}%
\bibitem [{\citenamefont {Salimans}\ \emph {et~al.}(2018)\citenamefont
  {Salimans}, \citenamefont {Zhang}, \citenamefont {Radford},\ and\
  \citenamefont {Metaxas}}]{salimans2018improving}%
  \BibitemOpen
  \bibfield  {author} {\bibinfo {author} {\bibfnamefont {T.}~\bibnamefont
  {Salimans}}, \bibinfo {author} {\bibfnamefont {H.}~\bibnamefont {Zhang}},
  \bibinfo {author} {\bibfnamefont {A.}~\bibnamefont {Radford}}, \ and\
  \bibinfo {author} {\bibfnamefont {D.}~\bibnamefont {Metaxas}},\ }in\
  \href@noop {} {\emph {\bibinfo {booktitle} {International Conference on
  Learning Representations}}}\ (\bibinfo {year} {2018})\BibitemShut {NoStop}%
\bibitem [{\citenamefont {Schuld}\ and\ \citenamefont
  {Killoran}(2019)}]{schuld2019quantum}%
  \BibitemOpen
  \bibfield  {author} {\bibinfo {author} {\bibfnamefont {M.}~\bibnamefont
  {Schuld}}\ and\ \bibinfo {author} {\bibfnamefont {N.}~\bibnamefont
  {Killoran}},\ }\href@noop {} {\bibfield  {journal} {\bibinfo  {journal}
  {Phys. Rev. Lett.}\ }\textbf {\bibinfo {volume} {122}},\ \bibinfo {pages}
  {040504} (\bibinfo {year} {2019})}\BibitemShut {NoStop}%
\bibitem [{\citenamefont {Sharma}\ \emph {et~al.}(2020)\citenamefont {Sharma},
  \citenamefont {Khatri}, \citenamefont {Cerezo},\ and\ \citenamefont
  {Coles}}]{sharma2020noise}%
  \BibitemOpen
  \bibfield  {author} {\bibinfo {author} {\bibfnamefont {K.}~\bibnamefont
  {Sharma}}, \bibinfo {author} {\bibfnamefont {S.}~\bibnamefont {Khatri}},
  \bibinfo {author} {\bibfnamefont {M.}~\bibnamefont {Cerezo}}, \ and\ \bibinfo
  {author} {\bibfnamefont {P.~J.}\ \bibnamefont {Coles}},\ }\href@noop {}
  {\bibfield  {journal} {\bibinfo  {journal} {New J. Phys.}\ }\textbf {\bibinfo
  {volume} {22}},\ \bibinfo {pages} {043006} (\bibinfo {year}
  {2020})}\BibitemShut {NoStop}%
\bibitem [{\citenamefont {McClean}\ \emph {et~al.}(2016)\citenamefont
  {McClean}, \citenamefont {Romero}, \citenamefont {Babbush},\ and\
  \citenamefont {Aspuru-Guzik}}]{mcclean2016theory}%
  \BibitemOpen
  \bibfield  {author} {\bibinfo {author} {\bibfnamefont {J.~R.}\ \bibnamefont
  {McClean}}, \bibinfo {author} {\bibfnamefont {J.}~\bibnamefont {Romero}},
  \bibinfo {author} {\bibfnamefont {R.}~\bibnamefont {Babbush}}, \ and\
  \bibinfo {author} {\bibfnamefont {A.}~\bibnamefont {Aspuru-Guzik}},\
  }\href@noop {} {\bibfield  {journal} {\bibinfo  {journal} {New J. Phys.}\
  }\textbf {\bibinfo {volume} {18}},\ \bibinfo {pages} {023023} (\bibinfo
  {year} {2016})}\BibitemShut {NoStop}%
\bibitem [{\citenamefont {McClean}\ \emph {et~al.}(2018)\citenamefont
  {McClean}, \citenamefont {Boixo}, \citenamefont {Smelyanskiy}, \citenamefont
  {Babbush},\ and\ \citenamefont {Neven}}]{mcclean2018barren}%
  \BibitemOpen
  \bibfield  {author} {\bibinfo {author} {\bibfnamefont {J.~R.}\ \bibnamefont
  {McClean}}, \bibinfo {author} {\bibfnamefont {S.}~\bibnamefont {Boixo}},
  \bibinfo {author} {\bibfnamefont {V.~N.}\ \bibnamefont {Smelyanskiy}},
  \bibinfo {author} {\bibfnamefont {R.}~\bibnamefont {Babbush}}, \ and\
  \bibinfo {author} {\bibfnamefont {H.}~\bibnamefont {Neven}},\ }\href@noop {}
  {\bibfield  {journal} {\bibinfo  {journal} {Nat. Commun.}\ }\textbf {\bibinfo
  {volume} {9}},\ \bibinfo {pages} {1} (\bibinfo {year} {2018})}\BibitemShut
  {NoStop}%
\bibitem [{\citenamefont {Cerezo}\ \emph {et~al.}(2021)\citenamefont {Cerezo},
  \citenamefont {Sone}, \citenamefont {Volkoff}, \citenamefont {Cincio},\ and\
  \citenamefont {Coles}}]{cerezo2021cost}%
  \BibitemOpen
  \bibfield  {author} {\bibinfo {author} {\bibfnamefont {M.}~\bibnamefont
  {Cerezo}}, \bibinfo {author} {\bibfnamefont {A.}~\bibnamefont {Sone}},
  \bibinfo {author} {\bibfnamefont {T.}~\bibnamefont {Volkoff}}, \bibinfo
  {author} {\bibfnamefont {L.}~\bibnamefont {Cincio}}, \ and\ \bibinfo {author}
  {\bibfnamefont {P.~J.}\ \bibnamefont {Coles}},\ }\href@noop {} {\bibfield
  {journal} {\bibinfo  {journal} {Nat. Commun.}\ }\textbf {\bibinfo {volume}
  {12}},\ \bibinfo {pages} {1} (\bibinfo {year} {2021})}\BibitemShut {NoStop}%
\bibitem [{\citenamefont {Du}\ \emph {et~al.}(2020)\citenamefont {Du},
  \citenamefont {Huang}, \citenamefont {You}, \citenamefont {Hsieh},\ and\
  \citenamefont {Tao}}]{du2020quantum}%
  \BibitemOpen
  \bibfield  {author} {\bibinfo {author} {\bibfnamefont {Y.}~\bibnamefont
  {Du}}, \bibinfo {author} {\bibfnamefont {T.}~\bibnamefont {Huang}}, \bibinfo
  {author} {\bibfnamefont {S.}~\bibnamefont {You}}, \bibinfo {author}
  {\bibfnamefont {M.-H.}\ \bibnamefont {Hsieh}}, \ and\ \bibinfo {author}
  {\bibfnamefont {D.}~\bibnamefont {Tao}},\ }\href@noop {} {\bibfield
  {journal} {\bibinfo  {journal} {arXiv:2010.10217}\ } (\bibinfo {year}
  {2020})}\BibitemShut {NoStop}%
\bibitem [{\citenamefont {Pesah}\ \emph {et~al.}(2020)\citenamefont {Pesah},
  \citenamefont {Cerezo}, \citenamefont {Wang}, \citenamefont {Volkoff},
  \citenamefont {Sornborger},\ and\ \citenamefont {Coles}}]{pesah2020absence}%
  \BibitemOpen
  \bibfield  {author} {\bibinfo {author} {\bibfnamefont {A.}~\bibnamefont
  {Pesah}}, \bibinfo {author} {\bibfnamefont {M.}~\bibnamefont {Cerezo}},
  \bibinfo {author} {\bibfnamefont {S.}~\bibnamefont {Wang}}, \bibinfo {author}
  {\bibfnamefont {T.}~\bibnamefont {Volkoff}}, \bibinfo {author} {\bibfnamefont
  {A.~T.}\ \bibnamefont {Sornborger}}, \ and\ \bibinfo {author} {\bibfnamefont
  {P.~J.}\ \bibnamefont {Coles}},\ }\href@noop {} {\bibfield  {journal}
  {\bibinfo  {journal} {arXiv:2011.02966}\ } (\bibinfo {year}
  {2020})}\BibitemShut {NoStop}%
\bibitem [{\citenamefont {Zhang}\ \emph {et~al.}(2020)\citenamefont {Zhang},
  \citenamefont {Hsieh}, \citenamefont {Liu},\ and\ \citenamefont
  {Tao}}]{zhang2020toward}%
  \BibitemOpen
  \bibfield  {author} {\bibinfo {author} {\bibfnamefont {K.}~\bibnamefont
  {Zhang}}, \bibinfo {author} {\bibfnamefont {M.-H.}\ \bibnamefont {Hsieh}},
  \bibinfo {author} {\bibfnamefont {L.}~\bibnamefont {Liu}}, \ and\ \bibinfo
  {author} {\bibfnamefont {D.}~\bibnamefont {Tao}},\ }\href@noop {} {\bibfield
  {journal} {\bibinfo  {journal} {arXiv:2011.06258}\ } (\bibinfo {year}
  {2020})}\BibitemShut {NoStop}%
\end{thebibliography}%


\begin{thebibliography}{10}
\expandafter\ifx\csname url\endcsname\relax
  \def\url#1{\texttt{#1}}\fi
\expandafter\ifx\csname urlprefix\endcsname\relax\def\urlprefix{URL }\fi
\providecommand{\bibinfo}[2]{#2}
\providecommand{\eprint}[2][]{\url{#2}}

\bibitem{nielsen2010quantum}
\bibinfo{author}{Nielsen, M.~A.} \& \bibinfo{author}{Chuang, I.~L.}
\newblock \emph{\bibinfo{title}{Quantum computation and quantum information}}
  (\bibinfo{publisher}{Cambridge University Press}, \bibinfo{year}{2010}).

\bibitem{goodfellow2016deep}
\bibinfo{author}{Goodfellow, I.}, \bibinfo{author}{Bengio, Y.} \&
  \bibinfo{author}{Courville, A.}
\newblock \emph{\bibinfo{title}{Deep learning}} (\bibinfo{publisher}{MIT
  press}, \bibinfo{year}{2016}).

\bibitem{krizhevsky2012imagenet}
\bibinfo{author}{Krizhevsky, A.}, \bibinfo{author}{Sutskever, I.} \&
  \bibinfo{author}{Hinton, G.~E.}
\newblock \bibinfo{title}{Imagenet classification with deep convolutional
  neural networks}.
\newblock In \emph{\bibinfo{booktitle}{Advances in neural information
  processing systems}}, \bibinfo{pages}{1097--1105} (\bibinfo{year}{2012}).

\bibitem{he2016deep}
\bibinfo{author}{He, K.}, \bibinfo{author}{Zhang, X.}, \bibinfo{author}{Ren,
  S.} \& \bibinfo{author}{Sun, J.}
\newblock \bibinfo{title}{Deep residual learning for image recognition}.
\newblock In \emph{\bibinfo{booktitle}{Proceedings of the IEEE conference on
  computer vision and pattern recognition}}, \bibinfo{pages}{770--778}
  (\bibinfo{year}{2016}).

\bibitem{vaswani2017attention}
\bibinfo{author}{Vaswani, A.} \emph{et~al.}
\newblock \bibinfo{title}{Attention is all you need}.
\newblock In \emph{\bibinfo{booktitle}{Advances in neural information
  processing systems}}, \bibinfo{pages}{5998--6008} (\bibinfo{year}{2017}).

\bibitem{bishop2006pattern}
\bibinfo{author}{Bishop, C.~M.}
\newblock \emph{\bibinfo{title}{Pattern recognition and machine learning}}
  (\bibinfo{publisher}{springer}, \bibinfo{year}{2006}).

\bibitem{ruder2016overview}
\bibinfo{author}{Ruder, S.}
\newblock \bibinfo{title}{An overview of gradient descent optimization
  algorithms}.
\newblock \emph{\bibinfo{journal}{arXiv preprint arXiv:1609.04747}}
  (\bibinfo{year}{2016}).

\bibitem{qian1999momentum}
\bibinfo{author}{Qian, N.}
\newblock \bibinfo{title}{On the momentum term in gradient descent learning
  algorithms}.
\newblock \emph{\bibinfo{journal}{Neural networks}}
  \textbf{\bibinfo{volume}{12}}, \bibinfo{pages}{145--151}
  (\bibinfo{year}{1999}).

\bibitem{kingma2013auto}
\bibinfo{author}{Kingma, D.~P.} \& \bibinfo{author}{Welling, M.}
\newblock \bibinfo{title}{Auto-encoding variational bayes}.
\newblock \emph{\bibinfo{journal}{arXiv preprint arXiv:1312.6114}}
  (\bibinfo{year}{2013}).

\bibitem{duchi2011adaptive}
\bibinfo{author}{Duchi, J.}, \bibinfo{author}{Hazan, E.} \&
  \bibinfo{author}{Singer, Y.}
\newblock \bibinfo{title}{Adaptive subgradient methods for online learning and
  stochastic optimization}.
\newblock \emph{\bibinfo{journal}{Journal of machine learning research}}
  \textbf{\bibinfo{volume}{12}}, \bibinfo{pages}{2121--2159}
  (\bibinfo{year}{2011}).

\bibitem{goodfellow2014generative}
\bibinfo{author}{Goodfellow, I.} \emph{et~al.}
\newblock \bibinfo{title}{Generative adversarial nets}.
\newblock In \emph{\bibinfo{booktitle}{Advances in neural information
  processing systems}}, \bibinfo{pages}{2672--2680} (\bibinfo{year}{2014}).

\bibitem{arjovsky2017wasserstein}
\bibinfo{author}{Arjovsky, M.}, \bibinfo{author}{Chintala, S.} \&
  \bibinfo{author}{Bottou, L.}
\newblock \bibinfo{title}{Wasserstein generative adversarial networks}.
\newblock In \emph{\bibinfo{booktitle}{International Conference on Machine
  Learning}}, \bibinfo{pages}{214--223} (\bibinfo{year}{2017}).

\bibitem{mirza2014conditional}
\bibinfo{author}{Mirza, M.} \& \bibinfo{author}{Osindero, S.}
\newblock \bibinfo{title}{Conditional generative adversarial nets}.
\newblock \emph{\bibinfo{journal}{arXiv preprint arXiv:1411.1784}}
  (\bibinfo{year}{2014}).

\bibitem{zhang2017stackgan}
\bibinfo{author}{Zhang, H.} \emph{et~al.}
\newblock \bibinfo{title}{Stackgan: Text to photo-realistic image synthesis
  with stacked generative adversarial networks}.
\newblock In \emph{\bibinfo{booktitle}{Proceedings of the IEEE International
  Conference on Computer Vision}}, \bibinfo{pages}{5907--5915}
  (\bibinfo{year}{2017}).

\bibitem{makhzani2015adversarial}
\bibinfo{author}{Makhzani, A.}, \bibinfo{author}{Shlens, J.},
  \bibinfo{author}{Jaitly, N.}, \bibinfo{author}{Goodfellow, I.} \&
  \bibinfo{author}{Frey, B.}
\newblock \bibinfo{title}{Adversarial autoencoders}.
\newblock \emph{\bibinfo{journal}{arXiv preprint arXiv:1511.05644}}
  (\bibinfo{year}{2015}).

\bibitem{deng2019unsupervised}
\bibinfo{author}{Deng, C.} \emph{et~al.}
\newblock \bibinfo{title}{Unsupervised semantic-preserving adversarial hashing
  for image search}.
\newblock \emph{\bibinfo{journal}{IEEE Transactions on Image Processing}}
  \textbf{\bibinfo{volume}{28}}, \bibinfo{pages}{4032--4044}
  (\bibinfo{year}{2019}).

\bibitem{mao2017least}
\bibinfo{author}{Mao, X.} \emph{et~al.}
\newblock \bibinfo{title}{Least squares generative adversarial networks}.
\newblock In \emph{\bibinfo{booktitle}{Proceedings of the IEEE International
  Conference on Computer Vision}}, \bibinfo{pages}{2794--2802}
  (\bibinfo{year}{2017}).

\bibitem{wang2018perceptual}
\bibinfo{author}{Wang, C.}, \bibinfo{author}{Xu, C.}, \bibinfo{author}{Wang,
  C.} \& \bibinfo{author}{Tao, D.}
\newblock \bibinfo{title}{Perceptual adversarial networks for image-to-image
  transformation}.
\newblock \emph{\bibinfo{journal}{IEEE Transactions on Image Processing}}
  \textbf{\bibinfo{volume}{27}}, \bibinfo{pages}{4066--4079}
  (\bibinfo{year}{2018}).

\bibitem{sainath2015convolutional}
\bibinfo{author}{Sainath, T.~N.}, \bibinfo{author}{Vinyals, O.},
  \bibinfo{author}{Senior, A.} \& \bibinfo{author}{Sak, H.}
\newblock \bibinfo{title}{Convolutional, long short-term memory, fully
  connected deep neural networks}.
\newblock In \emph{\bibinfo{booktitle}{2015 IEEE International Conference on
  Acoustics, Speech and Signal Processing (ICASSP)}},
  \bibinfo{pages}{4580--4584} (\bibinfo{organization}{IEEE},
  \bibinfo{year}{2015}).

\bibitem{boyd2004convex}
\bibinfo{author}{Boyd, S.} \& \bibinfo{author}{Vandenberghe, L.}
\newblock \emph{\bibinfo{title}{Convex optimization}}
  (\bibinfo{publisher}{Cambridge university press}, \bibinfo{year}{2004}).

\bibitem{zhang2018self}
\bibinfo{author}{Zhang, H.}, \bibinfo{author}{Goodfellow, I.},
  \bibinfo{author}{Metaxas, D.} \& \bibinfo{author}{Odena, A.}
\newblock \bibinfo{title}{Self-attention generative adversarial networks}.
\newblock \emph{\bibinfo{journal}{arXiv preprint arXiv:1805.08318}}
  (\bibinfo{year}{2018}).

\bibitem{ioffe2015batch}
\bibinfo{author}{Ioffe, S.} \& \bibinfo{author}{Szegedy, C.}
\newblock \bibinfo{title}{Batch normalization: Accelerating deep network
  training by reducing internal covariate shift}.
\newblock \emph{\bibinfo{journal}{arXiv preprint arXiv:1502.03167}}
  (\bibinfo{year}{2015}).

\bibitem{miyato2018spectral}
\bibinfo{author}{Miyato, T.}, \bibinfo{author}{Kataoka, T.},
  \bibinfo{author}{Koyama, M.} \& \bibinfo{author}{Yoshida, Y.}
\newblock \bibinfo{title}{Spectral normalization for generative adversarial
  networks}.
\newblock \emph{\bibinfo{journal}{arXiv preprint arXiv:1802.05957}}
  (\bibinfo{year}{2018}).

\bibitem{pande2014numeta}
\bibinfo{author}{Pande, G.} \& \bibinfo{author}{Middleton, J.}
\newblock \emph{\bibinfo{title}{NUMETA 90 Numerical Methods in Engineering:
  Theory and Applications: Numerical techniques for engineering analysis and
  design}} (\bibinfo{publisher}{CRC press}, \bibinfo{year}{2014}).

\bibitem{benedetti2019parameterized}
\bibinfo{author}{Benedetti, M.}, \bibinfo{author}{Lloyd, E.},
  \bibinfo{author}{Sack, S.} \& \bibinfo{author}{Fiorentini, M.}
\newblock \bibinfo{title}{Parameterized quantum circuits as machine learning
  models}.
\newblock \emph{\bibinfo{journal}{Quantum Science and Technology}}
  (\bibinfo{year}{2019}).

\bibitem{benedetti2019generative}
\bibinfo{author}{Benedetti, M.} \emph{et~al.}
\newblock \bibinfo{title}{A generative modeling approach for benchmarking and
  training shallow quantum circuits}.
\newblock \emph{\bibinfo{journal}{npj Quantum Information}}
  \textbf{\bibinfo{volume}{5}}, \bibinfo{pages}{45} (\bibinfo{year}{2019}).

\bibitem{tacchino2019artificial}
\bibinfo{author}{Tacchino, F.}, \bibinfo{author}{Macchiavello, C.},
  \bibinfo{author}{Gerace, D.} \& \bibinfo{author}{Bajoni, D.}
\newblock \bibinfo{title}{An artificial neuron implemented on an actual quantum
  processor}.
\newblock \emph{\bibinfo{journal}{npj Quantum Information}}
  \textbf{\bibinfo{volume}{5}}, \bibinfo{pages}{26} (\bibinfo{year}{2019}).

\bibitem{du2018implementable}
\bibinfo{author}{Du, Y.}, \bibinfo{author}{Hsieh, M.-H.}, \bibinfo{author}{Liu,
  T.} \& \bibinfo{author}{Tao, D.}
\newblock \bibinfo{title}{Implementable quantum classifier for nonlinear data}.
\newblock \emph{\bibinfo{journal}{arXiv preprint arXiv:1809.06056}}
  (\bibinfo{year}{2018}).

\bibitem{hu2019quantum}
\bibinfo{author}{Hu, L.} \emph{et~al.}
\newblock \bibinfo{title}{Quantum generative adversarial learning in a
  superconducting quantum circuit}.
\newblock \emph{\bibinfo{journal}{Science advances}}
  \textbf{\bibinfo{volume}{5}}, \bibinfo{pages}{eaav2761}
  (\bibinfo{year}{2019}).

\bibitem{preskill2018quantum}
\bibinfo{author}{Preskill, J.}
\newblock \bibinfo{title}{Quantum computing in the nisq era and beyond}.
\newblock \emph{\bibinfo{journal}{Quantum}} \textbf{\bibinfo{volume}{2}},
  \bibinfo{pages}{79} (\bibinfo{year}{2018}).

\bibitem{du2018expressive}
\bibinfo{author}{Du, Y.}, \bibinfo{author}{Hsieh, M.-H.}, \bibinfo{author}{Liu,
  T.} \& \bibinfo{author}{Tao, D.}
\newblock \bibinfo{title}{The expressive power of parameterized quantum
  circuits}.
\newblock \emph{\bibinfo{journal}{arXiv preprint arXiv:1810.11922}}
  (\bibinfo{year}{2018}).

\bibitem{farhi2018classification}
\bibinfo{author}{Farhi, E.} \& \bibinfo{author}{Neven, H.}
\newblock \bibinfo{title}{Classification with quantum neural networks on near
  term processors}.
\newblock \emph{\bibinfo{journal}{arXiv preprint arXiv:1802.06002}}
  (\bibinfo{year}{2018}).
\newblock \urlprefix\url{https://arxiv.org/abs/1802.06002}.

\bibitem{grant2018hierarchical}
\bibinfo{author}{Grant, E.} \emph{et~al.}
\newblock \bibinfo{title}{Hierarchical quantum classifiers}.
\newblock \emph{\bibinfo{journal}{arXiv preprint arXiv:1804.03680}}
  (\bibinfo{year}{2018}).
\newblock \urlprefix\url{https://arxiv.org/abs/1804.03680}.

\bibitem{wan2017quantum}
\bibinfo{author}{Wan, K.~H.}, \bibinfo{author}{Dahlsten, O.},
  \bibinfo{author}{Kristj{\'a}nsson, H.}, \bibinfo{author}{Gardner, R.} \&
  \bibinfo{author}{Kim, M.}
\newblock \bibinfo{title}{Quantum generalisation of feedforward neural
  networks}.
\newblock \emph{\bibinfo{journal}{npj Quantum Information}}
  \textbf{\bibinfo{volume}{3}}, \bibinfo{pages}{36} (\bibinfo{year}{2017}).
\newblock \urlprefix\url{https://www.nature.com/articles/s41534-017-0032-4}.

\bibitem{mitarai2018quantum}
\bibinfo{author}{Mitarai, K.}, \bibinfo{author}{Negoro, M.},
  \bibinfo{author}{Kitagawa, M.} \& \bibinfo{author}{Fujii, K.}
\newblock \bibinfo{title}{Quantum circuit learning}.
\newblock \emph{\bibinfo{journal}{Physical Review A}}
  \textbf{\bibinfo{volume}{98}}, \bibinfo{pages}{032309}
  (\bibinfo{year}{2018}).

\bibitem{paszke2017automatic}
\bibinfo{author}{Paszke, A.} \emph{et~al.}
\newblock \bibinfo{title}{Automatic differentiation in pytorch}
  (\bibinfo{year}{2017}).

\bibitem{schuld2018supervised}
\bibinfo{author}{Schuld, M.} \& \bibinfo{author}{Petruccione, F.}
\newblock \emph{\bibinfo{title}{Supervised Learning with Quantum Computers}},
  vol.~\bibinfo{volume}{17} (\bibinfo{publisher}{Springer},
  \bibinfo{year}{2018}).

\bibitem{knill1995approximation}
\bibinfo{author}{Knill, E.}
\newblock \bibinfo{title}{Approximation by quantum circuits}.
\newblock \emph{\bibinfo{journal}{arXiv preprint quant-ph/9508006}}
  (\bibinfo{year}{1995}).

\bibitem{mottonen2004quantum}
\bibinfo{author}{M{\"o}tt{\"o}nen, M.}, \bibinfo{author}{Vartiainen, J.~J.},
  \bibinfo{author}{Bergholm, V.} \& \bibinfo{author}{Salomaa, M.~M.}
\newblock \bibinfo{title}{Quantum circuits for general multiqubit gates}.
\newblock \emph{\bibinfo{journal}{Physical review letters}}
  \textbf{\bibinfo{volume}{93}}, \bibinfo{pages}{130502}
  (\bibinfo{year}{2004}).

\bibitem{vartiainen2004efficient}
\bibinfo{author}{Vartiainen, J.~J.}, \bibinfo{author}{M{\"o}tt{\"o}nen, M.} \&
  \bibinfo{author}{Salomaa, M.~M.}
\newblock \bibinfo{title}{Efficient decomposition of quantum gates}.
\newblock \emph{\bibinfo{journal}{Physical review letters}}
  \textbf{\bibinfo{volume}{92}}, \bibinfo{pages}{177902}
  (\bibinfo{year}{2004}).

\bibitem{plesch2011quantum}
\bibinfo{author}{Plesch, M.} \& \bibinfo{author}{Brukner, {\v{C}}.}
\newblock \bibinfo{title}{Quantum-state preparation with universal gate
  decompositions}.
\newblock \emph{\bibinfo{journal}{Physical Review A}}
  \textbf{\bibinfo{volume}{83}}, \bibinfo{pages}{032302}
  (\bibinfo{year}{2011}).

\bibitem{liu2018differentiable}
\bibinfo{author}{Liu, J.-G.} \& \bibinfo{author}{Wang, L.}
\newblock \bibinfo{title}{Differentiable learning of quantum circuit born
  machine}.
\newblock \emph{\bibinfo{journal}{arXiv preprint arXiv:1804.04168}}
  (\bibinfo{year}{2018}).
\newblock \urlprefix\url{https://arxiv.org/abs/1804.04168}.

\bibitem{huggins2018towards}
\bibinfo{author}{Huggins, W.~J.}, \bibinfo{author}{Patil, P.},
  \bibinfo{author}{Mitchell, B.}, \bibinfo{author}{Whaley, K.~B.} \&
  \bibinfo{author}{Stoudenmire, M.}
\newblock \bibinfo{title}{Towards quantum machine learning with tensor
  networks}.
\newblock \emph{\bibinfo{journal}{Quantum Science and technology}}
  (\bibinfo{year}{2018}).

\bibitem{zoufal2019quantum}
\bibinfo{author}{Zoufal, C.}, \bibinfo{author}{Lucchi, A.} \&
  \bibinfo{author}{Woerner, S.}
\newblock \bibinfo{title}{Quantum generative adversarial networks for learning
  and loading random distributions}.
\newblock \emph{\bibinfo{journal}{arXiv preprint arXiv:1904.00043}}
  (\bibinfo{year}{2019}).

\bibitem{romero2019variational}
\bibinfo{author}{Romero, J.} \& \bibinfo{author}{Aspuru-Guzik, A.}
\newblock \bibinfo{title}{Variational quantum generators: Generative
  adversarial quantum machine learning for continuous distributions}.
\newblock \emph{\bibinfo{journal}{arXiv preprint arXiv:1901.00848}}
  (\bibinfo{year}{2019}).

\bibitem{2019arXiv191001155S}
\bibinfo{author}{{Sweke}, R.} \emph{et~al.}
\newblock \bibinfo{title}{{Stochastic gradient descent for hybrid
  quantum-classical optimization}}.
\newblock \emph{\bibinfo{journal}{arXiv e-prints}}
  \bibinfo{pages}{arXiv:1910.01155} (\bibinfo{year}{2019}).
\newblock \eprint{1910.01155}.

\bibitem{schuld2019quantum}
\bibinfo{author}{Schuld, M.} \& \bibinfo{author}{Killoran, N.}
\newblock \bibinfo{title}{Quantum machine learning in feature hilbert spaces}.
\newblock \emph{\bibinfo{journal}{Physical review letters}}
  \textbf{\bibinfo{volume}{122}}, \bibinfo{pages}{040504}
  (\bibinfo{year}{2019}).

\end{thebibliography}

\newpage 	
\renewcommand{\thefigure}{M\arabic{figure}}	
\setcounter{figure}{0}

\end{document}


	\title{Supplemental Material for \\ ``Experimental Quantum Generative Adversarial Networks for Image Generation''}
	
	

	\maketitle


\section{SM (A): Preliminaries }
Here we briefly introduce the essential backgrounds used in this paper  to facilitate both physics and computer science communities.  Please see \cite{nielsen2010quantum,goodfellow2016deep} for more elaborate descriptions. In particular, we define necessary notations  and  exemplify a typical deep neural network, i.e., fully-connected neural network, in the first two subsections. We then present a classical GAN and illustrate its working mechanism.
Afterwards, we provide the definition of box-plot, which is employed to analyze the performance of the generated data. Ultimately, we recap the parameter quantum circuits, as the building block of quantum GAN.
\subsection{Notations}
We unify some basic notations used throughout the whole paper.  We denote the set  $\{1,2,..., n\}$ as $[n]$.  Given a vector ${\mathbf{v}}\in \mathbb{R}^{n}$,  ${\mathbf{v}}_i$ or ${\mathbf{v}}(i)$  represents the $i$-th entry of $\mathbf{v}$ with $i\in [n]$ and $\|{\mathbf{v}}\|$ refers to the $\ell_2$ norm of $\mathbf{v}$ with $\|{\mathbf{v}}\| =\sqrt{ \sum_{i=1}^n {\mathbf{v}}_i^2}$.  The notation $\mathbf{e}_i$ always refers to the $i$-th unit basis vector.   We use Dirac notation that is broadly used in quantum computation to write the computational basis $\bm{e}_i$ and $\bm{e}_i^{\top}$ as $\ket{i}$ and $\bra{i}$. A pure quantum state $\ket{\psi}$ is represented by a unit vector, i.e., $\braket{\psi|\psi}=1$. A mixed state of a quantum system $\rho$ is denoted as $\rho = \sum_{i}p_i \ket{\phi_i}\bra{\phi_i}$ with $\sum_i p_i=1$ and $\Tr(\rho) = 1$.  The symbol `$\circ$' is used to represent the composition of functions, i.e., $f\circ g (x) = f(g(x))$.  The observable $\bm{x}$ sampled from the certain distribution $p(\bm{x})$ is denoted as $\bm{x}\sim \Pro(\bm{x})$. Given two sets $A$ and $B$,   $A$ minus $B$ is written as $A\setminus B$. We employ the floor function  that takes real number $x$ and  outputs the greatest integer $x':=\lfloor x \rfloor$ with $x'\leq x$. Likewise, we employ the ceiling  function  that takes real number $x$ and  outputs the least integer $x':=\lceil x \rceil$ with $x'\geq x$.

\subsection{Fully-connected neural network}\label{sec:FCNN}\label{subsec:FCNN and Its Optimization}
Fully-connected neural network (FCNN), as the biologically inspired computational model, is the workhorses of deep learning \cite{goodfellow2016deep}.   Various advanced deep learning models are  devised by combing FCNN with additional techniques, e.g.,  convolutional layer \cite{krizhevsky2012imagenet}, residue  connections \cite{he2016deep}, and  attention mechanisms \cite{vaswani2017attention}.  FCNN and its variations have achieved state-of-the-art   performance over other computation models in many machine learning tasks.

The basic architecture of FCNN is shown in the left panel of Fig.~\ref{fig:FCNN}, which includes an  input layer,  $L$ hidden layers with $L\geq 1$, and an output layer. The node in each layer is called `neuron'.  A typical feature of FCNN is that a neuron at $l$-th layer is only allowed to connect to a neuron at $(l+1)$-th layer.  Denote that the number of neurons and the output of $l$-th layer as $n_l$ and $\bm{x}^{(l)}$, respectively.  Mathematically, the output of $l$-th layer can be treated as a vector $\bm{x}^{(l)}\in\mathbb{R}^{n_{l}}$ and each neuron  represents an entry of $\bm{x}^{(l)}$.  Let the connected edge between the $l$-th layer and $(l+1)$-layer be $\bm{\Theta}^{(i)}$.  The connected edge refers to a weight matrix $\bm{\Theta}^{(l)}\in\mathbb{R}^{n_l\times n_{l+1}}$. The calculation rule for the $j$-th  neuron at $l+1$-th layer $\bm{x}^{(l+1)}(j)$ is demonstrated in the right panel of Fig. \ref{fig:FCNN}. In particular, we have $
	\bm{x}^{(l+1)}(j) := g_l(\bm{x}^{(l)})=f(\bm{\Theta}^{(l)}(j,:)\bm{x}^{(l)})$, where $f(\cdot)$ refers to the activation function. Example activation functions include the sigmoid function with 	$f(\bm{x}) =  (1+e^{\bm{x}})$ and the Rectified Linear Unit (ReLU) function  with $f(\bm{x}) = \max(\bm{x},0)$ \cite{goodfellow2016deep}.  Since the output of the $l$-th layer is used as an input for the $l+1$-th layer,  an $L$-layers FCNN model is given by
	\begin{equation}
		\bm{x}^{(out)} =g_L\circ ...\circ g_i \circ...  \circ g_1(\bm{x}^{(in)})~,
	\end{equation}
where $\bm{x}^{(in)}$ and $\bm{x}^{(out)}$ refer to the input and output vector, and $g_l$ with $l\in[L]$ is parameterized by $\{\bm{\Theta}^{(l)}\}_{l=1}^L$.  In the training process, the weight matrices  $\{\bm{\Theta}^{(l)}\}_{l=1}^L$ are optimized to minimize a predefined  loss function $\mathcal{L}_{\bm{\Theta}}(\bm{x}^{(out)}, \bm{y} )$ that measures the difference between the output $\bm{x}^{(out)}$ and the expected result $\bm{y}$.

 \begin{figure}[!ht]
	\centering
	\includegraphics[width=0.8\textwidth]{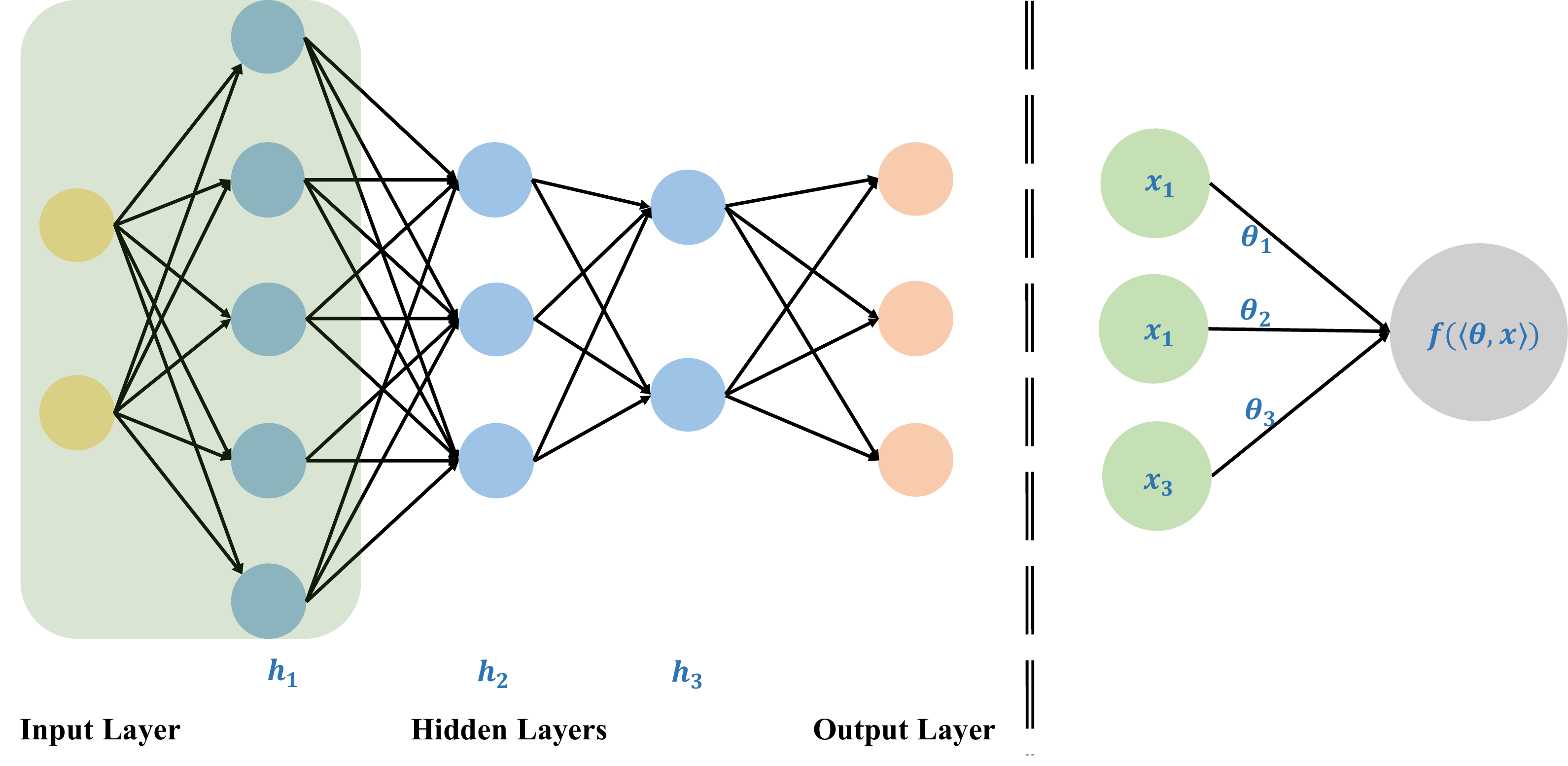}
	\caption{\small{\textbf{An example of FCNN}. The left panel illustrates the basic structure of FCNN that consists of an input layer, one hidden layer, and an output layer. In the green  region, the number of neurons for the input layer and the first hidden layer is $2$ and $5$, respectively.  The right panel shows the calculation rule of a single neuron. The neuron, highlighted by gray region, is calculated by  $f(\langle \bm{\theta},\bm{x}\rangle )$, where   $\bm{\theta}$ represents the weight, $\bm{x}$ refers to the outputs of the green neurons, and $f(\cdot)$ is the predefined activation function.     }}
	\label{fig:FCNN}
\end{figure}

In deep learning, the most effective  method to optimize trainable weight matrix $\bm{\Theta}$ with $\bm{\Theta}=[\bm{\Theta}^{(1)},...,\bm{\Theta}^{(L)}]$ is gradient descent \cite{bishop2006pattern}. From the perspective of how many training examples are  used to compute the gradient, we can mainly divide various gradient descent methods into three categories, i.e.,  stochastic gradient descent, batch gradient descent, and mini-batch gradient descent \cite{ruder2016overview}. For the sake of simplicity,   we explain  the mechanism of these three methods in the binary classification task.  Suppose that the given dataset $\mathcal{D}$ consists of $M$ training examples with $\mathcal{D}=\{(\bm{x}_i, \bm{y}_i)\}_{i=1}^{M}$ and $\bm{y}_i \in \{0,1\}$. Let $\mathcal{L}$ be the loss function to be optimized and $\eta$ be the learning rate.  The batch gradient descent computes the gradient of the loss  function of  the whole dataset at each iteration, i.e., the optimization at $k$-th iteration step is
\begin{equation}
	  \bm{\Theta}_{k} = \bm{\Theta}_{k-1}-\eta\frac{1}{M} \sum_{i=1}^M  \nabla_{\bm{\Theta}} \mathcal{L}(\bm{x}_i, \bm{y}_i) ~.
\end{equation}
The stochastic gradient descent (SGD), in contrast to batch gradient descent,  performs a parameter update by using  single training example that is randomly sampled from the dataset $\mathcal{D}$. The mathematical representation is
\begin{equation}
\bm{\Theta}_{k} = \bm{\Theta}_{k-1}-\eta\nabla_{\bm{\Theta}}   \mathcal{L}(\bm{x}_i, \bm{y}_i) ~ \text{with}~ \bm{x}_i\in  \mathcal{D}~.	
\end{equation}
Mini-batch gradient descent employs $M'$ training examples that are randomly sampled from $\mathcal{D}$ with $M'\ll M$ to update parameters at each iteration. In particular, we have
\begin{equation}\label{eqn:mini_batch}
	\bm{\Theta}_{k} = \bm{\Theta}_{k-1}-\eta\frac{1}{M'} \sum_{i=1}^{M'}  \nabla_{\bm{\Theta}}  \mathcal{L}(\bm{x}_i, \bm{y}_i) ~ \text{with}~ \{\bm{x}_i\}_{i=1}^{M'} \subset \mathcal{D}~.
\end{equation}

Celebrated by its flexibility and performance guarantees, the  mini-batch gradient descent method is prevalently employed in deep learning compared with the rest two methods \cite{ruder2016overview}.

With the aim to achieve better convergence guarantee, advanced mini-batch gradient descent methods are highly desirable. Recall that   vanilla mini-batch gradient descent defined in Eqn.~(\ref{eqn:mini_batch}) usually encounters kinds of difficulties, e.g., how to choose a proper learning rate, and how to set learning rate schedules that adjust the learning rate during training. To remedy the weakness of vanilla mini-batch gradient descent, various improved mini-batch gradient descent optimization algorithms  have been proposed, i.e., momentum methods \cite{qian1999momentum}, Adam \cite{kingma2013auto}, Adagrad \cite{duchi2011adaptive}, to name a few. Since Adam can be employed to train quantum batch GAN, we briefly introduced its working mechanism. Specifically, Adam is a method that computes adaptive learning rates for each parameter. At $k$-th iteration, let $\bm{g}_k$ be $\bm{g}_k = \frac{1}{M'} \sum_{i=1}^{M'}  \nabla_{\bm{\Theta}}  \mathcal{L}(\bm{x}_i, \bm{y}_i)$. Define $\bm{m}_k$ and $\bm{v}_k$ as $\bm{m}_k=\beta_1\bm{m}_{k-1} + (1-\beta_1)\bm{g}_t$ and $\bm{v}_k=\beta_2\bm{v}_{k-1} + (1-\beta_2)\bm{g}_t^2$, where $\beta_1$ and $\beta_2$ are constants with default settings $\beta_1=0.9$ and $\beta_2=0.999$.  The the update rule of Adam is
\begin{equation}
	\bm{\Theta}_{k+1} = \bm{\Theta}_{k}-\frac{\eta}{\sqrt{\frac{\bm{v}_k}{1-\beta_2^k}}+\epsilon}\frac{m_k}{1-\beta_1^k} ~,
\end{equation}
where $\epsilon$ is the predefined tolerate rate with default setting $\epsilon=10^{-8}$.

\subsection{Generative adversarial network }
Generative model takes a training dataset $\mathcal{D}$ with limited examples that are sampled from distribution $\Pro_{data}$ and aims to estimate $\Pro_{data}$  \cite{goodfellow2016deep}. 
Generative adversarial network (GAN), proposed by Goodfellow in 2014 \cite{goodfellow2014generative}, is one of the most  powerful generative models.  Here we   briefly review  the theory of  GAN and  explain how to   use  FCNN   to implement GAN.

  The fundamental mechanism of GAN and its variations \cite{arjovsky2017wasserstein, mirza2014conditional,zhang2017stackgan,makhzani2015adversarial,deng2019unsupervised,mao2017least,wang2018perceptual} can be summarized as follows.  GAN sets up a  two-players  game, where the first player is called the generator $G$ and the second player is  called the discriminator $D$. The generator  $G$ creates data that pretends to come from $\Pro_{data}$ to fool the discriminator $D$, while $D$ tries to distinguish the fake generated data from the real training data.  Both $G$ and $D$ are typically implemented by deep neural networks, e.g., fully connected neural network and convolution neural network \cite{krizhevsky2012imagenet,sainath2015convolutional}.  From the mathematical perspective, $G$ and $D$ corresponds to two a differentiable  functions. The input and output of $G$ are a latent variables $\bm{z}$  and  an observed variable $\bm{x}'$, respectively, i.e.,  $G:G(\bm{z}, \bm{\theta})\rightarrow \bm{x}'$ with $\bm{\theta}$ being trainable parameters for $G$. The employed  latent variable $\bm{z}$ ensures  GAN to be a structured probabilistic model \cite{goodfellow2016deep}. In addition, the input and output of $D$  are the given example (can either be the generated data $\bm{x'}$ or the real data $\bm{x}$ ) and the binary classification result (real or fake), respectively. Mathmatically,  we have $D : D(\bm{x}, \bm{\gamma})\rightarrow (0,1) $ with $\bm{\gamma}$ being trainable parameters for $D$. If the distribution $\Pro(G(\bm{z}))$ learned by $G$ equals to  the real data distribution, i.e., $\Pro(G(\bm{z})) = \Pro(\bm{x})$, then the probability that  discriminator  predicts all inputs as real inputs is  $50\%$. This unique solution that $D$ can never discriminate between the generated data and the real data is called Nash equilibrium \cite{goodfellow2014generative}.

The training process of GANs involves both finding the parameters of a discriminator $\bm{\gamma}$ to maximize the classification accuracy, and finding the parameters of a generator $\bm{\theta}$ to maximally confuse the discriminator. The two-player game set up for GAN is evaluated by a loss function $\mathcal{L}(D_{\bm{\gamma}}(G_{\bm{\theta}}(\bm{z})),D_{\bm{\gamma}}(\bm{x}))$ that depends on both the generator and the discriminator. For example, by labeling the true data as $1$ and the fake data as $0$, the training procedure of original GAN can be treated as:
\begin{equation}\label{eqn:loss}
\min_{\bm{\theta}}\max_{\bm{\gamma}} \mathcal{L}(D_{\bm{\gamma}}(G_{\bm{\theta}}(\bm{z})),D_{\bm{\gamma}}(\bm{x})):=  \mathbb{E}_{\bm{x}\sim \Pro_{data}(\bm{x})} [\log D_{\bm{\gamma}}( \bm{x})] +\mathbb{E}_{\bm{z}\sim \Pro(\bm{z})} [\log(1-D_{\bm{\gamma}}(G_{\bm{\theta}}(\bm{z}))]~,
\end{equation}
where  $\Pro_{data}(\bm{x})$ refers to the distribution of training dataset, and $\Pro(\bm{z})$ is the probability distribution of the latent variable $\bm{z}$.
 During training, the parameters of two models are updated iteratively using  gradient descent methods \cite{boyd2004convex}, e.g., the vanilla mini-batch gradient descent and Adam introduced in Subsection \ref{sec:FCNN}.   When  parameters $\bm{\theta}$ of $G$  are updated, parameters $\bm{\gamma}$  of $D$ are keeping fixed. 	

To overcome the training hardness, e.g., the optimized parameters generally converge to the saddle points, various GANs are proposed to attain better generative performance. The improved performance is guaranteed by introducing   stronger neural network models for $G$ and $D$ \cite{zhang2018self},  powerful  loss functions \cite{arjovsky2017wasserstein} and advanced optimization methods, e.g., batch normalization and spectral normalization \cite{ioffe2015batch,miyato2018spectral}.



\subsection{Box plot}
 Box-plot, as a popular statistical tool, is made up of five components  to give a robust summary of the distribution of a dataset \cite{pande2014numeta}. As shown in Fig.~\ref{fig:box-plot}, the five components are  the median,  the upper hinge, the lower hinge, the upper extreme, and the lower extreme. Denote the first quantile as $Q_1$, the second quantile as $Q_2$, and the third quantile as $Q_3$ \footnote{$Q_1$ splits off the lowest 25\% of data from the highest 75\%.  $Q_3$ splits off the highest 25\% of data from the lowest 75\%.}. The upper (or lower) hinge represents the  $Q_3$ and $Q_1$, respectively. The median of the box-plot refers to the $Q_2$.   Let Inter quantile range ($\text{IQR}$) be $Q_3-Q_1$. The upper and lower  extreme are defined as $Q_3+1.5\text{IQR}$ and  $Q_1 - 1.5\text{IQR}$, respectively.  The data point, which is out of the region between the upper and lower  extreme, is treated as the outlier.

\begin{figure}[!ht]
\centering\includegraphics[width=0.35\textwidth ]{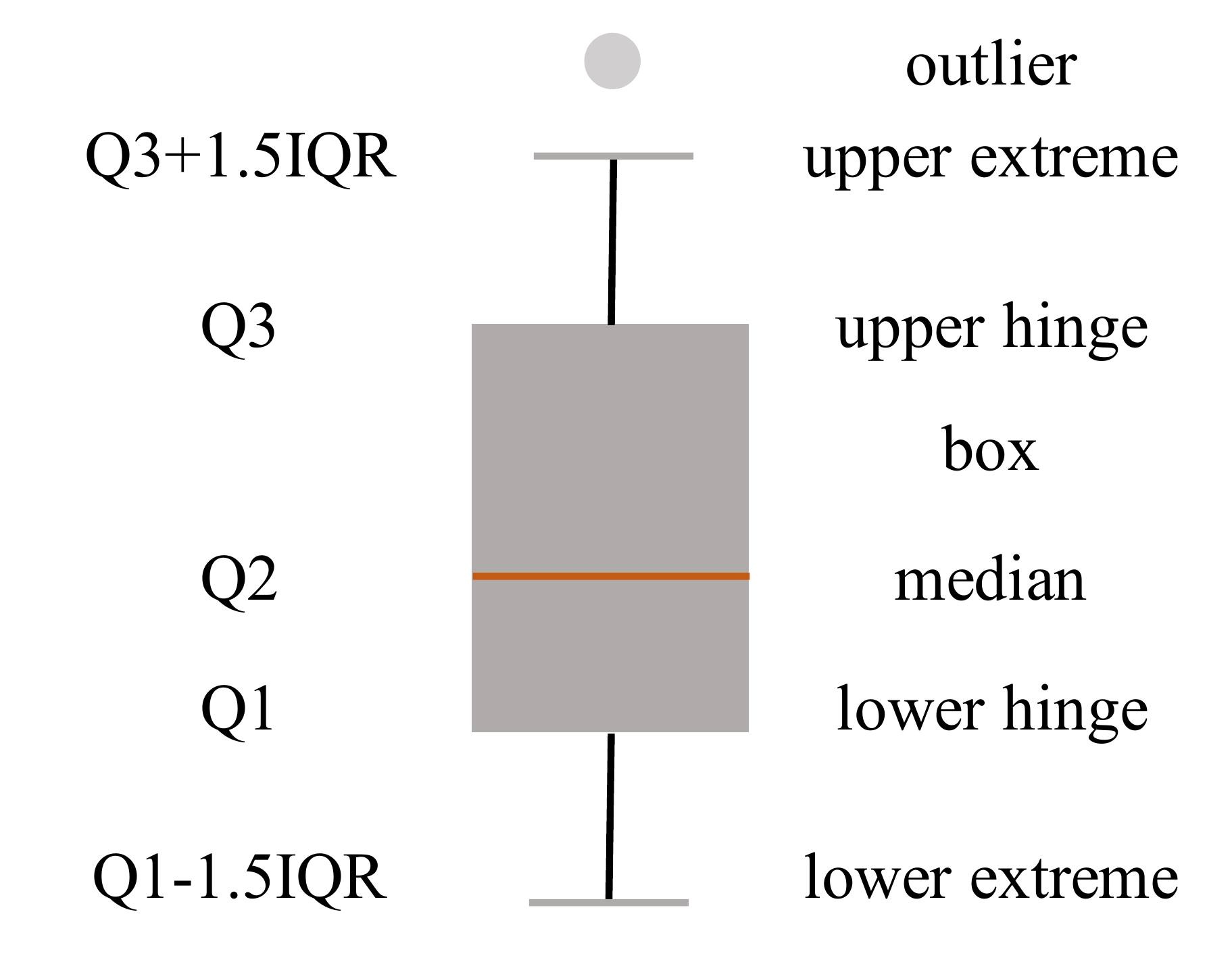}
	\caption{\small{\textbf{An example of the box-plot}. The gray circle refers to the outlier of the given  data. The orange line represents the median of the given data. The two thick gray lines correspond to the upper extreme and lower extreme, respectively. The upper edge and lower edge of the gray box stand for the upper hinge and the lower hinge, respectively.  The distance from the upper extreme to the upper hinge (or from the lower  extreme to the lower hinge) equals to $1.5\text{IQR}$. }}
	\label{fig:box-plot}
\end{figure} 	

\subsection{Parameterized quantum circuit}
Parameterized quantum circuit (PQC) is a special type of quantum circuit model that can be efficiently implemented on near-term quantum devices \cite{benedetti2019parameterized}.  The basic components of PQC are  quantum fixed two qubits gates, e.g., controlled-Z (CZ) gates, and trainable single qubit gates, e.g.,  the rotation gates $\Ry(\theta)$ along y-axis. A PQC is used to implement a unitary transformation operator $U(\bm{\theta})$ with $O(poly(N))$ parameterized quantum gates, where $N$ is the number of input qubits and  $\bm{\theta}$ is trainable parameters. The parameters $\bm{\theta}$ are updated by a classical optimizer to minimize the loss function $\mathcal{L}_{\bm{\theta}}$ that evaluates the dissimilarity between the output of PQCs and the target result.

One typical PQC is multilayer parameterized quantum circuit (MPQC), which has a wide range of applications in quantum machine learning \cite{benedetti2019generative,tacchino2019artificial,du2018implementable,hu2019quantum}. The trainable unitary operator $U(\bm{\theta})$, represented by MPQC,   is composed of $L$ layers and   each layer    has an identical arrangement of quantum gates.  Fig. \ref{fig:AD-MPQCs} (a) illustrates the general framework of MPQC.  Mathematically, we have $U(\bm{\theta}) := \prod_{l=1}^L (U_EU_{l}(\bm{\theta}))$ with $L\sim O(poly(N))$, where $U_{l}(\bm{\theta})$ is  the $l$-th  trainable layer  and $U_{E}$ is the entanglement layer. In particular,  we have  $U_{l}(\bm{\theta})=\bigotimes_{i=1}^N (U_S(\bm{\theta}^{(i,l)}))$, where $\bm{\theta}^{(i,l)}$ represents the $(i,j)$-th entry of $\bm{\theta}\in \mathcal{R}^{N\times L}$,  $U_S$ is the trainable unitary with  $U_S\in SU(2)$, e.g., the rotation single qubit gates RX, RY, and RZ.  The entangle layer $U_E$ consists of fixed two qubits gates, e.g., CNOT and CZ, where the control and target qubits can be randomly arranged.  We exemplify the implementation of  $U_l$ and $U_E$ in Fig.~\ref{fig:AD-MPQCs} (b).

\begin{figure}[!ht]
	\centering
	\includegraphics[width=0.95\textwidth]{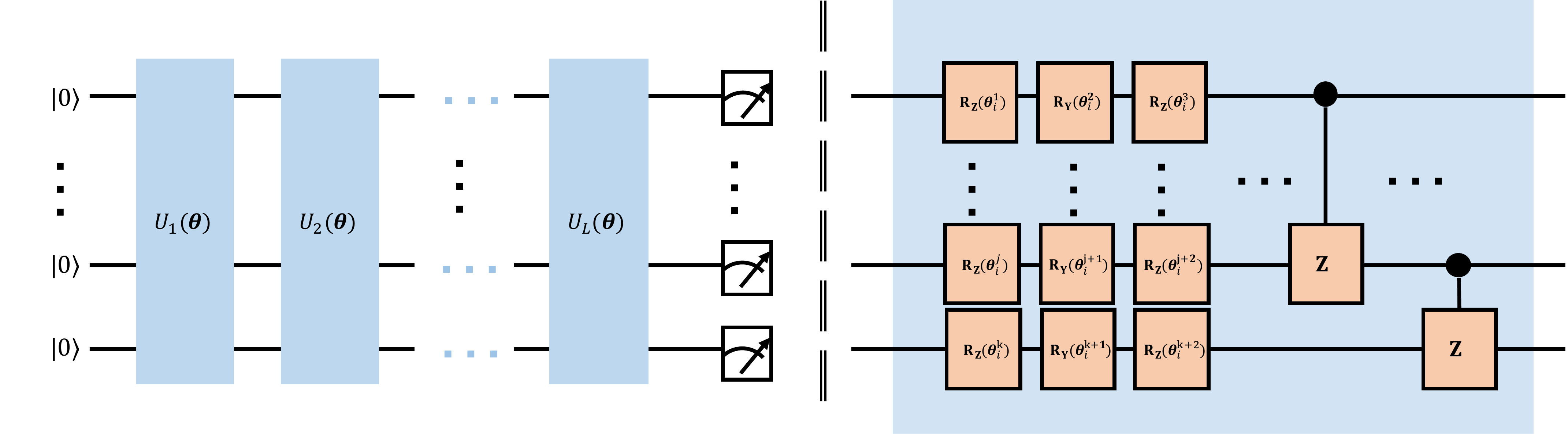}
	\caption{\small{\textbf{The implementation of MPQC}. (a) A general framework of MPQC. The trainable unitary $U_l(\bm{\theta})$ with $l\in[L]$ refers to the $l$-th layer of MPQC. The arrangement of quantum gates in each layer is identical. (b) A paradigm for the trainable unitary $U_l(\theta)$ and $U_E$. For $U_l(\theta)$, the trainable qubit gates $U_S$ are rotation single qubit gates along $Z$ and $Y$ axis. The trainable parameter refers to the rotation angle. For $U_E$, the fixed two qubits gates, i.e., CZ gates,   are applied onto the adjacent qubits.     }}
	\label{fig:AD-MPQCs}
\end{figure}



\section{SM (B): An overview of our quantum GAN}

 \begin{figure*}[htp]
			\centering   \includegraphics[width=0.65\textwidth]{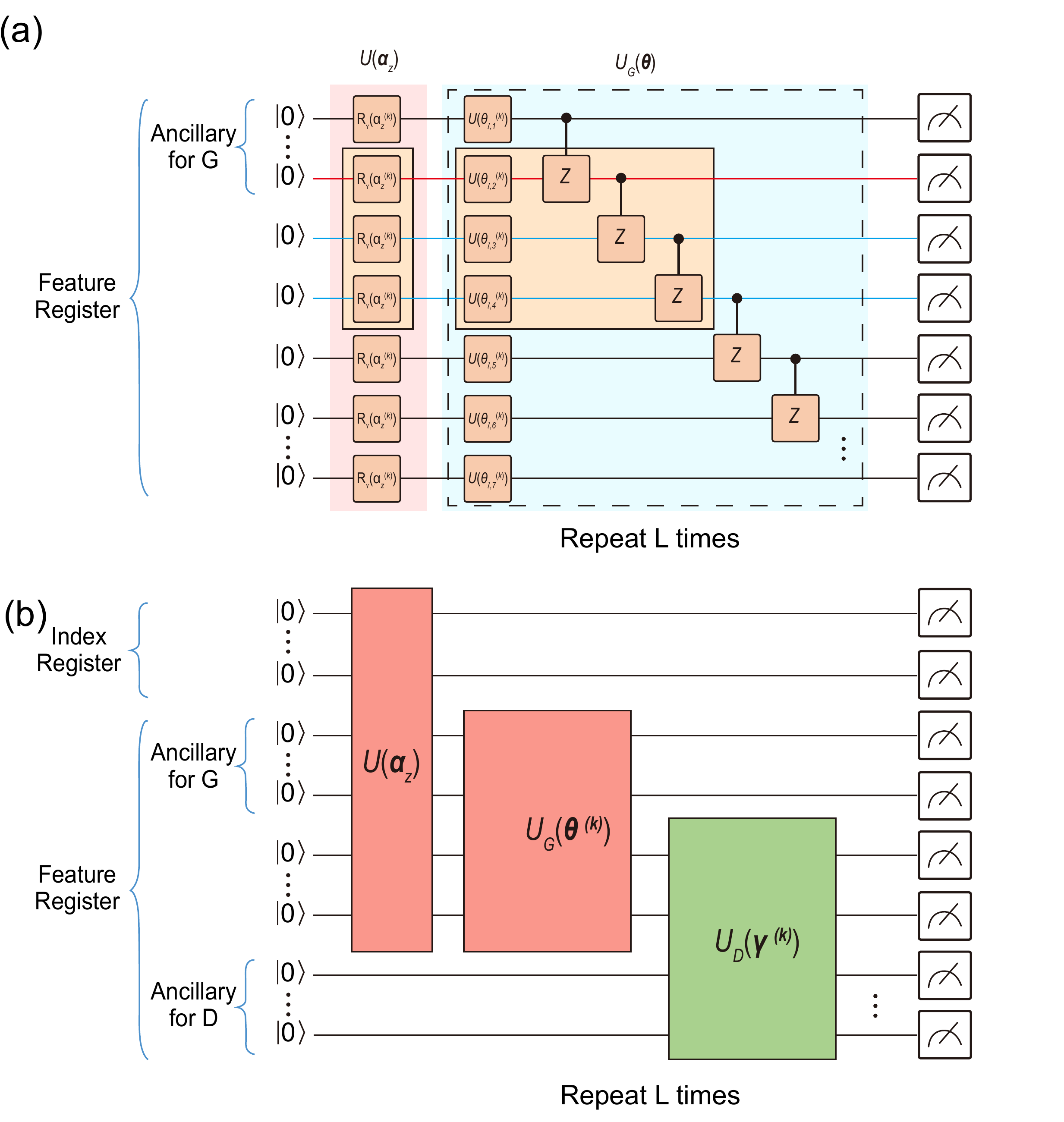}
\caption{\small{\textbf{Quantum GAN.}  (a) The implementation of $G_t$ for t  quantum patch GAN. The sub-generator $G_t$, or equivalently, the trainable unitary $U_G(\bm{\theta}_t)$, is constructed by PQC and highlighted by the  blue region with a dashed outline. Let  $U_G(\bm{\theta}_t) := \prod_{l=1}^L (U_EU_{l}(\bm{\theta}_t))$, where $U_{l}(\bm{\theta}_t):=\bigotimes_{i=1}^N (U_S(\bm{\theta}_t^{(i,l)}))$ is  the $l$-th  trainable layer,  $U_S(\bm{\theta}_t^{(i,l)})$ is the trainable unitary with  $U_S\in SU(2)$, and  $U_{E}$ is the entanglement layer with $U_{E}:=\bigotimes_{i=1}^{2i+1 \leq  N} \CZ(2i,2i+1)\bigotimes_{i=1}^{2i \leq  N} \CZ(2i-1,2i)$.  For example, we set $U_S(\bm{\theta})=\text{R}_\text{Y}(\bm{\theta})$, $L=3$, and $N=3$ to accomplish the gray-scale bar  image generation in case $m=2$, where the employed qubits are highlighted by the  blue line and the used quantum gates are highlighted by yellow region.   (b)  The main architecture of the  quantum batch GAN. A pre-trained unitary $U(\bm{\alpha}_z)$, the quantum generator $U_G(\bm{\theta}^{(k)})$, and the quantum discriminator $U_D(\bm{\gamma}^{(k)})$ are  applied to the input  state $\ket{0}^{\otimes N}$ in sequence. We adopt the same rules used in (a)  to build $U_G(\bm{\theta}^{(k)})$ and  $U_D(\bm{\gamma}^{(k)})$.} }
			\label{fig:generator}
\end{figure*}

\textbf{Quantum patch GAN.}
The three core components of quantum patch  GAN are the quantum generator, classical discriminator, and optimization rule. Here we briefly introduce the primary mechanism of these three elements, with the details presented in the SM(C).

The employed quantum generator consists of $T$ sub-generators $\{G_t\}_{t=1}^T$,  and each sub-generator is assigned to generated a specific portion of the feature vector. We exemplify the implementation of the sub-generator  $G_t$ at the  $k$-th iteration, since  the identical methods are applied to all  quantum sub-generators.    Suppose that the available quantum device has $N$ qubits, we first divide it into two parts, where the first $N_G$ qubits aim to generate a feature vector of  length $2^{N_G}$, and the remaining  $N_A$ qubits aim to conduct the nonlinear mapping, which is an essential operation in deep learning  (see the Supplementary for   details). We then prepare the  input state $\ket{\bm{z}^{(k)}}$, where the mathematical form of the input state  is  $\ket{\bm{z}^{(k)}}=(\bigotimes_{i=1}^{N} \text{R}_\text{Y} (\bm{\alpha}_z^{(k)}))\ket{0}^{\otimes N}$, where $\text{R}_\text{Y}$ refers to the rotation single qubit gate along the  y-axis and  $\bm{\alpha}_z^{(k)}$ is sampled from the  uniform distribution, e.g., $\bm{\alpha}_z^{(k)}\sim\text{unif}(0, \pi)$. Note that at each iteration, the same latent state $\ket{\bm{z}^{(k)}}$ is input into  all  sub-generators.  We then input   $\ket{\bm{z}^{(k)}}$ into  $G_t$, namely, a trainable unitary $U(\bm{\theta}_t^{(k)})$.  Figure~\ref{fig:generator}(a) shows  the implementation of  $U(\bm{\theta}_t^{(k)})$. The generated quantum state of   $G_t$ is    \begin{equation}\ket{\Psi_t^{(k)}(\bm{z})}=U(\bm{\theta}_t^{(k)})\ket{\bm{z}^{(k)}}~. \end{equation}     We finally   partially measure the generated state  $\ket{\Psi_t^{(k)}(\bm{z})}$ to obtain the classical generated result. In particular,  the $j$-th entry with $j\in[2^{N_{G_t}}]$ is  \begin{equation}\Pro_t^{(k)}(j) =\braket{\Psi_t^{(k)}(\bm{z})(\ket{j}\bra{j}\otimes \left(\ket{0}\bra{0}\right)^{\otimes N_A})\Psi_t^{(k)}(\bm{z})}~.\end{equation} Overall, the generated image $\tilde{\bm{x}}^{(k)}$ at the  $k$-th training iteration is produced by combining $T$ measured  distributions, with $\tilde{\bm{x}}^{(k)}=[\Pro_1^{(k)}, \Pro_2^{(k)},...,\Pro_T^{(k)}]\in\mathbb{R}^M$.

The employed discriminator is implemented with a  classical deep neural network, i.e., the fully-connected neural network \cite{goodfellow2016deep}. The implementation method exactly follows the classical GAN \cite{goodfellow2014generative}.   The input of  the discriminator can either be a generated image $\tilde{\bm{x}}$ or a real image $\bm{x}$ sampled from $\mathcal{D}$. The output of the discriminator $D(\bm{x})$ or $D(\tilde{\bm{x}})$ is in the range between $0$ (label `False') and $1$ (label `True').

Quantum GAN training  is analogous to classical GAN training. A loss function $\mathcal{L}$ is  employed to iteratively optimize the quantum  generator $G$ and the classical  discriminator $D$ during $K$ iterations. The mathematical form of the loss function is
\begin{equation} 	\mathcal{L}(\bm{\theta}, \bm{\gamma})= \frac{1}{M'} \sum_{i=1}^{M'}\left [ \log(D_{\bm{\gamma}}(\bm{x}^{(i)})) + \log(1-D_{\bm{\gamma}}(G_{\bm{\theta}}(\bm{z}^{(i)})) )  \right],\end{equation}
 where $\bm{x}^{(i)} \in \mathcal{D}$, $\bm{z}^{(i)}\sim \Pro(\bm{z})$, $M'$ is the size of mini-batch,  and  $\bm{\theta}$ and $\bm{\gamma}$ are trainable parameters for $G$ and $D$, respectively. The objectives of the generator and the  discriminator are  to  minimize and maximize the loss function (classification accuracy), i.e., $\max_{\bm{\gamma}}\min_{\bm{\theta}}	\mathcal{L}(\bm{\theta}, \bm{\gamma})$.   The updating rule for $G$  is  $ \bm{\theta}^{(k+1)}=\bm{\theta}^{(k)} - \eta_G*\partial_{\bm{\theta}} \mathcal{L}(\bm{\theta}^{(k)}, \bm{\gamma}^{(k)})/\partial \bm{\theta}^{(k)} $. Similarly, the updating rule for $D$ is $\bm{\gamma}^{(k+1)}=\bm{\gamma}^{(k)} +  \eta_D*\partial_{\bm{\gamma}} \mathcal{L}(\bm{\theta}^{(k)}, \bm{\gamma}^{(k)})/\partial \bm{\gamma}^{(k)}$, where $\eta_G$ ($\eta_D$) refers to the learning rate of $G$ and $D$.

\textbf{Quantum batch  GAN.} Following the same routine as the quantum patch GAN,  the  quantum  batch GAN is composed of a quantum generator, quantum discriminator, and an optimization rule. In particular,  the same loss function is employed to optimize $\bm{\theta}$ and $\bm{\gamma}$.    We  briefly explain the  main  differences to  the quantum patch GAN, with the details provided in the SM(E).

The main architecture  of     quantum batch GAN is illustrated in Fig.~\ref{fig:generator}(b). Both $G$ and $D$ are constructed using PQCs.  In the training procedure,  we first adopt a pertained oracle  to generate the latent state $\ket{\bm{z}^{(k)}}$, i.e., $\ket{\bm{z}^{(k)}} = 2^{-N_I}  \sum_i \ket{i}_I\ket{\bm{z}_i^{(k)}}_F$. Note that $N_F$ qubits, as in the first proposal does,  are  decomposed into two parts, where the first $N_G$ qubits are used to generate feature vectors and the remaining  $N_A$ are used to introduce nonlinearity. We then  apply the quantum generator $U_G(\bm{\theta}^{(k)})$ with $U_G(\bm{\theta}^{(k)}) \in \mathbb{C}^{2^{N_F}\times 2^{N_F}}$ to the latent state, where the generated state is $(\mathbb{I}_{2^N}\otimes U_G(\bm{\theta}^{(k)}) )\ket{ \bm{z}}$.  We then  apply the discriminator $U_D(\bm{\gamma}^{(k)}) \in \mathbb{C}^{2^{N_F}\times 2^{N_F}}$ to the generated state, i.e.,  \begin{equation}\ket{\Psi_t^{(k)}(\bm{z})}=(\mathbb{I}_{2^N}\otimes (U_D(\bm{\gamma}^{(k)})U_G(\bm{\theta}^{(k)}) ))\ket{\bm{z}^{(k)}}~.\end{equation} Finally, we  employ  a  \textit{Positive Operator Value Measurements} (POVM)  $\Pi$
to obtain the output of the discriminator $D(G(\bm{z}))$, i.e., $D(G(\bm{z}))   = \Tr(\Pi\ket{\Psi_t^{(k)}(\bm{z})}\bra{\Psi_t^{(k)}(\bm{z})})$ with $\Pi=\mathbb{I}_{2^{N-1}}\otimes \ket{0}\bra{0}$. Similarly,  we have $D( \bm{x})   = \bra{\Psi_t^{(k)}(\bm{x})}\Pi\ket{\Psi_t^{(k)}(\bm{x})}$ with $\ket{\Psi_t^{(k)}(\bm{x})} = (\mathbb{I}_{2^N}\otimes U_D(\bm{\gamma}^{(k)}) ) \ket{\bm{x}^{(k)}}$.

\section{SM (C): The implementation of quantum patch GAN}
The   quantum patch  GAN under the setting $N< \lceil \log M \rceil$ is composed of a quantum generator, a classical discriminator, and a classical optimizer. Here we  separately explain the implementation    of these three components.

\subsection{Quantum generator  }
Recall  that the same construction rule is applied to all $T$ sub-generators. Here we mainly exemplify $t$-th sub-generator $G_t$.  Quantum sub-generator $G_t$, analogous to classical generator, receives the input latent state $\ket{\bm{z}}$ and outputs the generated result $G_t(\bm{z})$.   We first introduce the preparation of the latent state $\ket{\bm{z}}$. We then describe the construction rule  of the  computation model $U_G(\bm{\theta})$ used in $G_t$. We last illustrate how to transform the generated quantum state to the generated example $G_t(\bm{z})$.

\textbf{Input latent state.} As explained in the main text, the latent state is prepared by applying a set of  rotation single qubit gates $U_S(\bm{\alpha}_z(i))$  to the input state $\ket{0}^{\otimes N}$ with $\bm{\alpha}_z\in \mathbb{R}^N$ and $U_S(\bm{\alpha}_z(i)) \in \{\Rx, \Ry,\Rz\}$, i.e.,  $\ket{\bm{z}}=(\bigotimes_{i=1}^{N}U_S(\bm{\alpha}_z(i)))\ket{0}^{\otimes N}$.
The exploitation of the latent variable input state $\ket{\bm{z}}$, analogous to  classical GAN, ensures quantum  GAN to be a probabilistic generative model \cite{bishop2006pattern}.    In the training procedure, the same $\ket{\bm{z}}$ is employed to input into all $T$ sub-generators. Such an operation guarantees that quantum patch  GAN is capable of converging to Nash equilibrium, as classical GAN claimed. The technical proof is shown in Section SM (D).

 \textbf{Computation model $U_{G_t}(\bm{\theta})$.}  The computation model $U_{G_t}(\bm{\theta})$ aims to map the input state $\ket{\bm{z}}$ to a specific quantum state that   well approximates  the target data.    Two key elements of our computation model are MPQC formulated in Section SM (A) and the  nonlinear transformation. The motivation to use  MPQC comes from two aspects. First, the structure of MPQC can be flexibly modified  to adapt the limitations of quantum  hardware, e.g., the restricted circuit depth and the allowable  number of  quantum gates \cite{preskill2018quantum}. Second, MPQC possesses a strong expressive power over classical circuit, which may contribute to quantum GANs to estimate the real data distribution \cite{du2018expressive}.  The adoption of the  nonlinear transformation intends to close the gap between the intrinsic mechanism of quantum computation  and the required setting for  generative models. Specifically, generative model essentially tries to learn a nonlinear map that transforms the  distribution $\Pro(\bm{z})$ to the target data distribution $\Pro_{data}(\bm{x})$. The intrinsic property of quantum computation implies that the trainable unitary, e.g., MPQC, can only linearly transform the input state to the output state.  Consequently, a nonlinear transformation strategy is demanded   for the quantum generator.

Here we  introduce one efficient method that enables $G_t$ to achieve  the nonlinear map. The central idea is adding an ancillary subsystem  in $G_t$ and then  tracing it out. Similar ideas have been broadly used in quantum discriminative models \cite{farhi2018classification,grant2018hierarchical,wan2017quantum}. Supposed that $G_t$ is an $N$ qubits system, we decompose it into the ancillary subsystem $\mathcal{A}$ with  $N_A$ qubits and the data subsystem with $N-N_A$ qubits.  We define the input state    $\ket{\bm{z}}$ as the following form, i.e.,
\begin{equation}\label{eqn:lat_z}
	\ket{\bm{z}} = \left(\bigotimes_{i=1, i\in S}^{N_S} \Ry(\bm{\alpha}_z{(i)})\bigotimes_{k=1, k\in [N]\setminus S}^{N-N_S} \text{I}_k\right)\ket{0}^{\otimes N}~,
\end{equation}
where   $S$ is the index set  with $S\subset [N]$ and $|S|=N_S$,  $\Ry(\bm{\alpha}_z{(i)})$ applies to the $i$-th qubit,  identity gate $\text{I}_k$ applies to the $k$-th qubit, and $\bm{\alpha}_z{(i)}$ refers to the $i$-th entry of the vector $\bm{\alpha}\in\mathbb{R}^{N_S}$ with $\bm{\alpha}$  being sampled from a predefined distribution.  We denote  MPQC   as the giant unitary $U_{G_t}(\bm{\theta})$ with $U_{G_t}(\bm{\theta})\in\mathbb{C}^{2^{N}\times 2^{N}}$.  The generated state  $\ket{\Psi_t(\bm{z})}$ for $G_t$ after interacting $U_t(\bm{\theta})$ with $\ket{{\bm{z}}}$ is
\begin{equation}
	\ket{\Psi_t(\bm{z})} = U_{G_t}(\bm{\theta}) \ket{{\bm{z}}}~.
\end{equation}
We then take the  partial measurement $\Pi_{\mathcal{A}}$ on  the ancillary subsystem $\mathcal{A}$ of $\ket{\Psi(\bm{z})}$, i.e., the post-measurement  quantum state $\rho_t(\bm{z})$ is
\begin{equation}\label{eqn:post_gene}
	\rho_t(\bm{z})  = \frac{\Tr_{\mathcal{A}}(\Pi_{\mathcal{A}}\ket{\Psi_t(\bm{z})}\bra{\Psi_t(\bm{z})})}{\Tr(\Pi_{\mathcal{A}} \otimes \mathbb{I}_{2^{N-N_A}} \ket{\Psi_t(\bm{z})}\bra{\Psi_t(\bm{z})})} ~.
\end{equation}
An immediate observation is that state $\rho_t(\bm{z})$ is a nonlinear map for $\ket{\bm{z}}$, since both the nominator and denominator of Eqn.~(\ref{eqn:post_gene}) are the function of the variable $\ket{\bm{z}}$.

\textbf{Output.} The output of  $G_t$, denoted as $G_t(\bm{z})$, is obtained by measuring $\rho_t(\bm{z})$ using a complete set of computation bases $\{\ket{j}\}_{j=0}^{2^{(N-N_A)}-1}$. For image generation, the measured result $\Pro(j)$ of the computation basis $\ket{j}$ represents the $j$-th pixel value for the $t$-th sub-generator, i.e.,
\begin{align}
	\Pro(J=j) = \Tr(\ket{j}\bra{j}\rho_t(\bm{z}))~.
\end{align}
Consequently,  we have $G_t(\bm{z})$
\begin{align}
	G_t(\bm{z}) = [\Pro(J=0), ..., \Pro(J=2^{(N-N_A)}-1)]~,
\end{align}
and the output for the generator $G(\bm{z})$ is
\begin{align}
	G(\bm{z}) = [G_1(\bm{z}), ..., G_T(\bm{z})]~.
\end{align}

\textit{\underline{Remark.}}  Other advanced  nonlinear mapping methods can be seamlessly  embedded into our  quantum generator. For example, it is feasible to employ classical activation function $f(\cdot)$, e.g., sigmoid function, to the generated result $G(\bm{z})$. It is intrigued to explore what kind of nonlinear mapping will lead to a better performance for our quantum GAN scheme.

\subsection{Discriminator}
The  discriminator  $D$  is constructed by employing classical neural networks, i.e., FCNN. The input of the discriminator  is the training data $\bm{x}$ or generated data $G(\bm{z})$. The output of $D$ is a scalar in the range between $0$ and $1$, i.e., $D(\bm{x}), D(G(\bm{z}))\in [0, 1]$. Recall that we label the training data as $1$ (True) and the generated data as $0$ (False).  The output of the discriminator can be treated as the confidence about the input data to be true or false. The ReLU mapping  is employed to build FCNN.  We customize the depth of the hidden layers  and the number of neurons in each layer  for different tasks.

\subsection{Loss function and optimization rule}
We modify the loss function  defined in Eqn.~(\ref{eqn:loss}) to train quantum patch GAN, i.e.,
		\begin{equation}\label{eqn:loss_QGAN1}
\min_{\bm{\theta}}\max_{\bm{\gamma}} \mathcal{L}(D_{\bm{\gamma}}(G_{\bm{\theta}}(\bm{z})),D_{\bm{\gamma}}(\bm{x})):=  \mathbb{E}_{\bm{x}\sim \Pro_{data}(\bm{x})} [\log D_{\bm{\gamma}}( \bm{x})] +\mathbb{E}_{\bm{z}\sim \Pro(\bm{z})} [\log(1-D_{\bm{\gamma}}(G_{\bm{\theta}}(\bm{z}))]~,
\end{equation}
 where the modified part is  setting   $G_{\bm{\theta}}(\bm{z}) = [G_{\bm{\theta},1}(\bm{z}), ...,G_{\bm{\theta},T}(\bm{z})]$.  In the training process, we optimize the trainable parameters $\bm{\theta}$ and $\bm{\gamma}$ iteratively, which is analogous to classical GAN. Especially,  we leverage a zeroth-order method \cite{mitarai2018quantum} and an  automatic differentiation package of PyTorch \cite{paszke2017automatic}  to optimize trainable parameters for the quantum generator and classical  discriminator, respectively. In particular,  to optimize the  classical discriminator $D$, we fix parameters $\bm{\theta}$ and use back-propagation to update the parameters $\bm{\gamma}$ according to  the obtained loss  \cite{goodfellow2016deep}. To optimize the  quantum generator $G$,  we keep the parameters $\bm{\gamma}$ fixed and employ the parameter shift rule   \cite{mitarai2018quantum} to compute the  gradients of PQC in a way that is compatible with   back-propagation. Denote $N_G$ and $N_D$ be the number of parameters for $G$ and $D$, i.e., $N_G=|\bm{\theta}|$ and $N_D=|\bm{\gamma}|$. The derivative of the $i$-th parameter $\bm{\theta}(i)$ with $i\in[N_G]$ can be computed by evaluating the original expectation twice, but with shifting $\bm{\theta}(i)$ to $\bm{\theta}(i)+\pi/2$ and $\bm{\theta}(i)-\pi/2$. In particular, we have
 \begin{align}\label{eqn:part_der}
\frac{ 	\partial \mathcal{L}(\bm{\theta},\bm{\gamma})}{\partial \bm{\theta}(i) } = \frac{\mathcal{L}(\bm{\theta}(1),...,\bm{\theta}(i)+\pi/2,...,\bm{\theta}(N_G),\bm{\gamma} ) -\mathcal{L}(\bm{\theta}(1),...,\bm{\theta}(i)-\pi/2,...,\bm{\theta}(N_G), \bm{\gamma}) }{2}~.
 \end{align}
 The update rule for $	\bm{\theta}$ at $k$-th iteration is
\begin{equation}
	\bm{\theta}^{(k)} = \bm{\theta}^{(k-1)} - \eta_G\frac{ 	\partial \mathcal{L}(\bm{\theta}^{(k-1)},\bm{\gamma}^{(k-1)})}{\partial \bm{\theta}^{(k-1)} }~,
\end{equation}
where $\eta_G$ is the learning rate. Analogous to the classical GAN, we iteratively  update parameters  $\bm{\theta}$ and  $\bm{\gamma}$ in total $K$ iterations.

\section{SM (D): The convergence guarantee of Quantum patch  GAN}\label{sec:NE}
Recall that classical GAN employs the following Lemma to prove its convergence.
\begin{lem}[Proposition 1, \cite{goodfellow2014generative}]\label{lem:1}
	For classical GAN, when $G$ is fixed, the optimal discriminator $D$ is
	 $$D^*(\bm{x})=\frac{\Pro_{data}(\bm{x})}{\Pro_{g}(\bm{x})+\Pro_{data}(\bm{x})}~.$$	
\end{lem}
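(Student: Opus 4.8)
The plan is to fix the generator $G$ and reduce the maximization over the discriminator to a pointwise optimization of the integrand appearing in the loss. First I would rewrite the objective in Eqn.~(\ref{eqn:loss}), call it $V(G,D)$, as a single integral over the data space. The first term is already an expectation over $\Pro_{data}$; for the second term I would push the expectation over the latent variable $\bm{z}$ forward through $G$, so that $\mathbb{E}_{\bm{z}\sim \Pro(\bm{z})}[\log(1-D(G(\bm{z})))]$ becomes $\mathbb{E}_{\bm{x}\sim \Pro_{g}(\bm{x})}[\log(1-D(\bm{x}))]$, where $\Pro_{g}$ denotes the distribution of the generated samples $G(\bm{z})$. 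This produces the combined form
\begin{equation}
V(G,D)=\int_{\bm{x}}\left[\Pro_{data}(\bm{x})\log D(\bm{x})+\Pro_{g}(\bm{x})\log(1-D(\bm{x}))\right]d\bm{x}~.
\end{equation}

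Next, since $G$ is held fixed, the weights $\Pro_{data}(\bm{x})$ and $\Pro_{g}(\bm{x})$ are constants at each point $\bm{x}$, and the value $D(\bm{x})$ may be chosen independently at every $\bm{x}$. Hence maximizing $V(G,D)$ over all admissible $D:\bm{x}\mapsto[0,1]$ reduces to maximizing the integrand pointwise. I would therefore analyze the scalar function $f(y)=a\log y+b\log(1-y)$ on $(0,1)$ with $a=\Pro_{data}(\bm{x})\ge 0$ and $b=\Pro_{g}(\bm{x})\ge 0$. Setting $f'(y)=a/y-b/(1-y)=0$ yields the unique interior critical point $y^{*}=a/(a+b)$, and since $f''(y)=-a/y^{2}-b/(1-y)^{2}<0$ the function is strictly concave, so $y^{*}$ is the global maximizer. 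Substituting $a$ and $b$ back gives $D^{*}(\bm{x})=\Pro_{data}(\bm{x})/(\Pro_{data}(\bm{x})+\Pro_{g}(\bm{x}))$, exactly the claimed expression.

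The step I would take most care over is the legitimacy of interchanging the supremum over $D$ with the integral, i.e. justifying that the pointwise optimum indeed attains the supremum of the functional. This is where the argument is really doing work: the optimization decouples across $\bm{x}$ only because $D$ is an unrestricted function, so the supremum of the integral equals the integral of the pointwise suprema. I would also observe that $D^{*}(\bm{x})\in[0,1]$ automatically whenever $a,b\ge 0$, so $D^{*}$ is a valid discriminator, and that it suffices to work on $\mathrm{Supp}(\Pro_{data})\cup\mathrm{Supp}(\Pro_{g})$, since off this set the integrand vanishes and the value assigned to $D$ is immaterial (this also disposes of the degenerate case $a=b=0$). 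The main obstacle is thus not the calculus but this measure-theoretic justification; in the heuristic spirit of the original GAN argument one typically states it informally, but I would flag it explicitly as the hinge of the proof.
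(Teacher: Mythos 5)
Your proposal is correct and follows essentially the same route as the paper: the paper states Lemma~\ref{lem:1} by citation but proves the identical claim for its quantum patch GAN (Theorem~\ref{thm:1}) by exactly your argument --- push the latent expectation forward through $G$ to get $\int_{\bm{x}}\left[\Pro_{data}(\bm{x})\log D(\bm{x})+\Pro_{g}(\bm{x})\log(1-D(\bm{x}))\right]d\bm{x}$ and then optimize pointwise. Your version is in fact slightly more careful, since you carry out the scalar maximization of $a\log y+b\log(1-y)$ explicitly (with the concavity check) and flag the sup--integral interchange and support issues, whereas the paper simply asserts the optimizer after writing the combined integral.
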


In favor of Lemma \ref{lem:1}, the convergence property of classical GAN is summarized by the following two lemmas.
\begin{lem}[Theorem 1, \cite{goodfellow2014generative}]
Denote $C(G)$ as
$C(G):=\max_D \mathcal{L}(G, D),$ with ${L}(G, D)$ being loss function.
 The global minimum of the virtual training criterion C(G) is achieved if and only if $\Pro_g = \Pro_{data}$. At that point, $C(G)$ achieves the value  $-\log 4$.	
\end{lem}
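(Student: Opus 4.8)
The plan is to substitute the optimal discriminator supplied by Lemma~\ref{lem:1} into the loss, reduce $C(G)$ to a closed form in the densities $\Pro_{data}$ and $\Pro_g$, and then identify that form as a constant shifted by a Jensen--Shannon divergence whose nonnegativity simultaneously pins down the minimal value and the minimizer. First I would rewrite the latent expectation in Eqn.~(\ref{eqn:loss}) as an expectation over the generator-induced density $\Pro_g$, since $\bm{x}=G(\bm{z})$ with $\bm{z}\sim\Pro(\bm{z})$ is distributed according to $\Pro_g$; this turns the loss into $\mathcal{L}(G,D)=\mathbb{E}_{\bm{x}\sim\Pro_{data}}[\log D(\bm{x})]+\mathbb{E}_{\bm{x}\sim\Pro_g}[\log(1-D(\bm{x}))]$. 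Inserting $D^*(\bm{x})=\Pro_{data}(\bm{x})/(\Pro_{data}(\bm{x})+\Pro_g(\bm{x}))$ from Lemma~\ref{lem:1}, together with $1-D^*(\bm{x})=\Pro_g(\bm{x})/(\Pro_{data}(\bm{x})+\Pro_g(\bm{x}))$, gives
\begin{align}
C(G) &= \mathbb{E}_{\bm{x}\sim\Pro_{data}}\left[\log\frac{\Pro_{data}(\bm{x})}{\Pro_{data}(\bm{x})+\Pro_g(\bm{x})}\right] \notag \\
&\quad + \mathbb{E}_{\bm{x}\sim\Pro_g}\left[\log\frac{\Pro_g(\bm{x})}{\Pro_{data}(\bm{x})+\Pro_g(\bm{x})}\right]~.
\end{align}

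For the ``if'' direction I would substitute $\Pro_g=\Pro_{data}$ directly: each fraction collapses to $1/2$, so $C(G)=\log(1/2)+\log(1/2)=-\log 4$, confirming both that the value is attained and that it is the candidate optimum. To establish that this is a \emph{global} minimum attained \emph{only} there, I would add and subtract $\log 2$ inside each expectation so that the denominators become the mixture density $(\Pro_{data}+\Pro_g)/2$, recasting the closed form as
\begin{align}
C(G) &= -\log 4 + \text{KL}\!\left(\Pro_{data}\,\middle\|\,\tfrac{\Pro_{data}+\Pro_g}{2}\right) \notag \\
&\quad + \text{KL}\!\left(\Pro_g\,\middle\|\,\tfrac{\Pro_{data}+\Pro_g}{2}\right) \notag \\
&= -\log 4 + 2\,\text{JSD}(\Pro_{data}\,\|\,\Pro_g)~.
\end{align}
Because each Kullback--Leibler term is nonnegative, we immediately obtain $C(G)\geq-\log 4$, with equality precisely when both KL terms vanish.

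The main obstacle is the rigorous treatment of the equality clause: I must invoke that $\text{KL}(p\|q)=0$ holds if and only if $p=q$ almost everywhere, which follows from the strict convexity of $t\mapsto-\log t$ via Jensen's inequality (equivalently, that the Jensen--Shannon divergence vanishes exactly when its arguments coincide). Since both KL terms share the same second argument $(\Pro_{data}+\Pro_g)/2$, their simultaneous vanishing forces $\Pro_{data}=(\Pro_{data}+\Pro_g)/2=\Pro_g$, which supplies the ``only if'' direction. The only remaining care is measure-theoretic: one should assume the supports are such that the divergences are finite along the training trajectory, but this is standard and does not affect the structure of the argument.
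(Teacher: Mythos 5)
Your proof is correct, and it is precisely the canonical argument for this result: the paper itself imports this lemma verbatim from Goodfellow et al.~\cite{goodfellow2014generative} without supplying any proof, and your reconstruction---rewriting the latent expectation under $\Pro_g$, substituting the optimal discriminator from Lemma~\ref{lem:1}, recentering by $\log 2$ to expose two Kullback--Leibler terms against the mixture $\tfrac{1}{2}(\Pro_{data}+\Pro_g)$, and using nonnegativity and the equality case of the Jensen--Shannon divergence---is exactly the proof given in that cited source. The equality clause is handled correctly, since both KL terms share the mixture as their second argument, forcing $\Pro_{data}=\tfrac{1}{2}(\Pro_{data}+\Pro_g)=\Pro_g$.
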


\begin{lem}[Proposition 2,\cite{goodfellow2014generative}]\label{lem:2}
Denote $C(D)$ as
$C(D):=\min_G \mathcal{L}(G, D),$ with ${L}(G, D)$ being loss function.
	If the generator $G$ and discriminator $D$ have enough capacity, and at each iteration of GAN,  the discriminator is allowed to reach its optimum given $G$, and the generated distribution $\Pro_g$ is updated so as to improve the criterion $C(D)$
then $\Pro_g(\bm{x})$ converges to $\Pro_{data}(\bm{x})$. 	
\end{lem}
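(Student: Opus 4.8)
The plan is to recast the min--max optimization, which is nominally carried out over the parameters $\bm{\theta}$ and $\bm{\gamma}$, as an optimization over the space of probability distributions. The ``enough capacity'' hypothesis is precisely what licenses this: we assume the generator can realize any admissible $\Pro_{g}$ and the discriminator any measurable map into $[0,1]$, so that the object to be minimized is the functional $C(G) = \max_D \mathcal{L}(G,D)$ regarded as a function of $\Pro_{g}$ alone. The goal then reduces to showing that the prescribed update loop drives $\Pro_{g}$ to the unique minimizer $\Pro_{data}$ identified in the preceding theorem.

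First I would rewrite the inner objective as
\begin{equation}
U(\Pro_{g}, D) := \int \left[ \Pro_{data}(\bm{x}) \log D(\bm{x}) + \Pro_{g}(\bm{x}) \log(1 - D(\bm{x})) \right] d\bm{x}~,
\end{equation}
and observe that for each fixed $D$ this is \emph{linear}, hence convex, in $\Pro_{g}$. Consequently $C(G) = \sup_{D} U(\Pro_{g}, D)$ is a pointwise supremum of convex functions of $\Pro_{g}$, and is therefore itself convex in $\Pro_{g}$.

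Next I would invoke the standard subgradient fact: if $f = \sup_{\alpha} f_{\alpha}$ with every $f_{\alpha}$ convex, and $\beta$ attains the supremum at a given point, then any subgradient of $f_{\beta}$ at that point is also a subgradient of $f$ there. Because the training loop is assumed to bring $D$ to its optimum $D^{*}$ (the form of which is supplied by Lemma~\ref{lem:1}) before each generator update, the direction actually used to update $\Pro_{g}$ is the gradient of $U(\cdot, D^{*})$, i.e. a valid subgradient of the convex functional $C(G)$. Thus one generator step is exactly a subgradient-descent step on $C(G)$. Combining this with the preceding theorem, which identifies $\Pro_{g} = \Pro_{data}$ as the \emph{unique} global minimizer of the convex functional $C(G)$ (attaining value $-\log 4$), the result follows: subgradient descent with sufficiently small steps on a convex function possessing a unique minimizer converges to that minimizer, giving $\Pro_{g} \to \Pro_{data}$.

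The main obstacle is the passage from parameter space to distribution space. In practice the updates act on $\bm{\theta}$, whereas the convexity that powers the argument holds in $\Pro_{g}$, so convergence is only guaranteed for the idealized ``enough capacity'' dynamics on $\Pro_{g}$ rather than for the finitely parametrized circuit. The delicate point is to justify that reaching the optimal $D^{*}$ at every iteration is exactly what allows the realized update direction to be identified with a subgradient of $C(G)$; once that identification is made precise, the convexity and uniqueness-of-minimizer ingredients are routine consequences of the earlier lemmas.
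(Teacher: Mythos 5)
Your proposal is correct and follows essentially the same approach as the proof this paper relies on: the paper states the lemma without proof, citing Proposition 2 of Goodfellow et al., and that cited proof is exactly your chain of reasoning --- linearity of $U(\Pro_g, D)$ in $\Pro_g$ for fixed $D$, convexity of $C(G)=\sup_D U(\Pro_g,D)$ as a supremum of convex functionals, the subgradient-of-a-supremum fact identifying the generator update at the optimal $D^*$ with a subgradient step on $C(G)$, and convergence from the uniqueness of the global minimizer $\Pro_g=\Pro_{data}$ established in the preceding theorem. Even your closing caveat, that the guarantee holds for the idealized dynamics on $\Pro_g$ rather than for the finitely parametrized model, mirrors the remark accompanying the original proof, so there is nothing to correct.
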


We now prove that the   quantum  patch   GAN possesses the identical convergence property as classical GAN does. Let $\Pro(\bm{z})$ be the distribution of the latent variable $\bm{z}$. We denote the probability distribution of generated images as $\Pro_g(\bm{x})$ with $\Pro_g(\bm{x})=\Pro_g(G(\bm{z}))$ and $ G(\bm{z}) = [G_1(\bm{z}), G_2(\bm{z}), ...,G_T(\bm{z})] ~.$

\begin{thm}\label{thm:1}
	In quantum  patch GAN,	for $G$ fixed, the optimal discriminator $D$ is
	 $$D^*(\bm{x})=\frac{\Pro_{data}(\bm{x})}{\Pro_{g}(\bm{x})+\Pro_{data}(\bm{x})}~.$$
\end{thm}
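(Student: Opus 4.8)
The plan is to reduce this statement to the classical argument behind Lemma~\ref{lem:1} by first verifying that the quantum patch generator induces a bona fide pushforward distribution $\Pro_g(\bm{x})$ on image space, and then performing the same pointwise maximization of the loss integrand. Since the theorem fixes $G$, I would treat $\bm{\theta}$ as frozen and regard the generator as a single measurable map $G:\bm{z}\mapsto\bm{x}$.

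First I would make precise that $G$ is well-defined as a function of the latent variable alone. The generated image is $G(\bm{z})=[G_1(\bm{z}),\dots,G_T(\bm{z})]$, where each sub-generator output $G_t(\bm{z})=[\Pro(J=0),\dots,\Pro(J=2^{N-N_A}-1)]$ is a deterministic vector of measurement probabilities read off from the post-measurement state $\rho_t(\bm{z})$ in Eqn.~(\ref{eqn:post_gene}). The decisive structural fact, emphasized in the construction, is that the identical latent state $\ket{\bm{z}}$ is fed into all $T$ sub-generators, so the whole assembled image is a single deterministic function of $\bm{z}$ for fixed $\bm{\theta}$. I would then define $\Pro_g$ as the pushforward of $\Pro(\bm{z})$ under this map, i.e. the law of $\bm{x}=G(\bm{z})$ when $\bm{z}\sim\Pro(\bm{z})$.

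With $\Pro_g$ in hand, the change of variables rewrites the latent-space expectation in the loss Eqn.~(\ref{eqn:loss_QGAN1}) as an expectation over image space,
\begin{equation}
\mathbb{E}_{\bm{z}\sim\Pro(\bm{z})}[\log(1-D(G(\bm{z})))]=\mathbb{E}_{\bm{x}\sim\Pro_g(\bm{x})}[\log(1-D(\bm{x}))]~,
\end{equation}
so that $\mathcal{L}(G,D)=\int_{\bm{x}}\bigl(\Pro_{data}(\bm{x})\log D(\bm{x})+\Pro_g(\bm{x})\log(1-D(\bm{x}))\bigr)d\bm{x}$. Maximizing over $D$ for fixed $G$ then reduces to maximizing the integrand pointwise for each $\bm{x}$: the map $y\mapsto a\log y+b\log(1-y)$ with $a=\Pro_{data}(\bm{x})$ and $b=\Pro_g(\bm{x})$ attains its maximum on $[0,1]$ at $y=a/(a+b)$, yielding the claimed $D^*(\bm{x})$.

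The hard part will be the first step rather than the calculus. One must argue carefully that the patch construction yields a single pushforward law and not a more complicated joint object; this is precisely where the shared latent state matters. Had each sub-generator received an independently sampled $\ket{\bm{z}}$, the assembled image would follow a product-type mixture rather than the pushforward of one map, and the clean identification of $\Pro_g$ underlying the classical proof would break down. I would therefore foreground the shared-$\ket{\bm{z}}$ property as the hypothesis that licenses treating $G$ as a single function of $\bm{z}$, after which the optimization behind Lemma~\ref{lem:1} transfers verbatim.
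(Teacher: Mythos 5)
Your proposal is correct and follows essentially the same route as the paper's proof: both identify $\Pro_g$ as the pushforward of $\Pro(\bm{z})$ under the fixed map $G(\bm{z})=[G_1(\bm{z}),\dots,G_T(\bm{z})]$ (the paper writes this as $\Pro_g(\bm{x})=\int_{\{G(\bm{z})=\bm{x}\}}\Pro(\bm{z})d\bm{z}$), change variables to rewrite the loss as $\int_{\bm{x}}\bigl(\Pro_{data}(\bm{x})\log D(\bm{x})+\Pro_g(\bm{x})\log(1-D(\bm{x}))\bigr)d\bm{x}$, and maximize the integrand pointwise. Your version is, if anything, slightly more careful than the paper's, since you make explicit both the pointwise maximization of $y\mapsto a\log y+b\log(1-y)$ at $y=a/(a+b)$ and the role of the shared latent state in making $G$ a single deterministic function of $\bm{z}$, a point the paper states only in the surrounding text.
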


\begin{proof}
Given fixed generator $G$, we formulate the relation between $\Pro_g(\bm{x})$ and $\Pro(\bm{z})$ as follows.
	\begin{align}\label{eqn:thm1_1}
	\Pro_g(\bm{x}) =  \int_{\{[G_1(\bm{z}),G_2(\bm{z}),...,G_T(\bm{z})]=\bm{x}\}} \Pro({\bm{z}})d\bm{z} = \int_{\{G(\bm{z})=x\}} \Pro({\bm{z}})d\bm{z}~,
\end{align}

We then expand the loss function of quantum patch GAN and obtain
\begin{align}\label{eqn:thm1_2}
	& 	\mathbb{E}_{\bm{x}\sim \Pro_{data}}[\log(D(\bm{x}))]+ \mathbb{E}_{\bm{z}\sim \Pro(\bm{z})}[\log(1-D(G(\bm{z})))]  \nonumber\\
	= & \int_{\bm{x}}\Pro_{data}(\bm{x})\log(D(\bm{x})) d\bm{x} +  \int_{\bm{z}}  \Pro({\bm{z}})\log(1-D(G(\bm{z})))d\bm{z} ~.
\end{align}

In conjunction with Eqn.~(\ref{eqn:thm1_1}) and Eqn.~(\ref{eqn:thm1_2}), we have
\begin{align}
& \mathbb{E}_{\bm{x}\sim \Pro_{data}}[\log(D(\bm{x}))]+ \mathbb{E}_{\bm{z}\sim \Pro(\bm{z})}[\log(1-D(G(\bm{z})))]  \nonumber\\
= &  \int_{\bm{x}}\Pro_{data}(\bm{x})\log(D(\bm{x})) d\bm{x} + \int_{\bm{z}} \Pro({\bm{z}})\log(1-D(G(\bm{z})))d\bm{z} \nonumber\\
 = &\int_{\bm{x}}\Pro_{data}(\bm{x})\log(D(\bm{x})) d\bm{x} +  \int_{\{G(\bm{z})=\bm{x}\}} \Pro({\bm{z}}) \log(1-D(G(\bm{z}))) d\bm{z} d\bm{x} \nonumber \\
 = & \int_{\bm{x}}\Pro_{data}(\bm{x})\log(D(\bm{x})) d\bm{x} + \int_{\bm{x}}\log(1-D(\bm{x})) d\bm{x} \int_{\{G(\bm{z})=\bm{x}\}} \Pro({\bm{z}})  d\bm{z} \nonumber \\
   = & \int_{\bm{x}}\Pro_{data}(\bm{x})\log(D(\bm{x}))  + \Pro_g(\bm{x})\log(1-D(\bm{x})) d\bm{x}
\end{align}

Since both $\Pro_g(\bm{x})$ and $\Pro_{data}(\bm{x})$ are fixed, the minimum of the above equation is  $$D^*(\bm{x})=\frac{\Pro_{data}(\bm{x})}{\Pro                                                       _{g}(\bm{x})+\Pro_{data}(\bm{x})}~.$$							
\end{proof}

An immediate observation of Theorem \ref{thm:1} is that
\begin{coro}
For quantum   GAN,	 the global minimum is achieved if and only if $\Pro_g = \Pro_{data}$. If the generator $G$ and discriminator $D$ have enough capacity, and the discriminator can reach its optimum given $G$ at each iteration, and the generated distribution $\Pro_g$ is updated so as to improve the criterion $C(D)$  then $\Pro_g(\bm{x})$ converges to $\Pro_{data}(\bm{x})$. 	
\end{coro}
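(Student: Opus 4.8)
The plan is to mirror the two classical results quoted above (Goodfellow's Theorem~1 and Proposition~2) almost verbatim, leveraging the fact that Theorem~\ref{thm:1} has already done the only genuinely quantum-specific work: it establishes that the quantum patch GAN admits exactly the same optimal discriminator $D^*(\bm{x})=\Pro_{data}(\bm{x})/(\Pro_g(\bm{x})+\Pro_{data}(\bm{x}))$, and that the generated distribution obeys the pushforward relation $\Pro_g(\bm{x})=\int_{\{G(\bm{z})=\bm{x}\}}\Pro(\bm{z})\,d\bm{z}$ with $G(\bm{z})=[G_1(\bm{z}),\dots,G_T(\bm{z})]$ a deterministic map of $\bm{z}$. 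Every step downstream depends only on these two facts, so once the reduction is in place the classical arguments transfer directly.

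First I would prove the ``if and only if'' part. Substituting $D^*$ into the integral form of the loss derived in the proof of Theorem~\ref{thm:1} gives
$$C(G)=\mathbb{E}_{\bm{x}\sim\Pro_{data}}\!\left[\log\frac{\Pro_{data}(\bm{x})}{\Pro_{data}(\bm{x})+\Pro_g(\bm{x})}\right]+\mathbb{E}_{\bm{x}\sim\Pro_g}\!\left[\log\frac{\Pro_g(\bm{x})}{\Pro_{data}(\bm{x})+\Pro_g(\bm{x})}\right].$$
Rescaling each denominator by the mixture $(\Pro_{data}+\Pro_g)/2$ identifies the two terms as Kullback--Leibler divergences against that mixture, yielding $C(G)=-\log 4+2\,\mathrm{JSD}(\Pro_{data}\,\|\,\Pro_g)$. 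Since the Jensen--Shannon divergence is nonnegative and vanishes exactly when its arguments coincide, $C(G)$ attains its global minimum $-\log 4$ if and only if $\Pro_g=\Pro_{data}$.

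For the convergence claim I would follow the convexity argument of Goodfellow's Proposition~2. The key observation is that, for fixed $D$, the loss $\mathcal{L}(G,D)=\int\Pro_{data}(\bm{x})\log D(\bm{x})\,d\bm{x}+\int\Pro_g(\bm{x})\log(1-D(\bm{x}))\,d\bm{x}$ is affine, hence convex, as a functional of $\Pro_g$. Consequently the map $\Pro_g\mapsto\sup_D\mathcal{L}(G,D)$ is a supremum of convex functionals and is therefore itself convex, with the unique minimizer $\Pro_g=\Pro_{data}$ identified in the first part. Because a subgradient of this supremum at $\Pro_g$ is supplied by the gradient of $\mathcal{L}(G,D^*)$ at the optimal discriminator, the assumed update scheme---reoptimize $D$ to its optimum at each step, then improve $\Pro_g$---is subgradient descent on a convex functional with a single global minimum, so $\Pro_g$ converges to $\Pro_{data}$.

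The main obstacle is not the functional computation, which is routine, but justifying that the convexity-in-$\Pro_g$ argument transfers to the quantum setting. In classical GAN one optimizes freely over all distributions $\Pro_g$, whereas here $\Pro_g$ is constrained to lie in the image of the quantum generator, namely the combined measurement distributions $[\Pro_1,\dots,\Pro_T]$ realizable by the $T$ sub-generators. The ``enough capacity'' hypothesis is exactly what guarantees that this image contains $\Pro_{data}$ and that the abstract convexity argument, carried out in the unconstrained space of distributions, is legitimate; I would state this assumption explicitly and note that verifying it reduces to the expressivity of MPQC discussed in SM~(A).
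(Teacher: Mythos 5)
Your proposal is correct and takes essentially the same route as the paper: both reduce the corollary to Goodfellow's classical results (Theorem 1 and Proposition 2 of the original GAN paper) by observing that Theorem~\ref{thm:1} gives the quantum patch GAN the identical optimal discriminator, loss function, and update structure as the classical GAN. The paper's own proof is just this one-line reduction with the classical theorems cited rather than re-proved, whereas you additionally spell out the imported arguments (the Jensen--Shannon identity $C(G)=-\log 4+2\,\mathrm{JSD}(\Pro_{data}\,\|\,\Pro_g)$ and the convexity-in-$\Pro_g$ subgradient argument) and make explicit the ``enough capacity'' caveat about the image of the quantum generator, which the paper leaves implicit.
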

\begin{proof}
	The same optimal discriminator (as indicated by Lemma \ref{lem:1} and Theorem \ref{thm:1}), loss function, and updating rule imply that the convergence results obtained by classical GAN are also satisfied to quantum  patch GAN.
\end{proof}

\section{SM (E): The implementation of quantum batch GAN  }
The   proposed  quantum batch   GAN under the setting  $N > \lceil \log M \rceil$ employs a quantum generator and discriminator to play a minimax game.  Given an $N$-qubits quantum system, we divide $N$-qubits into the index register $\mathcal{R}_I$ with $N_I$ qubits and the feature register $\mathcal{R}_F$ with $N_F$ qubits, i.e., $N=N_I+N_F$. The feature register $\mathcal{R}_F$ can be further partitioned into three parts, i.e., $N_D$ qubits are used to generate fake examples, $N_{A_G}$ qubits are used to conduct nonlinear operations for $G$, and  $N_{A_D}$ qubits are used to conduct nonlinear operations for $D$ with $N_F = N_D +N_{A_G} +N_{A_D}$.
  Such a decomposition  enables us to effectively acquire the mini-batch gradient information by simple measurements. Considering that the mechanism of the quantum batch GAN is in the same vein with the quantum patch GAN, here we mainly concentrate on the distinguished techniques used in the quantum batch GAN.

\textbf{Input state.}  To capture the mini-batch gradient information, we employ two oracles  $U_{\bm{z}}$ and $U_{\bm{x}}$ to encode  different latent  vectors   and classical training examples into quantum states, respectively. Following the same notations used in the main text, we denote the mini-batch size    as $|B_k|=2^{N_I}$. For  $U_{\bm{z}}$, we have  $U_{\bm{z}}:\ket{0}^{\otimes N_I}_I\ket{0}^{\otimes N_F}_F\rightarrow 2^{-N_I}\sum_i \ket{i}_I\ket{\bm{z}^{(i)}}_F$. With a slight abuse of notation,  $\bm{z}^{(i)}$ refers to $i$-th latent vector and $\ket{\bm{z}^{(i)}}=\ket{\bar{\bm{z}}^{(i)}}\ket{0}^{\otimes N_{A_D}}$, where $\ket{\bar{\bm{z}}^{(i)}}\in \mathbb{C}^{2^{N_I+N_{A_G}}}$ follows the same form defined  in  Eqn.~(\ref{eqn:lat_z}).  Similarly, for $U_{\bm{x}}$, we have  $U_{\bm{x}}:\ket{0}_I\ket{0}_F\rightarrow 2^{-N_I}\sum_i \ket{i}_I\ket{\bm{x}^{(i)}}_F$. For a dataset of $2^{N_I}$ inputs with $M$ features, the complexity of encoding a full data set by the quantum system using amplitude encoding method is $O({2^{{N_I}}}M/({N_I}{\rm{log(}}M{\rm{)))}}$ \cite{schuld2018supervised, knill1995approximation, mottonen2004quantum, vartiainen2004efficient, plesch2011quantum}. Thus, for data encoding, the runtime of state preparation for quantum machine learning using amplitude encoding is basically consistent with that of classic machine learning, since the encoding complexity of classic machine learning is at least $O({2^{{N_I}}}M)$. However, the number of qubits required for quantum machine learning is $N_Ilog(M)$, while classical quantum machine learning requires at least $O(2^{N_I}M)$ bits.

An accurate construction of   $U_{\bm{z}}$ and $U_{\bm{x}}$  requires numerous multi-controlled quantum gates, which is inhospitable to near-term quantum devices. To overcome this issue, an effective way is to employ the pertained oracles that approximate $U_{\bm{z}}$ and $U_{\bm{x}}$ to accomplish the learning tasks. Such a pre-training method have been broadly  investigated \cite{liu2018differentiable,huggins2018towards,benedetti2019generative,zoufal2019quantum,romero2019variational}.

\textbf{Computation model   $U_G(\bm{\theta})$.} The quantum generator $G$ is built by MPQC $U_G(\bm{\theta})$ associated with the nonlinear mappings.  As illustrated in the main text, $U_G(\bm{\theta})\in\mathbb{C}^{2^{N_D+N_{A_G}}\times 2^{N_D+N_{A_G}}}$ only operates with the feature register $\mathcal{R}_F$. In particular, to generate fake data, we first apply $\mathbb{I}_{2^{N_I}}\otimes U_G(\bm{\theta})\otimes \mathbb{I}_{2^{N_{A_D}}}$ to the input state, i.e., $\ket{\Psi(\bm{z} )} = 2^{-N_I}\sum_i \ket{i}_I U_G\otimes \mathbb{I}_{2^{N_{A_D}}}(\bm{\theta})\ket{\bm{z}^{(i)}}$. We then take a partial measurement $\Pi_{\mathcal{A}_G}$ as defined in Eqn.~(\ref{eqn:post_gene}), e.g., $\Pi_{\mathcal{A}_G}=(\ket{0}\bra{0})^{\otimes N_{A_G}}$,  to introduce the nonlinearity. The generated state $\ket{G(\bm{z})}$ corresponding to $|B_k|$ fake examples is
\begin{equation}\label{eqn:GZ}
\ket{G(\bm{z})} := 2^{-N_I}\sum_i\ket{i}_I \ket{G(\bm{z}^{(i)})}_F=    \frac{\mathbb{I}_{2^{N_I}}\otimes \Pi_{\mathcal{A}_G}\otimes\mathbb{I}_{2^{N_D+N_{A_D}}} \ket{\Psi(\bm{z} )}}{\sqrt{\Tr(\mathbb{I}_{2^{N_I}}\otimes \Pi_{\mathcal{A}_G} \otimes \mathbb{I}_{2^{N_D+N_{A_D}}}\ket{\Psi(\bm{z})} \bra{\Psi(\bm{z})} )} }	~.
\end{equation}

In the training procedure, we directly apply the quantum discriminator to operate with the generated state $\ket{G(\bm{z})}$. In the image generation stage, we employ POVM to measure the state $\ket{G(\bm{z})}$, i.e., the $i$-th image $G(\bm{z}^{(i)})$ with $i\in B_k$ is $G(\bm{z}^{(i)})=[\Pro(J=0|I=i),...,\Pro(J=2^{N_D}-1|I=i)]$ with $\Pro(J=j|I=i) = \Tr(\ket{i}_I\ket{j}_F\bra{i}_I\bra{j}_F \ket{G(\bm{z})}\bra{G(\bm{z})})$.

\textbf{Computation model   $U_D(\bm{\gamma})$.}
Quantum discriminator $D$, implemented by MPQC  $U_D(\bm{\gamma})$ associated with the nonlinear operations, aims to output a scalar that represents to the averaged classification accuracy. Given a state $\ket{\bm{x}}$ that represents $|B_k|$ real examples, we first apply $\mathbb{I}_{2^{N_I+N_{A_G}}}\otimes U_D(\bm{\gamma})$ to the state $\ket{\bm{x}}$, i.e.,   $\ket{\Phi(\bm{x})} = 2^{-N_I}\sum_i \ket{i}_I \mathbb{I}_{2^{N_{A_G}}}\otimes U_D(\bm{\gamma})\ket{\bm{x}^{(i)}}_F $. We then use a partial measurement $\Pi_{\mathcal{A}_D}$, e.g.,  $\Pi_{\mathcal{A}_D}=(\ket{0}\bra{0})^{\otimes N_{A_D}}$, to introduce the nonlinearity. The generated state $\ket{D(\bm{x})}$ corresponding to the classification result for $|B_k|$ examples is
\begin{equation}\label{eqn:mini_q}
	\ket{D(\bm{x})} := 2^{-N_I}\sum_i \ket{i}_I\ket{D(\bm{x}^{(i)})}_F = \frac{\mathbb{I}_{2^{N-N_{A_D}}}\otimes \Pi_{\mathcal{A}_D} \ket{\Psi(\bm{x} )}}{\sqrt{\Tr(\mathbb{I}_{2^{N-N_{A_D}}}\otimes \Pi_{\mathcal{A}_D} \ket{\Psi(\bm{x})} \bra{\Psi(\bm{x})} )} }	~.
\end{equation}
Similarly, given  the state $\ket{G(\bm{z})}$ in Eqn.~(\ref{eqn:GZ}) that represents $|B_k|$ fake examples, we adopt the same method to obtain the state $\ket{D(G(\bm{z}))}$ with  $\ket{D(G(\bm{z}))} =  2^{-N_I}\sum_i \ket{i}_I\ket{D(G(\bm{z}^{(i)}))}_F$.  For each example $\bm{x}^{(i)}$  or $G(\bm{z}^{(i)})$,  the classification accuracy $D(\bm{x}^{(i)})$  or $D(G(\bm{z}^{(i)})$ is obtained by applying  POVM $\Pi_{o}=\mathbb{I}_{2^{N_D+N_{A_D}}}$ on $\ket{D(\bm{x}^{(i)})}$ or  $\ket{D(G(\bm{z}^{(i)}))}$, i.e.,  $D(\bm{x}^{(i)})=\Tr(\Pi_o \ket{D(\bm{x}^{(i)})}\bra{D(\bm{x}^{(i)})})$ or $D(G(\bm{z}^{(i)}))=\Tr(\Pi_o \ket{D(G(\bm{z}^{(i)}))}\bra{D(G(\bm{z}^{(i)}))})$. As formulated in Eqn.~(\ref{eqn:mini_q}), the averaged classification accuracy $2^{-N_I}D(\bm{x})$ is acquired by applying POVM $\Pi_{o}=\mathbb{I}_{2^{N-1}}\ket{0}\bra{0}$ to $\ket{D(\bm{x})}$, i.e., $2^{-N_I}D(\bm{x}) = \Tr(\Pi_{o}\ket{D(\bm{x})}\bra{D(\bm{x})})$. Likewise, the averaged classification accuracy for the generated examples $2^{-N_I}D(G(\bm{z}))$ is acquired by applying   $\Pi_{o}$ to $\ket{D(G(\bm{z}))}$, i.e.,  $2^{-N_I}D(G(\bm{z})) = \Tr(\Pi_{o}\ket{D(G(\bm{z}))}\bra{D(G(\bm{z}))})$.   In conjunction with Eqn.~(\ref{eqn:mini_batch}) and Eqn.~(\ref{eqn:loss_QGAN1}), the mini-batch gradient information can be effectively acquired by taking $\Pi_o$ on two states $\ket{D(G(\bm{z}))}$ and $\ket{D(\bm{x})}$.

\textit{\underline{Remark.}}

1). It is noteworthy that, by introducing $N_I$ additional qubits to encode $2^{N_I}$ inputs as a superposition state, quantum batch GAN could obtain the batch gradient descent of all the $2^{N_I}$ inputs in one training process. This shows that quantum batch GAN has the potential to efficiently process big data.

2). Analogous to binary classification task, we need to measure output state to acquire the information about if the input is `fake' or `real'. Since quantum batch GAN employ the quantum discriminator, theoretically, measuring one qubit is enough to distinguish between `real' and `fake' images, and then obtain the gradient information. Therefore, the number of measurements required for quantum batch GAN during training procedure is quite small, and theoretically will not increase with the size of the system. For example, the statistical error of 10,000 measurements on a qubit is about 0.01, which is basically enough for the training procedure in most cases. In addition, as discussed in \cite{2019arXiv191001155S}, finite number of  measurements could lead to a unbiased estimators for the gradient, which effectively avoids the saddle points and possesses the convergence guarantees.

\section{SM (F): Experiment Details}
	
In this section, we first specify the parameter settings of the exploited superconducting quantum processor.  We next provide the experiment  details for the hand-written digit image generation task. We last  demonstrate the experiment  details for  the gray-scale bar image generation.

\subsection{Superconducting quantum processor}

In all experiments, the six qubits (see Fig.~\ref{fig:set-up}) are chosen from a 12-qubit superconducting quantum processor. The processor has qubits lying on a 1D chain, and the qubits are capacitively coupled to their nearest neighbors (the coupling strength is about 12 MHz). Each qubit has a microwave drive line (XY), a fast flux-bias line (Z) and a readout resonator. All readout resonators are coupled to a common transmission line for state readout. The single-qubit rotation gates are implemented by driving the XY control lines, and the average gate fidelity of single-qubit gates is about 0.9994. The controlled-Z (CZ) gate is implemented by driving the Z line using the ``fast  adiabatic’’ method, whose average gate fidelity is about $0.985$. During the experiments, we only calibrated qubit readouts every hour but did not calibrate the quantum gate operations, even over four days of training. Thus, the optimization of our quantum GAN scheme is very robust to noise. The performances of the six qubits we chosen in our experiment are listed in Table.~\ref{tab:QPU}.

 \begin{figure*}[htp]
			\centering   \includegraphics[width=0.65\textwidth]{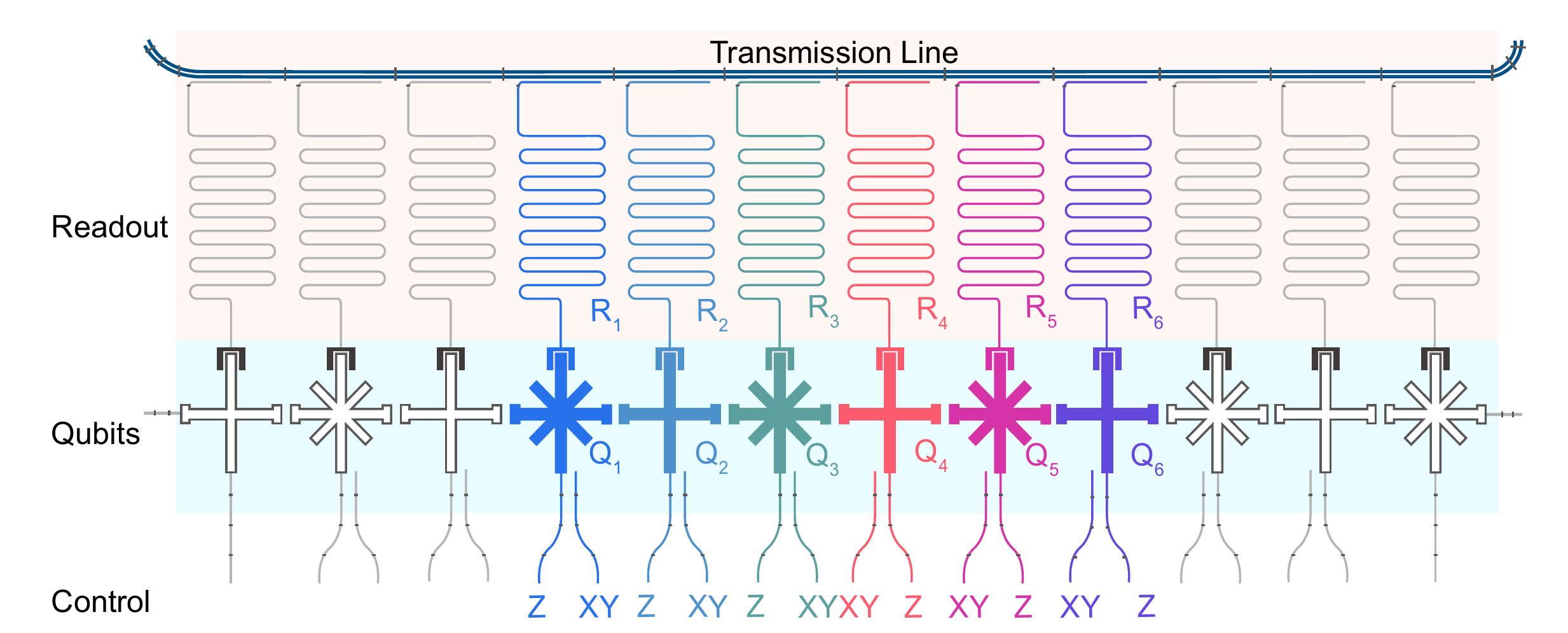}
\caption{\small{\textbf{Experiment set-up.} There are 12 qubits in total in our superconducting quantum processor, from which we choose six adjacent qubits labelled with $Q_1$ to $Q_6$ to perform the experiment. Each qubit couples to a corresponding resonator for state readout. For each qubit, individual capacitively-coupled microwave control lines (XY) and inductively-coupled bias lines (Z) enable full control of qubit operations.} }
			\label{fig:set-up}
\end{figure*}


 \begin{table}[h!]
\centering
\begin{tabular}{l l l l l l l l }
\hline \hline
Qubit & Q1 & Q2 & Q3 & Q4 & Q5 & Q6& AVG \\ \hline
$\omega_{10}/2\pi$ (GHz) & 4.210 & 5.006 & 4.141 & 5.046 & 4.226 & 5.132 &\multicolumn{1}{c}{-} \\
$T_1$ ($\mu s$) & 37.2 & 34.5 & 35.1 & 30.1 & 39.4 & 36.3 & 35.4 \\
$T_2^*$ ($\mu s$) & 2.6 & 4.8 & 1.5 & 8.6 & 2.4 & 5.4 & 4.2 \\
$f_{00}$ & 0.947 & 0.955 & 0.959 & 0.982 & 0.962 & 0.981 & 0.964 \\
$f_{11}$ & 0.873 & 0.913 & 0.889 & 0.919 & 0.904 & 0.93 & 0.905 \\ \hline
X/2 gate fidelity & 0.9993 & 0.9993 & 0.9992 & 0.9995 & 0.9993 & 0.9996& 0.9994 \\
CZ gate fidelity & \multicolumn{7}{l}{~~~~~0.987  ~~    0.985  ~    0.986    ~~ 0.972   ~0.994     ~~~~ 0.985} \\ \hline \hline
\end{tabular}
 \caption{\small{\textbf{Performance of qubits}. $\omega_{10}$ is idle points of qubits. $T_1$ and $T_2^*$ are the energy relaxation time and dephasing time, respectively. $f_{00}$ ($f_{11}$) is the possibility of correctly readout of qubit state in $\ket{0}$ ($\ket{1}$) after successfully initialized in $\ket{0}$ ($\ket{1}$) state. X/2 gate fidelity and CZ gate fidelity are single and two-qubit gate fidelities obtained via performing randomized benchmarking.       }}
   \label{tab:QPU}
\end{table}

\subsection{Hand-written digit image generation}
Here, we provide the hyper-parameter settings of quantum patch GAN in hand-written  digits `0' and `1' image generation tasks. In particular, we set $N_S = N = 5$ defined in Eqn.~(\ref{eqn:lat_z}) to generate latent states. The number of sub-generators and  layers for each $U_{G_t}$ are set as   $T=4$ and  $L=5$, respectively.   To compress the depth of the quantum circuits, we set all trainable single qubit gates as \Ry. Equivalently, the total number of trainable parameters for quantum generator  $G$ is in total $T\times L \times N =100$. The number of measurements to readout the quantum state is set as $3000$. Moreover, the employed   discriminator  used for quantum patch GANis implemented by FCNN with two hidden layers, and the number of hidden neurons for the first and second hidden layer is $64$ and $16$, respectively. In the training  procedure, we set the learning rates as $\eta_G=0.05$ and $\eta_D=0.001$ for quantum patch GAN.   The number of measurements to estimate the partial derivation in Eqn.~(\ref{eqn:part_der}) is set as $3000$.



\subsubsection{Some discussion about the setting about the number of measurements}
Here we devise a numerical simulation to indicate that, for the hand-written image generation task that using quantum patch GAN with 5 qubits, $K=3000$ shots measurement is a good hyper-parameter to achieve the desired generative performance under a reasonable running time. Specifically, we employ the quantum generator used in quantum patch GAN to accomplish the discrete Gaussian distribution approximation task. Formally, the discrete Gaussian distribution $\pi(x;\mu,\sigma)$   is defined as
\begin{equation}
	\pi(x;\mu,\sigma) = \exp\left({-\frac{(x-\mu)^2}{2\sigma^2}}\right)/Z~,
\end{equation}
where $x\in[0,31]$  and $Z$ being the normalization factor.  The discrete Gaussian   $\pi(x;\mu,\sigma)$ can be effectively represented by the quantum state using five qubits. Let the target quantum state expressed by five qubits be  $\ket{\pi}$, where the outcome measured by the  computation basis $\ket{k}$  with $k\in[0,31]$ is $\exp({-\frac{(k-\mu)^2}{2\sigma^2}})/Z$.

We now exploit quantum generator used in quantum patch GAN approximate the target state $\ket{\pi}$, or equivalently, to learn the discrete Gaussian distribution $\pi(x;\mu,\sigma)$. Denote the generated state of the employed quantum generator as $\ket{\psi(\bm{\theta})}$,
\begin{equation}
	\ket{\psi(\bm{\theta})} = \prod_{i=1}^L U_{i}(\bm{\theta})\ket{0}^{\otimes 5}~,
\end{equation}
where $U_{i}(\bm{\theta})$ refers to PQC. The probability distribution formulated by $\ket{\psi(\bm{\theta})}$ is denoted as $q_{\bm{\theta}}$, i.e., $q(X=k)=|\braket{k|\psi(\bm{\theta})}|^2$. In the training procedure, we continuously update $\bm{\theta}$ to minimize the maximum mean discrepancy (MMD) $\mathcal{L}$ between two distributions $q(x)$ and $p(x)$, i.e.,
\begin{equation}
	\mathcal{L}(q(x), p(y)) = \mathbb{E}_{x\sim q(x), y\sim p(y)}[\mathbb{K}(x,y)] - 2\mathbb{E}_{x\sim q(x), y\sim \pi(y)}[\mathbb{K}(x,y)] + \mathbb{E}_{x\sim \pi(x), y\sim \pi(y)}[\mathbb{K}(x,y)]~,
\end{equation}
where $\mathbb{K}(x,y)=\frac{1}{c}\sum_{i=1}^c\exp(-|x-y|^2/(2\sigma_i^2))$ and $c\in\mathbb{N}$ is a hyper-parameter. At each iteration, we first readout the probability distribution $q_{\bm{\theta}}$  and compute the gradient  $\partial\mathcal{L}/\partial \bm{\theta}$.  The hyper-parameter setting is as follows. The total number of iteration is set as $T=800$. The learning rate is set as $lr=0.01$. The circuit depth $L$ is set as $L=5$. Figure \ref{fig:k3000} illustrates the simulation results. As shown in upper panel, with setting $K=3000$, the approximated Gaussian distribution can well match the target distribution. Moreover, the lower panel shows that, for the setting   and   (highlighted by red color), the training loss is continuously decreasing with the increased number of iterations. Celebrated by such a simulation result, we conclude that $K=3000$ is sufficient to acquire optimization information, which ensures the performance of quantum patch GAN.

\begin{figure*}[h]
	\centering
\includegraphics[width=0.86\textwidth]{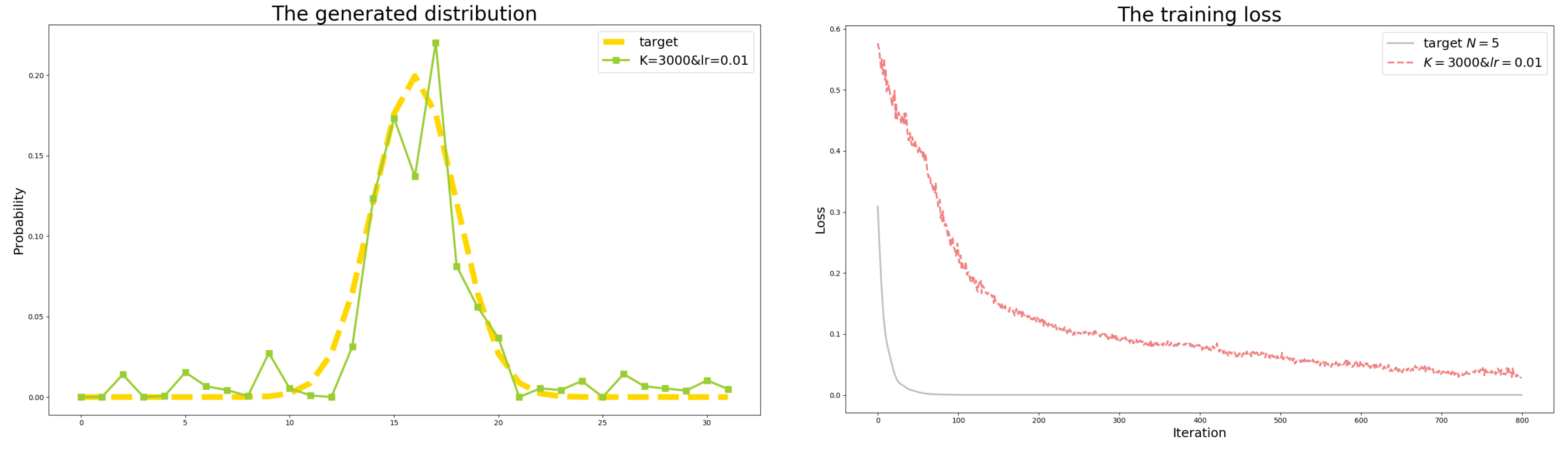}
\caption{\small{The simulation results for approximating discrete Gaussian distribution with finite measurements $K=3000$. The left panel shows the performance of approximated discrete Gaussian. The label `target' refers to the target Gaussian distribution to be approximated. The label `lr' refers to learning rate.  Similarly, the right panel illustrates the corresponding training loss. }}
\label{fig:k3000}
\end{figure*}

\subsection{Gray-scale bar image generation}
\subsubsection{The gray-scale bar image dataset}
Here, we first address the motivation of constructing the grey-scale bar dataset, and discuss the requirements that need to be considered for constructing such a dataset. The gray-scale bar image dataset is used to explore how the performance of quantum patch GAN and quantum batch GAN. To evaluate the performance of two quantum GANs, the employed dataset should satisfy the following two requirements:

1. Given a dataset, the preparation of quantum state that corresponds to the classical input, is required to be efficient, which only cost shallow or constant circuit depth.

2. The employed dataset $\mathcal{D}$ should be sampled from a continuous distribution, i.e., $\mathcal{D}\sim P_{data}(\bm{x})$.

The Requirement 1 origins from the practical limitation. Considering that the noise of quantum system is exponentially increased in terms of the circuit depth, it is unfavorable that encoding classical input into quantum states affects our analysis results. Equivalently, an  efficient method to prepare quantum input facilitates us to eliminate the effects of the encoding issue,  and enables us to better explore how the performance of quantum batch GAN. The Requirement 2 ensures that the employed dataset is sufficiently `complicated' to learn.

The construction rules for the gray-scale bar dataset  are  as follows.  Denote the training dataset as $\mathcal{D}=\{\bm{x}_i\}_{i=1}^{N_e}$ with $\mathcal{D}\sim \Pro_{data}(\bm{x})$, where $N_e$ is the number of  examples   and  $\bm{x}_i\in\mathbb{R}^M$ refers to the $i$-th example with feature dimension $M$.     Denote the    pixel value at the  $i$-th row and $j$-th column as $\bm{x}_{ij}$, a   valid  gray-scale bar image $\bm{x}\in \mathbb{R}^{m \times m}$ with  $M=m^2$  satisfies   $\bm{x}_{i0} \sim \text{unif}(0.4,0.6)$,   $\bm{x}_{i1} = 1- \bm{x}_{i0}$, and $\bm{x}_{ij}=0$, $\forall i \in [m] $ and $ \forall j \in [m]\setminus \{0,1\}$. In our experiment, we collect a training dataset with $N_e=1000$ examples for the case $m=2$.

The gray-scale bar dataset cleverly meets the two requirements nominated above, which motivates us to use it to investigate the performance of quantum GAN.  On the one hand, we can effectively encode the training data into quantum state by using one circuit depth  that is composed of $\textrm{RY}$ gates. For example, for the $2\times 2$ pixels setting, the image $\bm{x}=[0.45, 0, 0.55, 0]$, the corresponding quantum state can be generated by applying $\textrm{RY}(\gamma_1)\otimes \textrm{RY}(\gamma_2)$ to the initial state $\ket{00}$, where $\gamma_1=2*\arccos (\sqrt{0.45})$ and $\gamma_2=0$.  On the other hand, since the data distribution of gray-scale bar images is continuos, we can better evaluate if quantum GAN learns the real data distribution from finite training examples.

In our experiments, to evaluate the FD score, we sample 1000 generated examples after every 50 iterations, and usually we calculate the FD score of the generated examples after the training is completely over. In order to flexibly monitor the training procedure, we set a constraint to the gray-scale bar dataset that $x_{i0}\sim \textrm{unif}(0.4, 0.6)$. Instead of calculating FD score after training, we can check if the generated image satisfies such a constraint to roughly evaluate the performance of quantum generator at each iteration during the training procedure.

\subsubsection{Experimental details}
In the main text, we first apply the quantum patch GAN to generate $2 \times 2$ gray-scale bar images. Specifically, the hyper-parameters setting for quantum patch GAN is $N = 3$, $N_S=N$, and $L = 3$. In addition, we fix all $U_S$ to be RY, and the learning rates are set as $\eta_G = 0.05$ and $\eta_D = 0.001$. The number of measurements to readout the quantum state is set as 3000. The total number of trainable parameters for the quantum generator is 9 with $T = 1$.

We then apply the quantum batch GAN to generate gray-scale bar images. The hyper-parameters setting is identical to the quantum patch GAN, expect for the construction of discriminator. In particular, any quantum discriminative model based on amplitude-encoding method can be employed as the discriminator of quantum batch GAN. Here we utilize the quantum discriminator model proposed by \cite{schuld2019quantum} as our quantum discriminator. The total number of trainable parameters for the quantum discriminator is 12. Experiments demonstrate that the quantum batch GAN achieved reasonable generation performance, even though the quantum discriminator employed much fewer parameters than other configurations (The classical discriminator used in the classical GAN-MLP, classical GAN-CNN and quantum patch GAN has 96 parameters).

\begin{figure*}[h]
	\centering
\includegraphics[width=1.0\textwidth]{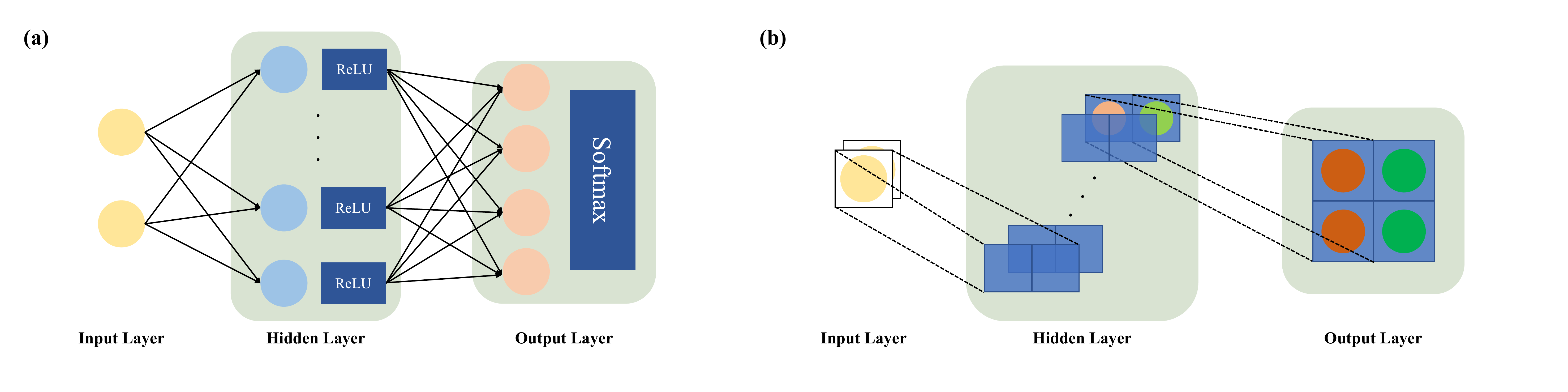}
\caption{\small{The architectures of employed two types of classical GANs. (a) the generator of GAN-MLP, a classical GAN model with multilayer perceptron generator. (b) the generator of GAN-CNN, a classical GAN model with convolutional generator.}}
\label{fig:GAN_model}
\end{figure*}

To better justify the capability and performance of both the quantum patch GAN and quantum batch GAN, we implemented two types of classical GANs as reference. Firstly, we built multilayer perceptron (MLP) generators with one hidden layer. As shown in Fig.~\ref{fig:GAN_model}(a), the input layer of MLP consists of one or two neurons, and noise sampled from the standard Gaussian distribution are feed as inputs. ReLU activations are added in the hidden layer to perform nonlinear transformation. In the output layer, the activation function, Softmax, is employed. It is mainly because that the Softmax activation share the same function with normalization constraint of the quantum generator, i.e., enforcing the sum of generator outputs to be equal to 1.
The exploited discriminator D has the identical configuration with quantum patch GAN. Moreover, following the implementation of the original GAN, the adversarial training process are formulated as,

\begin{equation}\label{eqn:loss_QGAN}
\begin{aligned}
&
\text{min}_D~\mathbb{E}_{\bm{x}\sim \Pro_{data}} [\log D( \bm{x})] +\mathbb{E}_{\bm{z}\sim \Pro(\bm{z})} [\log(1-D(G(\bm{z})))],~\\
&\text{max}_G~\mathbb{E}_{\bm{z}\sim \Pro_{p}(\bm{z})} [\log D(G(\bm{z}))].
\end{aligned}
\end{equation}

In the generator of GAN-CNN (Figure ~\ref{fig:GAN_model}(b)), the convolutional kernels with shape `(1$\times$2)' and `(2$\times$1)' are applied to the input noise and hidden features, respectively. Giving a sampled noised as input, the CNN generator can directly output a $2\times2$ gray-scale bar image. Similar to the MLP generator, nonlinear activations are added in the hidden and output layer. For both GAN-MLP and GAN-CNN, the stochastic gradient descent (SGD) \cite{ruder2016overview} is utilized to the classical generator and discriminator alternately.



To comprehensively explore the capability of classical GANs, grid-search is performed to find the optimal hyperparameters for each classical GAN model. Specifically, we start searching the learning rate from $10^{-4}$, and gradually increase it to $5\times 10^{-3}$ by $10^{-4}$ each step. For the coefficients of optimizers, such as Nesterov momentum of SGD, we start searching from 0.5, and increase them to 1 by 0.1 each step. To ensure classical GANs could achieve reasonable results, we trained each parameter combination 10 times, and save 5 models with higher FD scores.
				
\bibliographystyle{naturemag}	
\bibliography{myref2}